\pgfplotsset{compat=newest}
\newcommand{\R}{\mathbb{R}}
\newcommand{\N}{\mathbb{N}}
\newcommand{\1}{\mathbbm{1}}
\DeclareMathOperator*{\Expect}{\mathbb{E}}
\DeclareMathOperator*{\Var}{\mathbb{V}}
\DeclareMathOperator*{\Cov}{\mathbb{C}}
\DeclareMathOperator*{\Prob}{\mathbb{P}}
\DeclareMathOperator*{\ess}{ess}
\newcommand{\distributed}{\thicksim}
\DeclareMathOperator{\Support}{Support}
\newcommand{\X}{S}
\newcommand{\M}{\mathcal{M}}
\newcommand{\Cycle}{\mathcal C}
\newcommand{\PColor}{\gamma}
\DeclareMathOperator{\TVD}{TVD}
\newcommand{\Pl}{\mathcal{P}}
\newcommand{\Dmax}{d_{\rm max}}
\newcommand{\TMix}{\tau_{\rm mix}}
\newcommand{\TRel}{\tau_{\rm rel}}
\newcommand{\TfMix}{\tau_{\rm fmix}}
\newcommand{\TfRel}{\tau_{\rm frel}}
\newcommand{\TUnif}{\tau_{\rm unif}} 
\newcommand{\TMixBound}{\mathcal{T}_{\!\rm mix}}
\newcommand{\TRelBound}{\mathcal{T}_{\!\rm rel}}
\newcommand{\TfMixBound}{\mathcal{T}_{\!\rm fmix}}
\newcommand{\EigTwo}{\lambda}
\newcommand{\SG}{(1-\EigTwo)}
\newcommand{\EigTwoBound}{\Lambda}
\newcommand{\HPSC}{{\textsc{hiper}m}}
\newcommand{\PiMin}{\pi_{\min}}
\newcommand{\PiMinBound}{\pi_{\min}}
\newcommand{\SVar}{v_{\pi}} 
\newcommand{\AVar}{v_{\mathrm{asy}}} 
\newcommand{\SVarBound}{V_{\pi}} 
\newcommand{\AVarBound}{V_{\mathrm{asy}}} 
\newcommand{\VAsy}{\AVar}
\newcommand{\ITVar}{v_{T}} 
\newcommand{\ITRelVar}{v_{\tau \mathrm{rel}}} 
\newcommand{\Domain}{S}
\newcommand{\ITVn}[1]{\mathrm{trv}^{(#1)}}
\newcommand{\ITVar}{\ITVn{T}}
\newcommand{\ITRelVar}{\ITVn{\TRel}}
\newcommand{\VED}{\hat{v}}
\newcommand{\AutoCov}{C}
\newcommand{\F}{{\cal F}}
\newcommand{\Favg}{f_{\rm avg}}
\newcommand{\UnifD}{U}
\newcommand{\BinomD}{\mathcal{B}}
\newcommand{\frange}{R}
\newcommand{\NIterations}{I}
\newcommand{\Pcolor}{\gamma}
\newcommand{\cyrus}[1]{\textcolor{green!50!black}{\textsc{Cyrus:} \emph{#1}}}
\newcommand{\shahrzad}[1]{\textcolor{blue!60!black}{\textsc{Shahrzad:} \emph{#1}}}
\newcommand{\todo}[1]{\textcolor{red!50!black}{\textsc{ToDo:} \emph{#1}}}
\newcommand{\eli}[1]{\textcolor{red!50!black}{\textsc{Eli:} \emph{#1}}}
\newcommand\longleftrightarrowfill@{%
  \arrowfill@\leftarrow\relbar\rightarrow}
\newtheorem{defin}{Definition}[section]
\newtheorem{theorem}{Theorem}[section]
\newtheorem*{theorem*}{Theorem}
\newtheorem{remark}[theorem]{Remark}
\newtheorem{lemma}[theorem]{Lemma}
\newtheorem{coro}[theorem]{Corollary}
\newtheorem{exm}[theorem]{Example}
\newcommand{\algo}{\textsc{DynaMITE}}
\newcommand{\DMCMC}{\textsc{McmcPro}}
\newcommand{\nsalgo}{\textsc{WarmStart\algo{}}}
\newcommand{\floor}[1]{\left\lfloor #1 \right\rfloor}
\newcommand{\ceil}[1]{\left\lceil #1 \right\rceil}
\DeclareMathOperator{\SD}{SD}
\newcommand{\LandauTheta}{\Theta}
\crefname{algorithm}{alg.}{algs.}
\Crefname{algorithm}{Algorithm}{Algorithms}
\crefname{appendix}{appx.}{appcs.}
\Crefname{appendix}{Appendix}{Appendices}
\crefname{corollary}{coro.}{coros.}
\Crefname{corollary}{Corollary}{Corollaries}
\crefname{conjecture}{conjecture}{conjectures}
\Crefname{conjecture}{Conjecture}{Conjectures}
\crefname{definition}{def.}{defs.}
\Crefname{definition}{Definition}{Definition}
\crefname{figure}{fig.}{figs.}
\Crefname{figure}{Figure}{Figures}
\crefname{lemma}{lemma}{lemmas}
\Crefname{lemma}{Lemma}{Lemmas}
\crefname{proposition}{prop.}{props.}
\Crefname{proposition}{Proposition}{Propositions}
\Crefname{section}{Section}{Sections}
\crefname{section}{sect.}{sects.}
\crefname{subsection}{sect.}{sects.}
\Crefname{subsection}{Section}{Sections}
\crefname{subsubsection}{sect.}{sects.}
\Crefname{subsubsection}{Section}{Sections}
\crefname{table}{table}{tables}
\Crefname{table}{Table}{Tables}
\crefname{theorem}{thm.}{thms.}
\Crefname{theorem}{Theorem}{Theorems}
\Crefname{theorem}{Thm.}{Thms.}
\crefname{section}{\S}{\S}
\crefname{subsection}{\S}{\S}
\Crefname{subsection}{\S}{\S}
\crefname{subsubsection}{\S}{\S}
\Crefname{subsubsection}{\S}{\S}
\title{Making mean-estimation more efficient using an MCMC trace variance approach: \algo{}}
\author[1]{Cyrus Cousins}
\author[2]{Shahrzad Haddadan}
\author[3]{Eli Upfal}
\affil[1,2,3]{  Department of Computer Science, Brown University }
\affil[2]{  Data Science Initiative, Brown University }
{\affil[1]{
\footnotesize{\texttt{cyrus\_cousins@brown.edu}}}
\affil[2]{
\footnotesize{\texttt{shahrzad\_haddadan@brown.edu}}}
\affil[3]{
\footnotesize{\texttt{eli\_upfal@brown.edu}}}}
\newif\ifappendix
\newcommand{\arx}[1]{}
\renewcommand{\arx}[1]{#1}
\renewcommand{\todo}[1]{}
\renewcommand{\cyrus}[1]{}
\renewcommand{\shahrzad}[1]{}
\renewcommand{\eli}[1]{}
\date{}
\begin{document}
\maketitle

\begin{abstract}

 We introduce a novel
statistical measure for MCMC-mean estimation, the \emph{inter-trace variance} ${\rm trv}^{(\tau{\rm rel})}({\cal M},f)$, which  depends on a Markov chain ${\cal M}$ and a function $f:\Domain\to [a,b]$. We show that the inter-trace variance 
can be efficiently estimated from observed data, and that it leads to a more efficient MCMC-mean estimator, with complexity competitive with a lower-bound obtained from the central limit theorem of Markov chains. 
Most efficient 
 MCMC mean-estimators receive, as input, upper-bounds on chain-dependent terms like \emph{mixing time} $\tau_{\rm mix}$ or \emph{relaxation time} $\tau_{\rm rel}$, and often also function-dependent terms such as the \emph{stationary variance} $v_{\pi}$, and their performance is highly dependent to the sharpness of these bounds.
 In contrast, we introduce  \textsc{DynaMITE}, which dynamically adjusts the sample size using the observed data, and 
therefore it is less sensitive to the looseness of input upper-bounds on $\tau_{\rm rel}$, 
and requires no bound on $v_{\pi}$.

Receiving only an upper-bound ${\cal T}_{\rm rel}$ on $\tau_{\rm rel}$, \textsc{DynaMITE}  estimates w.h.p.\ the mean of $f$ to within $\varepsilon$ additive error in
$\tilde{\mathcal{O}}\bigl(\smash{\frac{{\mathcal T}_{\rm rel} R}{\varepsilon}} + \frac{\tau_{\mathrm{rel}}\cdot {\rm trv}^{(\tau{\mathrm{rel}})}}{\varepsilon^{2}}\bigr)$ steps, without \emph{a priori} bounds on the \emph{stationary variance} $v_{\pi}$ or the \emph{inter-trace variance} ${\rm trv}^{(\tau \mathrm{rel})}$. Thus we obtain minimal dependency on the tightness of ${\cal T}_{\rm mix}$, since the complexity is dominated by $\tau_{\rm rel}\rm{trv}^{(\tau{\rm rel})}$ as $\varepsilon \to 0$ 
(the high precision regime), even though the values of $\tau_{\rm rel}$ and $\rm{trv}^{(\tau{\rm rel})}$ are not known to the algorithm.
 Note that bounding $\TRel$ is known to be prohibitively difficult, however, \textsc{DynaMITE} is able to reduce its principal dependence on $\TRelBound$ to $\TRel$, simply by exploiting properties of the \emph{inter-trace variance}.
To compare our method to known variance-aware bounds, we show  ${\rm trv}^{(\tau{}\mathrm{rel})}({\cal M},f) \leq v_{\pi}$. Furthermore,  we show when $f$'s image is distributed symmetrically on ${\cal M}$'s traces, we have ${\rm trv}^{({\tau{\rm rel}})}({\cal M},f)=o(v_{\pi}(f))$, thus \textsc{DynaMITE} outperforms prior methods in these cases,
 even when tight bounds on variance and mixing or relaxation times are known.
We demonstrate the advantage of our estimator through an interesting application, counting $k$-colorings of a class of graphs, e.g., graphs drawn from the planted partition model, wherein using \textsc{DynaMITE} leads to significant improvement. 
%

\end{abstract}

\section{Introduction}\label{sec:1}

Given a bounded, real-valued function $f:\Domain\to [a,b]$, and a distribution $\pi$ defined on the domain $\Domain$, our goal is to estimate the average of this function, $\mu \doteq \Expect_{\pi}[f(x)]$.
Having enough independent samples from $\pi$, one can estimate $\mu$ by the \emph{empirical mean} $\hat{\mu}$, the average value of $f(x)$ over the samples.
In many important applications it is hard (or even impossible) to efficiently generate independent samples from $\pi$. Instead, one can often generate a sequence of (nonindependent) samples by traversing a Markov chain with state space $S$ and stationary distribution $\pi$~\cite{mixingAld,countingleWinklerBrightwell,wilson, danasurvey, levin2017markov}.
\emph{MCMC-mean estimation} is the process of generating an estimate of $\mu = \Expect_{\pi}[f(x)]$ from the Markov chain samples. The efficiency of an MCMC-mean estimator 
is measured by the accuracy of the estimate it generates, and its \emph{(sample) complexity} - the (expected) number of Markov chain transitions it uses for generating the estimate. The number of Markov chain transitions clearly dominates the computational complexity of the estimator.

MCMC-Mean estimation 
is a well-studied problem in probability theory, statistics, and computer science, with
numerous applications in a variety of fields such as \emph{statistical physics} \cite{statisticalphyref1,statisticalphyref2,ref1,staticref}, \emph{chemistry} \cite{chemref},
\emph{computational biology} \cite{vandin2016sample,vandin2012discovery,vandin2012novo,bio2011,bioref1,bioref2}, 
\emph{statistical machine learning}, \emph{image processing}, etc \cite{LearningFan,learningSalatino,learningYuan,image}.
MCMC-Mean estimation is also an important component  in solutions of related  hard computational problems.
For example, Jerrum, Valiant, and Vazirani \cite{countingsamplingreduction1986jurrumvaziranivalient}, and many others \cite{gibbsvigoda,hubergibbs,Kolmogorovgibbs} 
reduced estimating the \emph{size} of a \emph{self-reducible set} (often a \#hard problem) to solving a series of \emph{mean-estimation} tasks. 

A key concept in analyzing MCMC-estimates is the Markov chains' \emph{mixing time} (see section \ref{sec:prelim} for definition). Rigorously bounding mixing times needs potent analytical techniques and sophisticated problem-specific analysis  \cite{techguruswami2000rapidly,techAnari2020SpectralII,techbubley1997path,techvazirani1991rapidly} and it is an active area of research \cite{coloringVardi2018,exglad2017,blanca2021mixing,chen2021optimal,exAGOSTINI2019}.
Unfortunately these bounds are often loose which hampers their applicability, thus, practitioners often ignore mixing time bounds, and instead run the Markov chain until a heuristic termination condition is observed \cite{diagno2autho,diag4,diag2,diag1,diag3,diag5}. \arx{In particular, one popular method is to estimate the autocovariance between adjacent samples \cite{diagno2autho}, and terminate when it is negligible.}
Unfortunately,  correctness of such approaches is not supported by mathematical guarantees and in practice their applications are generally quite error-prone. Not surprisingly, there has been significant interest in  providing mathematical tools to analyze the computation complexity of \emph{estimating  mixing time from observations with no prior knowledge of it}  \cite{MosselHardness2011,peresParam,trelestHsuPeres2019,pmlr-v99-wolfer19a,pmlr-v117-wolfer20a}. These results are often negative, as they provide large lower bounds for this problem. Thus, some, even loose, upper bound on mixing time seems to be necessary for obtaining an efficient estimate.

Thus, we are motivated to focus on the following question:

\smallskip 

\emph{Can a dynamic adaptive algorithm significantly improve the state of the art sample complexity of MCMC-mean estimator, and, in particular, minimize the complexity dependence on loose mixing time upper bounds?}
%
%

\smallskip 
This paper presents and analyzes a novel MCMC-estimation algorithm \algo{}, that provides an affirmative answer to the above question.

\subsection{Preliminaries}\label{sec:prelim}
Let $f:\X\to [a,b]$ be a real valued function, and $\pi$ a probability distribution on $\X$.
We denote the \emph{mean} and \emph{
variance} of $f$ by
\[
\mu \doteq \Expect_{x\distributed\pi}[f] \doteq \int_{\X}f(x) \, \mathrm{d}\pi(x) \quad \text{and} \quad \SVar \doteq \Var_{\pi}(f) \doteq \Expect_{x \distributed \pi}[(f(x)-\mu)^2] \enspace.
\]

\noindent 
\textbf{MCMC Terminology}
All the Markov chains we discuss are assumed to be ergodic and lazy\footnote{We explicitly note when reversibility is needed.} (see \cite{levin2017markov}), with state space $S$ and stationary distribution $\pi$.
We denote by $\M$ both the Markov chain and its \emph{transition matrix}. For any $x,y\in \X$, $\M(x,y)\doteq\Prob(X_i=y\vert X_{i-1}=x)$, and  $\M^k(x,y)=\Prob(X_{i+k}=y\vert X_{i}=x)$, where $\M^k$ is the standard matrix power notation.
Let $\nu$ be a probability distribution on $\X$, by $\M(\nu)$ we mean the distribution of $X_{i+1}$ conditioned on $X_{i}\distributed \nu$, and the distribution of the chain after $k$ steps of $\M$, starting at $X_i\sim \nu$ is denoted by $X_{i+k}=\M^{k}(\nu)$.


A $\tau$-\emph{trace} of Markov chain $\M$ is a sequence of $\tau$ consecutive states visited by the chain. We use \emph{vector notation} to denote traces of Markov chains, e.g., $\vec{X}_{1:\tau}=(X_1,X_2,\dots X_{\tau})$, (we drop the subscript $1:\tau$ when the length is clear from context). 

We measure distance between distributions by the total variation distance (TVD). 
 For a precision parameter $\epsilon>0$, the 
 $\epsilon$-\emph{mixing time},  $\TMix(\epsilon)$,
 is the minimum $\tau$ satisfying $\TVD(\M^\tau(\nu),\pi )\leq \epsilon$, for any initial starting distribution $\nu$. In particular, we denote $\TMix \doteq \TMix(\nicefrac{1}{4})$, and note that  $\TMix(\epsilon) \leq \frac{1}{4} \log(\epsilon^{-1})  \TMix$.
  Let the \emph{second largest absolute eigenvalue}  $\M$ be $\lambda$, and   denote the \emph{relaxation time}  $\TRel \doteq (1-\lambda)^{-1}$. Note that 
 $\TMix$ and $\TRel$ are closely related as $
\left(\TRel(\M)-1\right)\ln(2)\leq \TMix(\M)\leq \TRel(\M) \ \ln\bigl(\smash{\nicefrac{2}{\sqrt{\pi_{\min}}}}\bigr)$, where $\pi_{\min}\doteq \min_{x\in\X}\pi(x)$,  see \cite{levin2017markov}.

Since exact parameters of the MCMC process are often unknown and hard to compute, most MCMC estimators rely on some (possibly loose) bounds given as input to the algorithm.   
We denote by capital letters  $\TMixBound,\TRelBound,\EigTwoBound$ and $V_{\pi}$ the upper bounds given to the algorithm for $\TMix, \TRel, \lambda$ and $ \SVar$, respectively. 

For an $m$-trace of $\cal M$, $\vec{X} =( X_1,X_2, \dots, X_m )$ we define its \emph{empirical mean}  as 
$
    \hat{\mu}(\vec{X}) \doteq \frac{1}{m} \sum_{i=1}^m f(X_i)  $. We are interested in designing algorithms which run a  Markov chain $\cal M$ for  $m$ steps, and return $\hat{\mu}(\vec{X}_{1:m})$ as an estimate of $\mu$. We refer to  any such algorithm as an \emph{MCMC-mean estimator}, and measure their (sample) complexity using  the following definition:
    \begin{defin}\label{def:complex}
 Assume algorithm $A$ runs a Markov chain $\M$ for  $m$  steps and generates an $m$-trace $X_1,X_2,\dots, X_m$. Algorithm $A$ uses this trace to find  an estimate $\hat{\mu}$ of $\mu=\Expect_\pi[f]$. If 
there exists  $m_{A}(\M,f,\varepsilon,\delta)$ such that for any $m\geq m_{A}(\M,f,\varepsilon,\delta)$, we have
    $\Prob (\vert \mu -\hat{\mu}\vert \geq \varepsilon ) \leq \delta$
(an $(\epsilon,\delta)$-approximation of $\mu$),
 we call $m_A(\M,f,\varepsilon,\delta)$ the \emph{sample complexity} of algorithm $A$.
\end{defin}

\subsection{Related Work}\label{sec:related}  
The complexity of any MCMC-mean estimator depends, directly or through a proxy, on two unrelated parameters: a function dependent parameter, such as $\SVar$, the stationary variance of $f$; and chain dependent parameter, such as $\TMix$ or $\TRel$, the mixing or relaxation time of the Markov chain. Since these parameters are not known in general, efficient MCMC-mean estimator starts with some upper bound on these values, often through related parameters that are easier to estimate. For example, $\SVar$ can be bounded by $R^2$, and $\TMix$ can be bounded by $\TRel$ and $\lambda$.

The simplest MCMC-mean estimator averages samples taken at least $\TMixBound \geq \TMix$ Markov chain transitions apart. It is not hard to verify that the sample complexity of this estimator is $\Theta \left (\TMixBound(\nicefrac{\epsilon}{2R})(\nicefrac{\frange^{2}}{\varepsilon^{2}}) \log \frac{1}{\delta}\right) =\Omega\left(\TRel \cdot  (\nicefrac{\frange^{2}}{\varepsilon^{2}}) \log (\frac{\frange}{\varepsilon}) \log (\frac{1}{\delta})\right)$ chain transitions.


More efficient estimators compute the average  over the entire trace of the Markov chain,  and their complexity depends on a \emph{known upper-bound} $\TRelBound$ on the \emph{relaxation time} $\TRel$.
Some of these bounds are \emph{variance agnostic} \cite{LezaudChernoff,miachernof,leonhoeffdingbound2004,fan2018hoeffding}, and they imply MCMC-mean estimators with complexity $
m_{\rm Hoff}(\M,f, \varepsilon, \delta)
    \in 
  \Theta\Bigl(\TRelBound\ln(\mathsmaller{\frac{1}{\delta}})\frac{\frange^{2}}{\varepsilon^{2}}\Bigr)
$. 
These bounds, also  known as Hoeffding-type bounds, are used  ubiquitously  due to their simplicity and convenience. 
Unfortunately, they are generally much looser than variance-aware bounds~\cite{bernestein2018new,BernMCMC2020,gao2014bernstein,mnih2008empirical}.
For example, 
having a known upper bound $V_\pi \geq v_\pi$, Bernstein type bounds imply MCMC mean estimators with complexity $m_{\rm Bern}(\M,f,\varepsilon,\delta)\in\Theta\Bigl(\TRelBound\ln(\mathsmaller{\frac{1}{\delta}})\Bigl(\nicefrac{\frange}{\varepsilon} + \nicefrac{V_\pi}{\varepsilon^{2}}\Bigr)\Bigr)$. 

In all of the above classic bounds, dependencies on function-specific terms are obtained separately from chain-specific terms.
In contrast,  Rabinovich et al.\ \cite{functionMixingRabinovish} analyze the sample complexity of MCMC-mean estimation using what they term the \emph{function-specific mixing time}, written 
$\tau_{\rm fmix}(f,{\cal M},\varepsilon)$ 
(parentheses omitted when clear from context), which always obeys $\TfMix(f, \M, \varepsilon) \leq \TMix(\M, \varepsilon)$.
They prove a Hoeffding like bound showing MCMC mean estimation can be done using $\tilde{O}(\TfMixBound\nicefrac{\frange^{2}}{\varepsilon^{2}} \log \frac{1}{\delta})$ 
samples where $\TfMixBound \geq \TfMix$ is an a priori  known upper-bound. 
Another important bound was derived by  Paulin \cite{paulin2015}, who uses the (chain-specific) \emph{asymptotic variance}, defined as $v_{\rm asy}(f,\M)\doteq \lim_{\tau \to \infty} \mathbb{E}[\frac{1}{\tau}\sum_{i=1}^{\tau}(f(X_i)-\mu)^2]$, which, we prove (see \cref{lemma:tvar-prop}), is smaller than  $2\TRel \SVar$, and he shows sample complexity of 
$m_{\rm Paulin}(\varepsilon,\delta)\in\Theta\Bigl(\bigl(\TRelBound\cdot\frac{\frange}{\varepsilon} + \nicefrac{V_{\rm asy}}{\varepsilon^{2}}\bigr)\ln(\mathsmaller{\frac{1}{\delta}})\Bigr)$.

While both Rabinovich et al's  \cite{functionMixingRabinovish} function-specific mixing time $\TfMix$ and Paulin's \cite{paulin2015} chain-specific asymptotic variance $\AVar$ beat  prior bounds \emph{in theory}, 
because of their more sophisticated definitions, obtaining tight upper bounds for them is harder than their classic counterparts e.g., $\TMix$, $\TRel$ or $\SVar$.
Indeed, a major flaw of all above bounds is their strict dependence on known upper bounds, thus their efficiency can highly be deteriorated by the looseness of said bounds.

\smallskip 
In order to remove  dependencies on known variance upper-bounds, \emph{MCMC-variance estimators}, which estimate variance proxies by running the same chain as the mean estimator, have been used by practitioners \cite{varSong1995OptimalMB,varhobert2002,var2006Jones,varflegal2010batch,varSpecVariance2018,Vargupta2020estimating,Varchakraborty2019estimating}.
Among them, the most popular are the \emph{spectral variance} (SV) \cite{varSpecVariance2018,varspectralbelomestny2020variance,varSpect2018lugsail} and \emph{batched means} (BM) \cite{varflegal2010batch,varflegal2010batch,var2006Jones,varhobert2002,varSong1995OptimalMB,Varchakraborty2019estimating} estimators.
The SV takes a weighted sum of covariances estimated from a single trace. The BM  estimates the asymptotic variance  by dividing the trace of the chain to batches, 
and estimating the empirical variance of the mean of each batch.

Since most of theoretical analysis of these methods only concentrate on  asymptotic convergence, these estimators 
generally lack guarantees for finite samples which are necessary for rigorous bounds in applications.
Furthermore, all of these estimators are \emph{biased} (but \emph{consistent} or \emph{asymptotically unbiased}). To our knowledge no \emph{unbiased MCMC-variance estimator}  is known.




\smallskip 

To circumvent dependency on known bounds on mixing parameters, 
some authors estimate the mixing or relaxation time of a Markov chain from a trace \cite{Hsu2015MixingTE,trelestHsuPeres2019,pmlr-v99-wolfer19a,pmlr-v117-wolfer20a,levin2016estimating,peresParam}. 
Hsu et al \cite{Hsu2015MixingTE,levin2016estimating,trelestHsuPeres2019} show how to estimate the relaxation time of a reversible  Markov chain using $\tilde{O}\bigl(\TRel/\pi_{\rm min} \bigr)$ steps, and show that $\tilde{\Omega}\bigl(\TRel\cdot \abs{\Domain}\bigr)$ is necessary.  Wolfer and Kontorovich \cite{pmlr-v117-wolfer20a} and  Wolfer \cite{pmlr-v99-wolfer19a}  obtain similar results while removing the reversibility condition.  These lower bounds eliminate the possibility of estimating $\TRel$ with \emph{no prior knowledge of it} in any number of Markov chain steps that is sub-linear in $\abs{\X}$, which is generally prohibitive. 

\smallskip

Our work complements all these results. \algo{} does not require function dependent parameter as input. It uses a novel parameter $\ITVn{\cdot}(\M,f)$, \emph{the inter-trace variance} (see \cref{def:trv}), which is estimated efficiently from the data. \algo{} requires a chain dependent parameter $\TRelBound\geq \TRel$, an upper bound on the relaxation time, but the complexity of the algorithm is less dependent on the looseness of this bound compared to previously know algorithms.

\begin{table}[t]
{\footnotesize
\begin{tabular}{|c|c|c|}
\hline 
method& complexity, leading term  for small $\varepsilon$
& input parameters\\
&is highlighted in blue &\\
\hline
This paper &${\displaystyle} 
\tilde{O}\left( \frac{\TRelBound\frange}{\varepsilon} +{\color{blue} \TRel\ITRelVar}\cdot \frac{1}{\varepsilon^{2}} \right)$
&$\TRelBound$\\
\hline 
Variance agnostic methods:&&\\
Classic bounds &$\tilde{O}\bigl({\color{blue}\TRelBound\frange^{2}}\cdot \frac{1}{\varepsilon^{2}}\bigr)
$
&$\TRelBound$\\ 
Rabonich et al's bound \cite{functionMixingRabinovish}&$\tilde{O}\bigl({\color{blue}\TfMixBound\frange^{2}}\cdot \frac{1}{\varepsilon^{2}}\bigr)$&$\TfMixBound$\\
\hline 
Variance aware methods: &&\\ Classic bounds&$\tilde{O}\bigl(\TRelBound\frac{\frange}{\varepsilon} + {\color{blue}{V_{\pi}}\TRelBound}\cdot \frac{1}{\varepsilon^{2}}\bigr)$
&$\TRelBound$, $V_{\pi}$\\ 
Paulin's \cite{paulin2015} &$\tilde{O}\bigl(\frac{\TRelBound\frange}{\varepsilon} + {\color{blue} V_{\mathrm{asy}}}\cdot \frac{1}{\varepsilon^{2}}\bigr)$
&$\TRelBound$, $V_{\rm asy}$\\ 
\hline 
\end{tabular}
\caption{\small
Comparison to state of the art MCMC Mean Estimators.
Note that $\AVar=O(\TRel \SVar)$, $\tau_{\rm fmix}=O(\TMix)$. In terms of hardness of computation:  $\AVar\gg  \SVar\TRel$ ($\AVar$ requires asymptotic calculations of an \emph{infinite trace}), $\tau_{\rm fmix}\gg \TMix$ ($\tau_{\rm fmix}$ requires \emph{full} spectral decomposition) 
}\label{table:2}
}
\end{table}

\subsection{Our Contributions}
\begin{enumerate}[topsep=0pt,partopsep=0pt,itemsep=0pt,wide,labelwidth=0pt, labelindent=12pt]
\item 
We introduce a novel statistical measure for MCMC-mean estimator, the \emph{inter-trace variance} $\ITRelVar(\M,f)$ (\cref{def:trv}). This tool yields improved estimator performance and analysis. We show that $\ITRelVar(\M,f)\leq \SVar(f)$, and that $\ITRelVar(\M,f)$ can be estimated efficiently from the data.

\item
We devise a new \emph{unbiased} MCMC variance estimator $\VED$ and show finite-sample bounds on
its error. We use this estimator, together with other concepts that we develop here, to rigorously bound $\ITVn{\TRelBound}$. (Section \ref{sec:varest})

\item Leveraging $\ITRelVar(\M,f)$, we design \algo{}, a dynamic MCMC-mean estimator that adapts to the observed data.
Given an upper bound on the relaxation time, $\TRelBound\geq \TRel$, and a bound $R$ on the range of $f$, we prove that the complexity (measured in Markov chain steps) of 
\algo{} is $\tilde{\mathcal{O}} \bigl( \frac{\TRelBound\frange }{\varepsilon} + \frac{\TRel(\M)\cdot \ITRelVar(\M,f)}{\varepsilon^{2}} \bigr)$,
without 
\emph{a priori} knowledge of $\TRel$ or $\ITRelVar$. (\Cref{sec:algo})

\item
For small $\varepsilon$ (high-precision case), the complexity of \algo{} is dominated by the term
$\frac{\TRel(\M)\cdot \ITRelVar(\M,f)}{\varepsilon^{2}}$, thus, in the high precision regime, the complexity of our algorithm is less dependent on a (possibly loose) bound $\TRelBound$ compared to previously known algorithms. 

\item 
We show that $\TRel(\M) \ITRelVar(\M,f)$ (but not $\TRel(\M) \SVar(f)$) leads to bounds  competitive with the lower bounds obtained from the central limit theorem of Markov chains \cite{gordin1978central}. (Section \ref{sec:compare})

\item We show that when $f$'s image is distributed symmetrically on ${\cal M}$'s traces, we have ${\rm trv}^{({\tau{\rm rel}})}({\cal M},f)=o(v_{\pi}(f))$. In those cases  \textsc{DynaMITE} outperforms prior methods,
 even when tight bounds on variance and mixing or relaxation times are known. We demonstrate this improvement in two applications: mean estimate on a cycle, and counting $k$-colorings, (\Cref{sec:tracevar}, and \Cref{sec:5}).

\item 
We use
\algo{}  as a \emph{mean-estimation gadget} in the
FPRAS 
of Jerrum, Valiant and Vazirani \cite{countingsamplingreduction1986jurrumvaziranivalient}, 
 and prove the sample complexity of \emph{counting proper colorings} in \emph{planted partitions} of $r$ communities and $\abs{E}=\cal E$ edges with our method
 is $\tilde{\mathcal O}\bigl(\TRel\cdot \smash{\nicefrac{{\cal E}^{3}}{r}} \bigr)$, 
 improving the bound $\tilde{\mathcal{O}}\bigl(\TRelBound \cdot {\cal E}^3 \bigr)$ of \cite{jurrumSamplingiscounting}, while assuming no knowledge of the planted-partition structure of the graph.

 We chose the JVV FPRAS for simplicity; our methods apply, mutatis mutandis, to other MCMC-based telescoping mean algorithms, e.g., \cite{gibbsvigoda,hubergibbs,Kolmogorovgibbs,harris2020parameter}. (\Cref{sec:5})
 
 \item Our work bridges theory and practice by developing solid mathematical foundations for a combination of heuristics, e.g., batching, MCMC variance estimators and mixing diagnosis. These techniques are used by practitioners without proven convergence criteria, giving unreliable results.

\end{enumerate}
\section{The Inter-Trace Variance}\label{sec:tracevar}
In this section we define \emph{the inter-trace variance} (\cref{def:trv}) and  prove its properties. 
In order to estimate the inter trace variance with no prior knowledge of it, we  introduce  an \emph{unbiased}  MCMC variance estimator (\cref{sec:varest}).
The concept of the \emph{inter-trace variance} is similar to \emph{batched mean variance estimators} used by practitioners. Here we develop new mathematical concepts and prove rigorous finite-sample guarantees. 
\cyrus{Reversibility question: are $\M^{(T)}$ eigenvalues real?}

In its general form, the Inter-Trace variance, written $\ITVn{\tau}$,  is defined with respect to a \emph{length} parameter $\tau$. In our analysis  we use $\ITRelVar$,  calling it the \emph{relaxed} Inter-Trace variance.

\smallskip
For a parameter $\tau$, consider 
the set of all $\tau$-traces, and  the  probability distribution $\pi^{(\tau)}$  on this set defined as $\smash{\pi^{(\tau)}(\vec{X}_{1:\tau})\doteq \pi(X_1)\smash{\prod_{i=1}^{\tau-1}}\M(X_i,X_{i+1})} $, we call a $\tau$-trace sampled from $\pi^{(\tau)}$ a \emph{stationary trace}.
For an arbitrary  $\F$ defined on $\X^{\tau}$, by $\Expect_{\vec{X}\distributed \pi^{({\mathlarger{\tau}})}}[\F(\smash{\vec{X}})]$, we mean the expectation of $\F$ on any $\tau$-trace of the Markov chain when $X_1 \distributed \pi$, and other consecutive $X_i$s follow the transition of the 
chain.
Having   $\vec{X}_{1:\tau}=(X_1,X_2,\dots , X_\tau)$,
we define $\Favg^{(\tau)}$ 
as follows: $\Favg^{(\tau)}(\vec{X}_{1:\tau}) \doteq (\frac{1}{\tau})\sum_{i=1}^{\tau}f(X_{i})$. 

\begin{defin}[\textbf{Inter-Trace Variance}]\label{def:trv}
Consider $f$, $\M$ and $\pi$ as before. 
For arbitrary $\tau$ and a $\tau$-trace of $\M$, $\vec{X}_{1:\tau}$, let  $\Favg^{(\tau)}(\vec{X}_{1:\tau})$ be as defined above. 
We denote the \emph{the inter-trace variance} of $f$  by $\ITVn{\tau}(\M,f)$, and define it as,
$$\ITVn{\tau}(\M,f)\doteq\mathbb{V}_{\vec{X}\distributed \pi^{(\mathlarger{\tau})}}[\Favg^{(\tau)}(\smash{\vec{X}})]~.$$
Henceforward, we simply use $\ITVn{\tau}$ when removing $\M$ and $f$ does not create ambiguity. 
\end{defin}

{\bf Remark.}  Note that by linearity of expectation,  $\Expect_{\smash{\vec{X}}\distributed\pi^{(\mathlarger{\tau})}}[\Favg(\smash{\vec{X})}] = \Expect_{X\distributed\pi}[f(X)] = \mu$, thus, $\ITVn{\tau}(\M,f)=\Expect_{\vec{X}\distributed \pi^{(\mathlarger{\tau})}}[(\Favg(\smash{\vec{X}})-\mu)^2]$.

The following lemma (proved in \cref{proofs:trace}),  shows that $\tau \ITVn{\tau}$ is 
bounded above by  $2\TRel\SVar$. Which implies always equal or better sample complexity than classic variance aware bounds. 
\begin{restatable}{lemma}{lemmatvarprop}
\label{lemma:tvar-prop}
Suppose a lazy reversible Markov chain $\M$.
The inter-trace variance obeys 
\begin{equation}
\label{eq:tvdec}
\SVar \geq \ITVn{2} \geq \ITVn{3} \geq \cdots  \enspace,
\end{equation}
and $\tau \ITVn{\tau}$ is nondecreasing and bounded as
\begin{equation}
\label{eq:tvtinc}
\SVar \leq 2\ITVn{2} \leq 3\ITVn{3}\leq \dots  \leq \lim_{i\to \infty} i\ITVn{i}\leq (2\TRel - 1)\SVar \enspace.
\end{equation}

Furthermore, there exists some absolute constant $c > 0$, such that for any $\M$, $f$, we have
\begin{equation}
\label{eq:avar-lb}
\tau \ITVn{\tau} \geq c \tau' \ITVn{\tau'}\geq c \lim_{i \to \infty} i\ITVn{i} \enspace, \quad \text{ for any $\tau' \geq \tau \geq \TRel(\M)$\enspace.}
\end{equation}
\end{restatable}

\noindent
{\bf Remark. } 
 Paulin's asymptotic variance can be expressed using the inter-trace variance as  $\AVar=\lim_{i\rightarrow \infty} i\ITVn{i}$, moreover 
 he proves  
$\ITVn{\tau}\leq (\nicefrac{2\TRel}{\tau})\SVar$ (see Thm.~3.1 of \cite{paulin2015}). Note that \cref{eq:tvtinc} improves and extends this result. Furthermore,
 in the appnedix we 
show \cref{coro:fmix}  which bounds the trace variance using, $\TfRel$, the function specific relaxation time of Rabinovich et al \cite{functionMixingRabinovish}. 

\smallskip

An important consequence of \cref{lemma:tvar-prop} is that 
 for $T \geq \TRel$, we have
\begin{equation}
\label{eq:itvar-trel-decay}
\ITVn{T} \in \Theta\left( \frac{\TRel}{T} \ITRelVar \right) \enspace,
\end{equation}
which should remind the reader of $\Var[\frac{1}{T}\sum_{i=1}^T f(X_i)]=\frac{\SVar}{T}$
which holds when $X_i$s are sampled independently from $\pi$.
In fact, $\TRel\ITRelVar$ captures the behavior of  averages of Markovian random variables just as $\SVar$ does for independent random variables.
Furthermore the asymptotic term $ \lim_{i\to \infty} i \ITVn{i}$  appears in the central limit theorem (see \cref{sec:compare}), where this average is Gaussian, and in both the finite and infinite cases, such variances are intimately tied to sample complexity.
Thus, we express the sample complexity of \algo{} with respect to $\TRel\ITRelVar$.

\vspace*{-0.2 cm}
\paragraph{The Trace Chain}
The inter-trace variance can be thought of as the variance of $\Favg^{(\tau)}$ over stationary traces of length $\tau$, i.e., $\pi^{(\tau)}$.
Letting $S^{(\tau)}$ be the space of all $\tau$-traces, we now define the $\tau$-\emph{trace chain}, which naturally groups the output of $\M$ into $\tau$-traces. 


\begin{defin}[{\bf Trace chain}]\label{def:2.1}
For a Markov chain $\M$ on state space $\X$,
we define the trace chain $\M^{(\tau)}$ on state space $\X^{(\tau)}$ as follows:
given $\vec{a}=(a_1, a_2 \dots, a_\tau)$ and $\vec{b}=(b_1,b_2,\dots , b_\tau)$ in $\X^{(\tau)}$, the probability of going from $\vec{a}$ to $\vec{b}$ is
$\M^{(\tau)}(\vec{a},\vec{b})\doteq\M(a_\tau,b_1)\prod_{i=1}^{\tau-1}\M(b_i,b_{i+1})$.
\end{defin}
\arx{The definition of the trace chain in terms of transition probabilities is perhaps slightly 
unintuitive, but we} Note that an $m$-trace $\vec{X}_{1:m}$ drawn from $\M$ is equivalently distributed to a $k$-trace $\vec{\cal X}_{1:k}=({\cal X}_1,
\dots, {\cal X}_k)$  drawn from $\M^{(\tau)}$, where  $k=m/\tau$ and each ${\cal X}_i$ is a \emph{contiguous disjoint sub-trace} of $\vec{X}$, i.e., ${\cal X}_1=({X}_{1},\dots , X_{\tau}),\ {\cal X}_2=({X}_{\tau+1},\dots,  X_{2\tau}),\dots  , {\cal X}_k=({X}_{(k-1)\tau+1}\dots X_m) $.
We show in \cref{lemma:tracemixing} that $\pi^{(\tau)}$ is the stationary distribution of $\M^{(\tau)}$, i.e., 
$\ITVn{\tau}(f,\M)$  is the  variance of $\Favg^{\tau}$ on stationary distribution of  $\M^{(\tau)}$.
\paragraph{When is ${\bf trv}^{\TRel}=\mathbf{o(\SVar)}$?} Plugging in $\tau=\TRel$ in Equation \ref{eq:tvtinc} we obtain $\TRel\ITRelVar\leq 2\TRel\SVar$,  i.e.,   $\ITRelVar=O(\SVar)$.
Using projection chains (see \cref{def:proj}), we can show that the inter-trace variance becomes smaller when $f$ is symmetrically projected on traces of $\M$.
\Cref{exm:cycle}  illustrates this observation.
 In section \ref{sec:5} we present an application to a counting problem and  show \emph{approximate-symmetry} suffices. 
\begin{defin}[Projection chain \cite{levin2017markov}]
\label{def:proj}
Having an equivalence relationship $\simeq$ on $\X$ and classes $\tilde{\X}=\{[x]; x\in \X\}$ such that if $x\simeq x'$, then ${\M}(x,[y])={\M}(x',[y])$,
we call the Markov chain $\tilde{\M}$ with state space $\tilde{\X}$ and transition probabilities $\tilde{\M}([x],[y])={\M}(x,[y]) $   a \emph{projection chain} (see section 2.3 of \cite{levin2017markov} for full discussion).
\end{defin}\label{defin:proj}
We now introduce a class of functions having equal \emph{means} and \emph{stationary variances}, but projecting differently on traces of a fixed chain. While they all have equal $\SVar$, in \cref{lem:cycle} (proof in \cref{sec:cycle}), we prove that under appropriate parameterization, $\ITRelVar$  takes any arbitrary value.

\sidecaptionvpos{caption}{t}

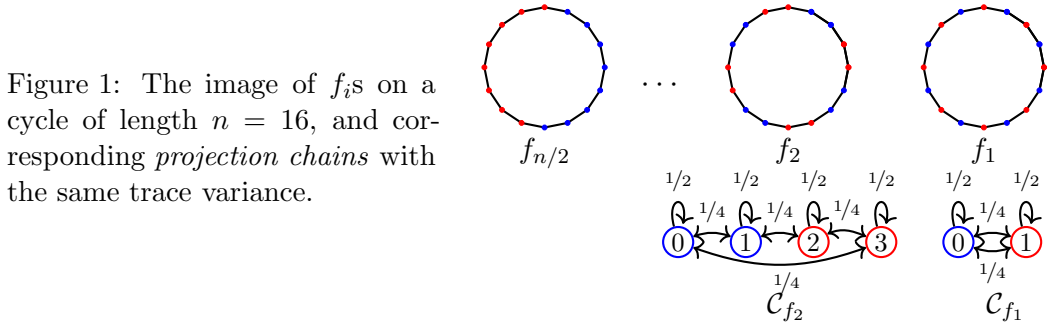
\begin{SCfigure}[0.6][hb]
\caption{
The image of $f_i$s on a cycle of length $n = 16$, and corresponding \emph{projection chains} with the same trace variance.}\label{fig:cycle}

\ \begin{minipage}{0.56\textwidth}

\def\circsize{0.04cm}

\begin{tikzpicture}[scale=0.8]

\begin{scope}[xshift=0.45cm]

\foreach \i[evaluate=\i as \j using {\i + 1}] in {1,2,...,16} {
 \draw[thick] ({sin(\i / 16 * 360)},{cos(\i / 16 * 360)}) --  ({sin(\j / 16 * 360)},{cos(\j / 16 * 360)});
}

\node[anchor=north] (label) at (0, -1) {$f_{n/2}$};

\foreach \i in {1,2,...,8} {
\filldraw[blue]  ({sin(\i / 16 * 360)},{cos(\i / 16 * 360)}) circle (\circsize);
}
\foreach \i in {9,10,...,16} {

\filldraw[red]  ({sin(\i / 16 * 360)},{cos(\i / 16 * 360)}) circle (\circsize);
}
\end{scope}

\begin{scope}[xshift=0.4cm]
\node[anchor=north] (label) at (2.0, 0) {\large$\cdots$};
\end{scope}

\begin{scope}[xshift=4.5cm]

\foreach \i[evaluate=\i as \j using {\i + 1}] in {1,2,...,20} {
 \draw[thick]  ({sin(\i / 16 * 360)},{cos(\i / 16 * 360)}) --  ({sin(\j / 16 * 360)},{cos(\j / 16 * 360)});
}

\node[anchor=north] (label) at (0, -1) {$f_2$};
\foreach \i in {1,5,...,20} {
\filldraw[blue]  ({sin(\i / 16 * 360)},{cos(\i / 16 * 360)}) circle (\circsize);
}

\foreach \i in {3,7,...,20} {
\filldraw[red]  ({sin(\i / 16 * 360)},{cos(\i / 16 * 360)}) circle (\circsize);
}
\foreach \i in {2,6,...,20} {
\filldraw[blue]  ({sin(\i / 16 * 360)},{cos(\i / 16 * 360)}) circle (\circsize);
}
\foreach \i in {4,8,...,20} {
\filldraw[red]  ({sin(\i / 16 * 360)},{cos(\i / 16 * 360)}) circle (\circsize);
}
\end{scope}
\begin{scope}[xshift=7.75 cm]

\foreach \i[evaluate=\i as \j using {\i + 1}] in {1,2,...,20} {
\draw[thick]  ({sin(\i / 16 * 360)},{cos(\i / 16 * 360)}) --  ({sin(\j / 16 * 360)},{cos(\j / 16 * 360)});

}

\node[anchor=north] (label) at (0, -1) {$f_1$};

\foreach \i in {1,3,...,20} {

\filldraw[blue]  ({sin(\i / 16 * 360)},{cos(\i / 16 * 360)}) circle (\circsize);
}
\foreach \i in {2,4,...,20} {

\filldraw[red]  ({sin(\i / 16 * 360)},{cos(\i / 16 * 360)}) circle (\circsize);}

\end{scope}
\end{tikzpicture}

\hspace{-0.5cm}\begin{tikzpicture}[
    scale=0.9,
    bnode/.style={thick,circle,draw=blue,minimum size=0.4cm,inner sep=0pt},
    rnode/.style={thick,circle,draw=red,minimum size=0.4cm,inner sep=0pt},
    gedge/.style={thick,<->},
    align=right,
]

\draw[draw=none,use as bounding box] (0, -0.5) rectangle (1, 0.9); 

\begin{scope}[xshift=3.46cm]

\node[bnode] (a) at (0, 0) {\small$0$};
\node[bnode] (b) at (1, 0) {\small$1$};
\node[rnode] (c) at (2, 0) {\small$2$};
\node[rnode] (d) at (3, 0) {\small$3$};
\draw[gedge] (a) edge[loop above]node{$\mathsmaller{\nicefrac{1}{2}}$} (a);
\draw[gedge] (b) edge[loop above]node{$\mathsmaller{\nicefrac{1}{2}}$} (b);
\draw[gedge] (c) edge[loop above]node{$\mathsmaller{\nicefrac{1}{2}}$} (c);
\draw[gedge] (d) edge[loop above]node{$\mathsmaller{\nicefrac{1}{2}}$} (d);

\draw[gedge] (a) edge[bend left=10,out=20,in=160]node[above]{$\mathsmaller{\nicefrac{1}{4}}$} (b);

\draw[gedge] (b) edge[bend left=10,out=20,in=160]node[above]{$\mathsmaller{\nicefrac{1}{4}}$} (c);

\draw[gedge] (c) edge[bend left=10,out=30,in=160]node[above]{$\mathsmaller{\nicefrac{1}{4}}$} (d);

\draw[gedge] (d) edge[bend right=10,out=20,in=160] (a);
\node[anchor=north] (label) at (1.6, -0.3) {$\mathsmaller{\nicefrac{1}{4}}$};
\node[anchor=north] (label) at (1.6, -0.6) { ${\cal C}_{f_2}$};
\end{scope}

\begin{scope}[xshift=2.5cm]

\node[bnode] (a) at (5.1, 0) {\small$0$};
\node[rnode] (b) at (6.1, 0) {\small$1$};

\draw[gedge] (a) edge[loop above]node{$\mathsmaller{\nicefrac{1}{2}}$} (a);
\draw[gedge] (b) edge[loop above]node{$\mathsmaller{\nicefrac{1}{2}}$} (b);
\draw[gedge] (a) edge[bend left=10,out=20,in=160]node[above]{$\mathsmaller{\nicefrac{1}{4}}$} (b);
\draw[gedge] (b) edge[bend right=10,out=20,in=160]node[anchor=north]{$\mathsmaller{\nicefrac{1}{4}}$} (a);

\node[anchor=north] (label) at (5.8, -0.6) { ${\cal C}_{f_1}$};
\end{scope}
\end{tikzpicture}
\end{minipage}
\end{SCfigure}

\begin{exm}\label{exm:cycle}
 Consider the Markov chain $\cal C$, known as the {cycle}, defined on $[n]=\{1,2,\dots, n\}$ with transition probabilities: ${\cal C}(i,i)=1/2$, ${\cal C}(i,i+1)=1/4$ and ${\cal C}(i,i-1)=1/4$, where $i-1$ and $i+1$ are taken ${\rm mod}~ n$. Clearly the stationary distribution on this chain is uniform, and it is  known that the mixing time and relaxation time are both $\Theta(n^2)$. 
 The following class of functions defined on $[n]$ all satisfy  $\mathbb{E}_{\pi}(f_i)=1/2$ and $\SVar(f_i)=1/4$. However, as shown in lemma \ref{lem:cycle},
 as the image of $f_i$s distribute more evenly  on $\M$'s traces  $\ITRelVar(f_i,\M)$ becomes smaller.

For $1\leq i\leq \frac{n}{2} $, let $f_i:[n]\rightarrow \{0,1\}$ be $f_i(x)=0$ if and only if $x ~ {\rm mod}~ 2i< i$, so $f_i$'s image on the cycle is consecutive length-$i$ runs of $0$s and $1$s (see \cref{fig:cycle}).
\arx{Note that in the two extreme cases, we have, (1) $f_1:[n]\rightarrow \{0,1\}, f_1(x)=0$ if and only if $x ~ {\rm mod}~ 2=0$, and (2) $f_{n/2}:[n]\rightarrow \{0,1\}, f_{n/2}(x)=0$ if and only if $x\leq \frac{n}{2}$.}

On equivalence classes of ${\rm mod}~ 2$, we find the corresponding projection chain, denoted by ${\cal C}_{f_1}$,  which has transition probabilities:
${\cal C}_{f_1}(0,1)={\cal C}_{f_1}(1,0)=1/4$ and ${\cal C}_{f_1}(0,0)={\cal C}_{f_1}(1,1)=1/2$. Similarly, for arbitrary $i$, we denote the projection chain on equivalence classes of ${\rm mod}~2i$  by ${\cal C}_{f_i}$.
Clearly for each $i$, the trace variance of $f_i$ on $\cal C$ and on ${\cal C}_{f_i}$ are distributed identically
(see \cref{fig:cycle}). 
\end{exm}
\begin{restatable}{lemma}{cycle}\label{lem:cycle}
For $1\leq i\leq \frac{n}{2}$, let $f_i$ be defined as above, 
we have $\ITRelVar(\Cycle,f_i)=\Theta(\nicefrac{i^2}{n^2})$ e.g.,  $\ITRelVar(\Cycle,f_{n/2}) = \Theta(1)$, and $\ITRelVar(\Cycle,f_{1})= \Theta(\nicefrac{1}{n^2})$.
\end{restatable}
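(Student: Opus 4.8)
The plan is to pass to the projection chain $\mathcal{C}_{f_i}$, which is the lazy simple random walk on the cycle $C_{2i}$ (stay with probability $\tfrac12$, step $\pm 1$ with probability $\tfrac14$ each) and on which $f_i$ becomes the indicator $g$ of a single length-$i$ arc $A$; as noted in the text, the trace variance of $f_i$ on $\mathcal{C}$ equals that of $g$ on $\mathcal{C}_{f_i}$, so it suffices to analyse the latter. Running $\mathcal{C}_{f_i}$ from its stationary (uniform) distribution, I would first expand the trace variance into autocovariances: with $\rho(t) \doteq \Cov_{\pi}\bigl(g(X_0),g(X_t)\bigr)$, which is well-defined and time-shift invariant since $X_0\distributed\pi$, a routine rearrangement of $\Var(\Favg)=\tfrac1{T^{2}}\sum_{j,k}\Cov(g(X_j),g(X_k))$ gives
\[
\ITVar(f_i) \;=\; \frac{1}{T}\Bigl(\rho(0) \;+\; 2\sum_{t=1}^{T-1}\bigl(1-\tfrac{t}{T}\bigr)\rho(t)\Bigr) \enspace .
\]

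The crux is a closed form for $\rho(t)$. Writing $X_t = X_0 + S_t$ with $S_t=\sum_{j\le t}\xi_j$ the lazy-walk displacement (the $\xi_j$ i.i.d.\ with $\Prob(\xi_j=\pm1)=\tfrac14$, $\Prob(\xi_j=0)=\tfrac12$, so $\Expect[\xi_j]=0$ and $\Var(\xi_j)=\tfrac12$), and using that two length-$i$ arcs of $C_{2i}$ shifted by $s$ overlap in exactly $i-\mathrm{dist}_{C_{2i}}(s)$ points (where $\mathrm{dist}_{C_{2i}}$ is graph distance to the origin in $C_{2i}$), one computes $\Prob(X_0\in A,\,X_t\in A)=\tfrac1{2i}\Expect[\,i-\mathrm{dist}_{C_{2i}}(S_t)\,]$, and hence, using $\rho(0)=\SVar(f_i)=\tfrac14$ (equivalently $\Prob(A)=\tfrac12$),
\[
\rho(t) \;=\; \tfrac14 - \tfrac{1}{2i}\,\Expect\!\bigl[\mathrm{dist}_{C_{2i}}(S_t)\bigr] \enspace .
\]
Since $\mathrm{dist}_{C_{2i}}(z)\le|z|$ for every integer $z$, Jensen's inequality gives $\Expect[\mathrm{dist}_{C_{2i}}(S_t)]\le\Expect|S_t|\le\sqrt{\Expect[S_t^{2}]}=\sqrt{t/2}$, so $\rho(t)\ge\tfrac14-\tfrac1{2i}\sqrt{t/2}\ge\tfrac18$ for every $t\le i^{2}/8$. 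This diffusive estimate — the walk must take $\Theta(i^{2})$ steps before it can ``see'' the period-$2i$ structure of $f_i$ — is the heart of the argument.

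For the lower bound, note first that since $\mathcal{C}_{f_i}$ is lazy its transition operator is positive semidefinite, so $\rho(t)=\sum_{\ell}a_\ell^{2}\lambda_\ell^{t}\ge0$ for all $t$; I may therefore discard all terms $t>m\doteq\min\bigl(\lfloor T/2\rfloor,\lfloor i^{2}/8\rfloor\bigr)$ from the displayed sum, and on the retained indices $1\le t\le m$ one has $1-\tfrac tT\ge\tfrac12$ and $\rho(t)\ge\tfrac18$, giving $\ITVar(f_i)\ge\tfrac{m}{8T}$; combined with the trivial $\ITVar(f_i)\ge\rho(0)/T=\tfrac1{4T}$ (which already handles the $O(1)$ small values of $i$ for which $m=0$), this yields $\ITVar(f_i)=\Omega(i^{2}/T)$ as soon as $T=\Omega(i^{2})$. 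For the matching upper bound I would invoke \cref{eq:tracestationaryvariance} applied to $\mathcal{C}_{f_i}$, i.e.\ $\ITVar(f_i)\le\tfrac{2\TRel(\mathcal{C}_{f_i})}{T}\SVar(f_i)$, and observe that $\mathcal{C}_{f_i}$, being a lazy walk on $C_{2i}$, has eigenvalues $\tfrac12\bigl(1+\cos(\pi\ell/i)\bigr)$, hence $\TRel(\mathcal{C}_{f_i})=\bigl(1-\cos(\pi/i)\bigr)^{-1}\!\cdot\Theta(1)=\Theta(i^{2})$; with $\SVar(f_i)=\tfrac14$ this gives $\ITVar(f_i)=O(i^{2}/T)$. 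Combining, $\ITVar(f_i)=\Theta(i^{2}/T)$ whenever $T=\Omega(i^{2})$; since $i\le n/2$, taking $T=\Theta(n^{2})$ forces $T=\Omega(i^{2})$, and we read off $\ITVar(f_i)=\Theta(i^{2}/n^{2})$, so $\ITVar(f_{n/2})=\Theta(1)$ and $\ITVar(f_{1})=\Theta(1/n^{2})$.

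The main obstacle is the two-sided control of $\rho(t)$: the lower bound needs the quantitative ``$\Theta(i^{2})$ steps to decorrelate'' phenomenon, which the elementary bound $\Expect|S_t|\le\sqrt{t/2}$ supplies, while the upper bound needs the complementary statement that $\rho(t)$ decays on the $i^{2}$ time-scale, which — rather than reproving — I would extract from \cref{eq:tracestationaryvariance}. A minor point worth flagging explicitly is that the claimed $\Theta(i^{2}/T)$ can only hold in the regime $T=\Omega(i^{2})$ (for $T\ll i^{2}$ essentially no averaging has taken place and $\ITVar(f_i)=\Theta(1)$), which is precisely the regime $T\ge\TRel(\mathcal{C})$ in which \algo{} is run, so I would state this hypothesis alongside the lemma.
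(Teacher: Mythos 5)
Your upper bound is essentially the paper's: pass to the projection chain $\mathcal{C}_{f_i}$ on $C_{2i}$, note $\TRel(\mathcal{C}_{f_i})=\Theta(i^2)$ and $\SVar=\tfrac14$, and invoke \cref{eq:tracestationaryvariance}. The lower bound, however, is a genuinely different argument. The paper conditions on starting in a set $S$ of stationary mass $\tfrac13$ lying in the ``middle third'' of a constant-colored block, uses L\'evy's plus Hoeffding's inequality to show that with constant probability a walk of length $\lesssim i^2$ never reaches the opposite-colored region, concludes that such short traces have $\ITVar\ge\Theta(1)$, and then patches up to general $T=\Omega(i^2)$ via the monotonicity of $T\mapsto T\ITVar$ (\cref{lem:varratios}). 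You instead expand $\ITVar$ into lag-$t$ autocovariances $\rho(t)$ on the projection chain, derive the exact formula $\rho(t)=\tfrac14-\tfrac{1}{2i}\Expect[\mathrm{dist}_{C_{2i}}(S_t)]$ from the arc-overlap count, get $\rho(t)\ge\tfrac18$ for $t\le i^2/8$ via the diffusive bound $\Expect|S_t|\le\sqrt{t/2}$, and drop the remaining terms using nonnegativity of $\rho(t)$ (valid since the lazy walk's transition operator is PSD). Both proofs rest on the same diffusive insight (the walk needs $\Theta(i^2)$ steps to notice the period-$2i$ structure), but yours is more quantitative: it gives the $\Omega(i^2/T)$ lower bound directly for all $T=\Omega(i^2)$ without invoking the monotonicity lemma, and has cleaner constants than the paper's proof (whose stated constants, e.g.\ $2\exp(-1/18)>1$, are indeed off, as flagged in the authors' own margin comments). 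Your explicit remark that the $\Theta(i^2/T)$ statement must be read under $T=\Omega(i^2)$ is a useful clarification the paper leaves implicit via its running assumption $T\ge\TRel(\mathcal{C})=\Theta(n^2)$.
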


The reader may rightfully remark that the above scenario, in which the image of a function  cleanly partitions the state space is rare. In Section \ref{sec:5}, we develop new concepts to identify scenarios for more complex chains, 
demonstrating a case of functions whose image is \emph{approximately symmetric} on traces of the chain, showing the above projection trick works with some modifications.

\subsection{An unbiased MCMC-variances estimator: A tale of two chains }\label{sec:varest}




In the following we introduce an \emph{unbiased} MCMC estimator for the variance, and prove its concentration guarantees for \emph{finite samples}.  To our knowledge, all existing  MCMC variance estimators are   \emph{biased}  but \emph{consistent} and \emph{asymptotically unbiased}, and finite sample guarantees are not provided. Having an arbitrary function $\F:\X\rightarrow \mathbb{R}$ and  a Markov chain $\M$ defined on state space $\X$ and converging to stationary distribution $\pi$, we are interested to estimate the variance $\mathbb{V}_{X\sim\pi}[\F(X)]$, using  observations drawn from $\M$.

\begin{defin}[{\bf Two chain variance estimator}]\label{def:varest}
Suppose $\vec{X_1}=(X_{11},X_{12},\dots , X_{1m})$ and 
$\vec{X_2}=(X_{21},X_{22},\dots , X_{2m})$ are two independent $m$-traces drawn from $\M$ at stationarity.
The estimator is defined as
\[
\VED(\F,\M)\doteq \frac{1}{2m}\sum_{i=1}^m \left(\F(X_{1,i})-\F(X_{2,i})\right )^2 \enspace,
\]
where we drop the parentheses when $\M$ and $\F$ are clear from the context. 

\end{defin}
The following lemma shows finite sample  concentration of $\VED$ and it
is proved in \cref{sec:varestimatorproof}.

\begin{lemma}[A \emph{Tail} of Two Chains]\label{lemma:twochain}  Let  
 $\VED$ be as defined above. It holds that  $\VED$ is an \emph{unbiased estimator} for variance, i.e.,  $\Expect[\VED(\F,\M)]=\Var[\F]$. Furthermore, letting $\frange$ be range of $\F$ we have
\if 0
\begin{align*}
\Expect[\VED] = \Var[\F]
\end{align*}
\fi
\begin{align*}
\Prob \left(\abs{\Var[\F] - \VED} > \varepsilon \right)
  \leq \delta \ \ \ \ \text{ for  } &\ \  \varepsilon \in \LandauTheta\left( \frac{\TRel \frange^{2} \ln \frac{1}{\delta}}{m} + \frange \sqrt{\frac{\TRel\VED\ln \frac{1}{\delta}}{m}} \right), \\
&  \implies m_{\hat{v}}(\M,f,\delta,\varepsilon) \in \Theta \left( \left( \frac{\frange^{2}}{\varepsilon} + \frac{\Var[\F] \frange^{2}}{\varepsilon^{2}} \right) \TRel \log \frac{1}{\delta} \right)
  \enspace.
\end{align*}
\end{lemma}

\vspace*{-0.2 cm}
\paragraph{Inter-trace variance estimation.}
In order to avoid assuming  prior knowledge of variance or inter-trace variance in \algo{}, we employ the above estimator to the \emph{the trace chain}, $\M^{(\TRelBound)}$. 
  In \cref{lemma:mt-relax} we show $\M^{(T)}$ has constant relaxation time when $T\geq \TRel$ (e.g., $ T=\TRelBound$).
Thus, we can estimate $\ITVn{\TRelBound}$ by employing lemma \ref{lemma:twochain} and using $\VED(\Favg,\M^{(\TRelBound)})$.

\section{\algo{}}\label{sec:algo}

In this section we present the \algo{}: \textsc{DYNAmic Mcmc Inter-Trace variance Estimation} method. We show that its sample complexity is dependent on the \emph{apriori unknown} relaxed trace variance $\ITRelVar$. 
This section is 
 prelude to the themes and techniques used in \algo{}, which are fully developed in  \cref{sec:proofsalgo}. 
In \cref{sec:compare} we  compare our results with prior work.
 
\vspace*{-0.2 cm} 
\paragraph{A Prelude}\Cref{alg:1} shows a pseudocode of \algo{}.
There are two main techniques constituting  \algo{}: \emph{trace averaging} and \emph{progressive sampling}. 
To simplify the presentation, we separated the pseudocode to two subroutines: \DMCMC\  employs {progressive sampling} to dynamically estimate variance (with no prior knowledge of it), thus is itself a \emph{dynamic} MCMC mean estimator.
 \algo{} calls \DMCMC\ while using {trace averaging}, which improves sample complexity dependence from $\SVar$ to $\ITRelVar$.  
While application of progressive sampling is common  
in algorithm design \cite{progressivesamp}, here we introduce trace averaging.

\begin{defin}
[{\bf Trace averaging}]\label{def:trvavg} Consider function $f$,   a trace of length $m$ of $\M$ $\vec{X}_{1:m}=(X_1,X_2,\dots , X_m)$, and an upper bound on $\M$'s relaxation time $\TRelBound$. 
For any $\vec{X}$'s contiguous  
$\TRelBound$-subtrace  ${\cal X}=(X_{j+1},X_{j+2},\dots, X_{j+\TRelBound})$,
let  $ \Favg^{(\TRelBound)}({\cal X})= (\frac{1}{\TRelBound})\sum_{i=1}^{\TRelBound}f(X_{j+i})$. Trace averaging is then the process of estimating $\Expect_{\pi}[f]$ 
by employing an MCMC mean estimator for $\Favg^{(\TRelBound)}$ on the trace chain $\M^{(\TRelBound)}$.
\end{defin}
We also employ \emph{progressive sampling}. Beginning with a small 
number of  chain transitions,  we \emph{progressively} increase the sample size until a \emph{stopping condition} is met. 

In each round,
we calculate the unbiased variance estimator as developed in \cref{sec:varest} (see lines 8--12 in \cref{alg:1}), using it we 
  find a high probability upper-bound on $\ITVn{\TRelBound}$ (line 13 of \cref{alg:1}).
From it and by employing an MCMC Bernstein bound e.g., \cref{thm:bernstein} we obtain  a suitable stopping condition guaranteeing that the empirical mean
is sufficiently accurate
(line 14 and 15 of \cref{alg:1}).
\cyrus{emphasize: as though var were known a priori}
Using $\TRelBound\ITVn{\TRelBound}\leq c\TRel\ITRelVar$ (for universal constant $c$) as proved in \cref{lemma:tvar-prop}, we  show our bounds  in terms of the \emph{relaxed} inter-trace variance $\ITRelVar$.

The following theorems, proved in \cref{sec:proofsalgo}, guarantee correctness and efficiency of \algo{}. Since many of the lemmas use stationary traces, in line 4 of \cref{alg:1}, we use the standard warm start trick which is running the chain from arbitrary starting points 
until  proximity to stationarity  is reached
(see \cref{remark:stationarity}).

\vspace*{-0.1cm}


\begin{restatable}[Correctness of \algo{}]{theorem}{correctness}
\label{thm:correctness}
Consider a Markov chain $\M$ and
 its relaxation time upper bound $\TRelBound$, function $f$ and its range $\frange$.
For arbitrary $\varepsilon$ and  $\delta $, and starting points $x_0,x_1\in S$ taking $\hat{\mu}$ as either of 
\begin{enumerate}[topsep=4pt,partopsep=0pt,itemsep=0pt,wide,labelwidth=0pt, labelindent=12pt]
\item $\hat{\mu} \gets \DMCMC((x_{0}, x_{1}), \M, \TRelBound, f, {\varepsilon, \delta})$; or 
\item $\hat{\mu} \gets \algo((x_{0}, x_{1}), \M, \TRelBound, f, {\varepsilon, \delta})$ 
for lazy $\M$.
\end{enumerate}

we will have a $(\varepsilon,\delta)$ estimator for $\mu \doteq \Expect_{\pi}[f]$, i.e, $\mathbb{P}(\vert\hat{\mu}-\mu \vert\geq \varepsilon )\leq \delta$
\end{restatable}
\cyrus{Nonstationary comment here?}

\vspace*{-0.2 cm}

\begin{restatable}[Efficiency of \algo{}]{theorem}{thmefficiency}
\label{thm:efficiency}
Suppose as in \cref{thm:correctness}.
With probability at least $1 - \delta$, 
it holds that total sample complexity  of
\DMCMC\ obeys
\vspace*{-0.2 cm}
\begin{align*}
m_{\DMCMC}(\M,f,\varepsilon,\delta) \in \mathcal{O}\left( \TRelBound \log\left(\frac{\log(\nicefrac{\frange}{\varepsilon})}{\delta}\right)\left( \frac{\frange}{\varepsilon} + \frac{\SVar}{\varepsilon^{2}}\right)\right) \enspace,
\end{align*}
\vspace*{-0.3cm}
and that of \algo{} obeys
\vspace*{-0.2 cm}
\begin{align*}
m_{\algo{}}(\M,f,\varepsilon,\delta) \in \mathcal{O}\left( \log\left(\frac{\log(\nicefrac{\frange}{\varepsilon})}{\delta}\right)\left( \frac{\TRelBound\frange}{\varepsilon} + \frac{\TRel\ITRelVar}{\varepsilon^{2}}\right)\right) \enspace.
\end{align*}
%
%
\end{restatable}


\vspace*{-0.3 cm}
\subsection{Discussion and comparison with prior work}\label{sec:compare}

We now contrast the \algo\ sample complexity bound with those of prior art.
In order to focus on the salient differences between methods, we consider the asymptotic high-precision regime (i.e., sample complexity as $\varepsilon \to 0$), and to divide out the $\frac{1}{\varepsilon^{2}}\ln \frac{1}{\delta}$ terms, which necessarily appear in all bounds of this ilk, 
and we then report (in asymptotic notation) the quantity 
\begin{equation}
\HPSC(\M,f) \doteq \lim_{\varepsilon \to 0} \smash{\frac{\varepsilon^{2}}{\log \frac{1}{\delta}}} m(\M,f,\varepsilon,\delta )~.
\end{equation}
where $m(\M,f,\varepsilon,\delta)$ is the sample complexity 
as used before.  

\vspace*{-0.2 cm}

\paragraph{The Central Limit Theorem for Markov chains} We first contrast our bound with bounds in terms of the \emph{asymptotic variance} of $\M$, defined as

\vspace*{-0.2 cm}
\[
\AVar \doteq \lim_{\tau \to \infty} \tau \ITVn{\tau} \enspace.
\]
This quantity is fundamental to mean estimation, as the Markov chain central limit theorem \cite{gordin1978central} states
\[
\lim_{\tau \to \infty} \frac{1}{\sqrt{\tau}} \Favg^{(\tau)}(\vec{X}) \distributed \mathcal{N}(\Expect[f], \AVar) \enspace, 
\]
thus asymptotically, via the CLT we get mean estimation sample complexity
\[
\HPSC_{\rm CLT}(\M,f) = \Theta(\AVar) =\Theta(\TRel\ITRelVar)\enspace.
\]


There are two main factors which prevent finite-sample bounds from achieving $\Theta(\VAsy)$. 
The first is that, like in the i.i.d.\ case, 
the CLT is only an asymptotic result, and for finite samples, $\Favg^{(\tau)}(\vec{X})$ may be far from Gaussian. 
The second is that  a priori knowledge of $\VAsy$ 
requires highly sophisticated analysis.
Thus, finite-sample guarantees  are often 
stated in terms of
\emph{mixing times} or the \emph{stationary variance}, instead of more fundamental quantities and they fail to match the  CLT.

\vspace*{-0.2 cm}

\paragraph{\algo{}'s high precision complexity}From \Cref{thm:efficiency}, we conclude that \vspace*{-0.2 cm}

\[\HPSC_{\algo{}}= \TRel \ITRelVar \cdot  (1+\ln\ln\bigl(\mathsmaller{\frac{\frange}{\varepsilon}})\bigr) =\Theta\bigl(\VAsy\cdot (1+\ln\ln(\mathsmaller{\frac{\frange}{\varepsilon}}))\bigr) \enspace.
\]
Note that our bound matches the CLT, 
except for the $(1 + \ln \ln \frac{\frange}{\varepsilon})$ term, which is due to the progressive sampling union bound, and is what allows us to improve dependence on $\AVarBound$ to $ \Theta(\AVar)$.

\vspace*{-0.2 cm}
\paragraph{Comparison with asymptotic variance bounds \cite{paulin2015}} Paulin \cite{paulin2015} (Thm. 3., Eq. 3.20) accounts for finite-sample approximation error with a finite sample bound of
\[
m_{\rm Paulin}(\M,f,\varepsilon,\delta) = \Theta\left( \ln \frac{1}{\delta} \left( \frac{\frange \TRelBound}{\varepsilon} + \frac{\AVarBound}{\varepsilon^{2}} \right) \right) \enspace, \text{which implies }\ \   \HPSC(\M,f) = \Theta(\AVarBound) \enspace.
\]

\vspace*{-0.1 cm}
\paragraph{Comparison with MCMC Bernstein bound (\cref{thm:bernstein})}

Most Bernstein-type bounds  depend on (loose) \emph{a priori} bounds on the stationary variance $\SVar$ of the chain \emph{and its relaxation time}. Thus they are inferior to our method which depends on $\VAsy$.
\begin{align*}
\HPSC_{\rm Bern}(\M,f) = \Theta\left(\TRelBound\SVarBound\right) 
\mathop{\implies}\limits_{\mathclap{\text{ lem. \ref{lemma:tvar-prop}}}} \HPSC_{\rm Bern}(\M,f) = \Omega\left(\VAsy\right) \enspace.
\end{align*}


\vspace*{-0.2 cm}
\paragraph{Comparison with function specific mixing Hoeffding bound \cite{functionMixingRabinovish}}
Rabinovich et al.\ 
\cite{functionMixingRabinovish} show a bound of the form
\begin{align*}
m_{\rm Rabi}(\M,f,\varepsilon,\delta) \leq \Theta \left( \frac{\TfMixBound(f,\M \frac{\frange}{\varepsilon})\frange^{2}}{\varepsilon^{2}} \log \frac{1}{\delta} \right) &\implies \HPSC_{\rm Rabi}(\M,f) \in \Theta \left( \TfMixBound(f,\M, \mathsmaller{\frac{\frange}{\varepsilon}})\frange^{2} \right)\\
&\mathop{\implies}\limits_{\mathclap{\text{see \ref{sec:compareproofs}}}}\HPSC_{\rm Rabi}(\M,f) \in \Omega \left( \AVar \log \mathsmaller{\frac{\frange}{\varepsilon}} \right) 
\enspace.
\end{align*}

The above bound is loose because it depends on an upper-bound of $\TfMix$, rather than $\TfRel$ (see \cref{def:trefmix}), and because, as a Hoeffding like-bound, it uses 
$\frange^2$ instead of any variance proxy. 
More importantly as prior work has shown difficulty of estimating the standard relaxation time from observations, a serious obstacle in employing it in practice is to find $\TfMixBound\geq \TfMix$ which needs full spectral decomposition of $\M$'s transition matrix.

\section{Application to a counting problem}\label{sec:5}

In section \ref{sec:tracevar}, we found  $\ITRelVar=o(\SVar)$ for  functions  whose images partition the state space of $\cal M$, producing simpler \emph{projection chains} (see \cref{exm:cycle}). 
In this section, our goal is to obtain similar results for the JVV counting to sampling reduction  when traces of a Markov chain \emph{are hardly distinguishable} from a simpler  \emph{projection chain}. Developing some mathematical tools, we show that using \algo{} as a mean estimator not only makes  the complexity of JVV algorithm less dependent on loose mixing (relaxation) time bounds, but more importantly, we show it significantly reduces computation cost in some instances like counting number of $k$-colorings in planted partitions. 
While we present results for the number of proper $k$-colorings of a graph $G=(V,E)$, similar techniques can be used for other similar counting problems like \emph{independent sets}.

\emph{Exact} counting of the number of proper $k$-colorings is known to be \#P-hard \cite{ValiantHardness}, thus research has been conducted to find a \emph{fully polynomial time randomized approximation scheme} (FPRAS) for this problem. 

Jerrum 
\footnote{This result was improved by many, most importantly \cite{coloringvigodasmainresult}. See e.g.,\cite{coloringsurvey} for a survey and recent result e.g.,  \cite{Chen2021RapidMF}. }
proved that a Glauber dynamics chain on $k$-colorings (see \cref{def:coloringchain}) is rapidly mixing when $k>2\Dmax$
. Furthermore, based on the seminal work of Jerrum, Valient and Vazirani  (JVV) on counting self-reducible structures  \cite{countingsamplingreduction1986jurrumvaziranivalient}, he showed  a FPRAS for counting $k$-colorings  using a telescoping sum of  MCMC-mean estimation sub-problems (see Section \ref{sec:jvvred} or \cite{jurrumSamplingiscounting} for more details). We plug-in \algo{} as a mean estimation subroutine in Jerrum's FPRAS.  
 \begin{figure}[h]
\colorlet{rcol}{red!90!black}
\colorlet{bcol}{blue!95!black}
\colorlet{ccol}{cyan!80!black}
\scalebox{1.1}{
\null\hspace{-0.25cm}\begin{tikzpicture}[
        scale=1.13,
        region/.style={thick},
        rregion/.style={region,draw=rcol,fill=rcol,fill opacity=0.02},
        bregion/.style={region,draw=bcol,fill=bcol,fill opacity=0.02},
        cregion/.style={region,draw=ccol,fill=ccol,fill opacity=0.02},
        uncutedge/.style={line width=0.18mm,opacity=0.45},
        cutedge/.style={line width=0.22mm,line cap=round,{dash pattern=on 0pt off 1.75\pgflinewidth},opacity=0.65},
        testedge/.style={line width=0.26mm,dash pattern={on 3\pgflinewidth off 2\pgflinewidth},opacity=0.91},
        vtx/.style={fill=black,opacity=0.7},
    ]

\begin{scope}

\draw[rregion] plot [smooth cycle] coordinates {(0,0) (1,1) (1.8,1) (1,0) (1,-1)};
\draw[bregion] plot [smooth cycle] coordinates {(2.2,0) (3,1) (3.2,0.8) (3.1,1.5) (2.5,1.5)  (2.2,1.5) };
\draw[cregion] plot [smooth cycle] coordinates {  (2,-1)  (2,-0.5)  (1.5,0) (2,0)  (2.5,-1)  };

\draw[cutedge] (1.6,0)--(0.8,0); 
\draw[cutedge] (2,-0.2)--(2.2,0.3);
\draw[cutedge] (1.6,1)--(2.3,1); 
\draw[cutedge] (1.6,0.8)--(2.3,0.6); 

\draw[vtx]  (1.6,1) circle (0.02 cm);
\draw[vtx]  (2.3,1) circle (0.02 cm);
\draw[vtx]  (1.6,0.8) circle (0.02 cm);
\draw[vtx]  (2.3,0.6) circle (0.02 cm);
\draw[vtx]  (0.8,0) circle (0.02 cm);
\draw[vtx] (2.2,0.3) circle (0.02 cm);
\draw[vtx]  (1.6,0) circle (0.02 cm);
\draw[vtx]  (2,-0.2) circle (0.02 cm);

\draw[vtx]  (1,0.9) circle (0.02 cm);
\draw[uncutedge] (1,0.9)-- (1.3,0.5);
\draw[vtx]  (1.3,0.5) circle (0.02 cm);
\draw[vtx]  (1,0.7) circle (0.02 cm);
\draw[vtx]  (0.7,0.6) circle (0.02 cm);
\draw[uncutedge](0.7,0.6)--(0.5,0.5);
\draw[vtx]  (0.5,0.5) circle (0.02 cm);
\draw[vtx]  (0.9,0.2) circle (0.02 cm);
\draw[vtx]  (0.2,0.1) circle (0.02 cm);
\draw[vtx]  (0.2,-0.2) circle (0.02 cm);
\draw[vtx]  (0.9,-0.4) circle (0.02 cm);
\draw[vtx]  (0.9,0.5) circle (0.02 cm);

\draw[vtx]  (2.7,1.4) circle (0.02 cm);
\draw[vtx]  (2.8,1.2) circle (0.02 cm);
\draw[uncutedge](2.8,1.2)-- (2.7,1.4);
\draw[uncutedge](2.8,1.2)--  (2.6,1);
\draw[vtx]  (2.6,1) circle (0.02 cm);
\draw[vtx]  (2.4,1.3) circle (0.02 cm);
\draw[vtx]  (2.4,0.4) circle (0.02 cm);
\draw[vtx]  (2.1,-0.5) circle (0.02 cm);
\draw[vtx]  (2.2,-0.8) circle (0.02 cm);
\draw[vtx]  (2.8,1.2) circle (0.02 cm);
\draw[vtx]  (2.5,0.6) circle (0.02 cm);

\draw[uncutedge]  (0.5,0)--(0.9,0.5);
\draw[uncutedge]  (0.5,0)--(0.9,-0.4);

\filldraw[black]  (0.5,0) circle (0.035 cm);
\node[anchor=east,inner sep=2.0pt] (labelu) at (0.5, 0) {\small $u$};
\filldraw[black]  (0.63,-0.5) circle (0.035 cm);
\node[anchor=east,inner sep=1.9pt] (labelv) at (0.63,-0.5) {\small $v$};

\draw[testedge] (0.5, 0) -- (0.63,-0.5);

\node[anchor=north] (label1) at (1.6, -1.1) {\small $G$: loosely connected components};
\node[anchor=north] (label2) at (1.6, -1.45) {\small $\color{rcol}V_r$, $\color{bcol}V_b$, and $\color{ccol}V_c$};
\end{scope}

\node[anchor=north] (label) at (4.0, 0.5) {$\xleftrightarrow{\text{Coupling }}$};

\begin{scope}[xshift=5.0cm]

\draw[rregion] plot [smooth cycle] coordinates {(0,0) (1,1) (1.8,1) (1,0) (1,-1)};
\draw[bregion] plot [smooth cycle] coordinates {(2.2,0) (3,1) (3.2,0.8) (3.1,1.5) (2.5,1.5)  (2.2,1.5) };
\draw[cregion] plot [smooth cycle] coordinates {  (2,-1)  (2,-0.5)  (1.5,0) (2,0)  (2.5,-1)  };

\filldraw[black]  (0.5,0) circle (0.035 cm);
\node[anchor=east,inner sep=2.0pt] (labelu) at (0.5, 0) {\small $u$};
\filldraw[black]  (0.63,-0.5) circle (0.035 cm);
\node[anchor=east,inner sep=1.9pt] (labelv) at (0.63,-0.5) {\small $v$};

\draw[testedge] (0.5, 0) -- (0.63,-0.5);

\draw[vtx]  (0.8,0) circle (0.02 cm);
\draw[vtx] (2.2,0.3) circle (0.02 cm);

\draw[vtx]  (1.6,1) circle (0.02 cm);
\draw[vtx]  (2.3,1) circle (0.02 cm);
\draw[vtx]  (1.6,0.8) circle (0.02 cm);
\draw[vtx]  (2.3,0.6) circle (0.02 cm);
\draw[vtx]  (1,0.9) circle (0.02 cm);
\draw[uncutedge] (1,0.9)-- (1.3,0.5);
\draw[vtx]  (1.3,0.5) circle (0.02 cm);
\draw[vtx]  (1,0.7) circle (0.02 cm);
\draw[vtx]  (0.7,0.6) circle (0.02 cm);
\draw[uncutedge](0.7,0.6)--(0.5,0.5);
\draw[vtx]  (0.5,0.5) circle (0.02 cm);
\draw[vtx]  (0.9,0.2) circle (0.02 cm);
\draw[vtx]  (0.2,0.1) circle (0.02 cm);
\draw[vtx]  (0.2,-0.2) circle (0.02 cm);
\draw[vtx]  (0.9,-0.4) circle (0.02 cm);
\draw[vtx]  (0.9,0.5) circle (0.02 cm);

\draw[vtx]  (2.7,1.4) circle (0.02 cm);
\draw[vtx]  (2.8,1.2) circle (0.02 cm);
\draw[uncutedge](2.8,1.2)-- (2.7,1.4);
\draw[uncutedge](2.8,1.2)--  (2.6,1);
\draw[vtx]  (2.6,1) circle (0.02 cm);
\draw[vtx]  (2.4,1.3) circle (0.02 cm);
\draw[vtx]  (2.4,0.4) circle (0.02 cm);
\draw[vtx]  (2.1,-0.5) circle (0.02 cm);
\draw[vtx]  (2.2,-0.8) circle (0.02 cm);
\draw[vtx]  (2.8,1.2) circle (0.02 cm);
\draw[vtx]  (2.5,0.6) circle (0.02 cm);

\draw[uncutedge]  (0.5,0)--(0.9,0.5);
\draw[uncutedge]  (0.5,0)--(0.9,-0.4);
\draw[vtx]  (1.6,0) circle (0.02 cm);
\draw[vtx]  (2,-0.2) circle (0.02 cm);


\node[anchor=north] (label1) at (1.6, -1.1) {\small $G'$: disconnected components};
\node[anchor=north] (label2) at (1.6, -1.45) {\small $\color{rcol}V_r$, $\color{bcol}V_b$, and $\color{ccol}V_c$};

\end{scope}

\node[anchor=north] (label) at (9.0, 0.5) {$\xrightarrow{\text{Projection}}$};
\begin{scope}[xshift=10.0cm]
\begin{scope}[yshift=-0.3cm,scale=1.2,rotate=18]

\draw[rregion] plot [smooth cycle] coordinates {(0,0) (1,1) (1.8,1) (1,0) (1,-1)};

\filldraw[black] (0.5,0) circle (0.035 cm);
\node[anchor=north east,inner sep=2.0pt,yshift=0.8pt,xshift=-0.5pt] (labelu) at (0.5, 0) {\small $u$};
\filldraw[black] (0.63,-0.5) circle (0.035 cm);
\node[anchor=north east,inner sep=1.9pt,yshift=0.8pt,xshift=-0.5pt] (labelv) at (0.63,-0.5) {\small $v$};

\draw[testedge] (0.5, 0) -- (0.63,-0.5);

\draw[vtx]  (1.6,1) circle (0.02 cm);

\draw[vtx]  (1,0.9) circle (0.02 cm);
\draw[uncutedge] (1,0.9)-- (1.3,0.5);
\draw[vtx]  (1.3,0.5) circle (0.02 cm);
\draw[vtx]  (1,0.7) circle (0.02 cm);
\draw[vtx]  (0.7,0.6) circle (0.02 cm);
\draw[uncutedge](0.7,0.6)--(0.5,0.5);
\draw[vtx]  (0.5,0.5) circle (0.02 cm);
\draw[vtx]  (0.9,0.2) circle (0.02 cm);
\draw[vtx]  (0.2,0.1) circle (0.02 cm);
\draw[vtx]  (0.2,-0.2) circle (0.02 cm);
\draw[vtx]  (0.9,-0.4) circle (0.02 cm);
\draw[vtx]  (0.9,0.5) circle (0.02 cm);

\draw[uncutedge]  (0.5,0)--(0.9,0.5);
\draw[uncutedge]  (0.5,0)--(0.9,-0.4);

\end{scope}

\node[anchor=north] (label) at (1, -1.1) {\small $\color{rcol}V_r$};

\end{scope}

\end{tikzpicture}
}
\caption{
 Consider a loosely connected graph $G=(V_r\cup V_b\cup V_c,E)$, and for any $e=(u,v)\in E$, let  $f_e$ be as defined in  the JVV reduction, and $u,v\in V_r$.
Let $G'$ be the graph obtained by removing the edges connecting $V_r$ to the rest of the graph. 
We show that only a negligible mass of stationary traces of $\M_{G'}$ have zero probability in $\M_{G}$, and the remaining traces of $\M_{G'}$ can be perfectly coupled to identical traces in $\M_{G}$.
We then use projection chain of $\M_{G'}$ onto equivalence classes defined by the image of $f_e$. Note that the transition probabilities of this projection chain
are  identical to projection of $(\frac{\abs{V_r}}{n}){\M}_{V_r}$ onto equivalence classes defined by the image of $f_e$, where $(\frac{\abs{V_r}}{n}){\M}_{V_r}$ is a Markov chain looping w.p. $(1-(\frac{\abs{V_r}}{n}))$ and otherwise transitioning through  ${\M}_{V_r}$. Using this observation, we show $\ITRelVar(\M_G ,f_e)\leq \left(\frac{\TRel(\M_{G'})}{\TRel(\M_G)}\right)\ITRelVar(\M_{G'},f_e)=o(\SVar)$. 
}
 \label{fig:loosefig}
\end{figure}
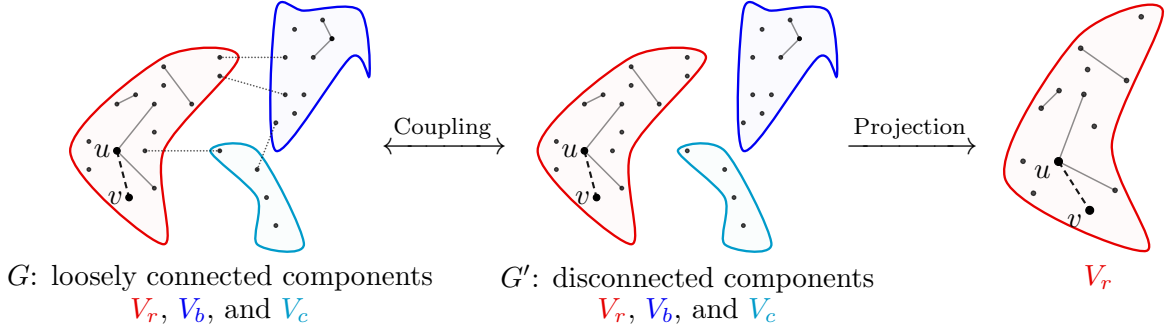
For an arbitrary graph $G$ and fixed $k$, let $\M_G$ be the  Glauber dynamics chain on $k$-colorings, and $\{f_i\}_{i=1}^{\cal E}$ be the functions appearing in Jerrum's reduction whose mean is to be estimated, where ${\cal E}=\abs{E}$, the number of edges in $G$.  We identify a class of graphs, which we call \emph{loosely connected} (see \cref{def:looselyconnected}), and we prove for any $G$ in this class that $\ITRelVar(\M_G,f_i)=o(\SVar)$ (see \cref{lemm:loosely}).

Using these results, we then prove that for $G$ sampled from the planted partition model with $r$ communities we have: $\ITRelVar=O(\SVar)/r$. Since JVV reduction needs high precision mean estimations, as it needs a union bound over all intermediate steps, using \algo{} is indeed impactful. In particular it reduces the complexity of the $k$-coloring fpras of \cite{jurrumSamplingiscounting} for planted partition from $\tilde{O}\left(n^2{\cal E}^3/\varepsilon^2\right)$ down to  $\tilde{O}\left(n^2\left({\cal E}^2(\frac{1}{\varepsilon})+({\cal E}^3/r)\frac{1}{\varepsilon^2}\right)\right)$, where $n=\abs{V}$ and ${\cal E}=\abs{E}$ and $r$ is the number of communities (see \cref{thm:plantedpartition}) .

Figure 2 
illustrates the main ideas constituting the proof presented in detail in  \cref{sec:plantedap}.

\newpage

\newcommand{\algrule}[1][.2pt]{\par\vskip.5\baselineskip\hrule height #1\par\vskip.5\baselineskip}


\newcommand{\VBoundVarUpper}{\bm{u}}
\newcommand{\VBoundRange}{\bm{r}}
\newcommand{\VBoundMean}{\bm{\mu}}
\newcommand{\VBoundVar}{\bm{v}}

\newcommand{\VERawMoment}{\bm{a}}
\newcommand{\VBlockAvg}{\bm{b}}
\newcommand{\VEDRMean}{\bm{U}}
\newcommand{\VEDRVar}{\bm{V}}
\newcommand{\VEVar}{\hat{\bm{v}}}
\newcommand{\VEMean}{\hat{\bm{\mu}}}
\begin{algorithm}[htbp]
\algrenewcommand\algorithmicindent{1.1em}
\begin{algorithmic}[1]
\small

\Procedure{\DMCMC{}}{%
  $(x_{0}, x_{1})$, $\M$, $\EigTwoBound$, 
  $f$, {$\varepsilon$, $\delta$}} $\null \mapsto \hat{\mu}$ 

\State \textbf{Input:}
arbitrary initial states $x_{0}, x_{1} \in \X$ ,
 \emph{Markov chain} $\M$ over $\X$, \emph{second absolute eigenvalue upper-bound} $\EigTwoBound$, 
\emph{function} $f: \X \to [a,b]$ with $\frange \doteq b - a$,
\emph{confidence interval radius} $\varepsilon$,
and
\emph{failure probability} $\delta \in (0, 1)$.
\State \textbf{Output:}
Additive $\varepsilon$, $\delta$ approximation $\hat{\mu}$ of $\mu=\Expect_{\pi}[f]$.

\hrulefill\smallskip



\State $\displaystyle(X_{0,1}, X_{0,2}) \distributed (\M \otimes \M)^{\TUnif}(x_{0}, x_{1})$ \Comment{Before collecting samples for mean estimation run two  independent copies of $\M$, each $\displaystyle\TUnif \gets \ceil{\frac{\ln \mathsmaller{\frac{1}{\PiMinBound}}}{\ln \mathsmaller{\frac{1}{\EigTwoBound}}}}$ steps
\footnote{
Starting at Nonstationarity. 
The MCMC Bernstein bound and McDiarmid bound we use here assume samples from stationary distribution. Often we can't assume even a single (perfectly) stationary sample may be efficiently drawn, and realistically can only start with an \emph{arbitrary} $x \in \Support(\pi)$.
Thus we use  standard \emph{warm start technique} (widely used in MCMC algorithms), where the chain is run from an arbitrary point for its \emph{uniform mixing time}\todo{cite}, and then \algo{} is run, applying a standard \emph{nonstationarity correction} (see~\cref{remark:stationarity}) to account for the nonstationary start.
}}\cyrus{$\delta$ problem}

\State
$\NIterations \gets \max\Bigl(1, \floor{\log_{2} \bigl( \smash{ 
  \frac{\frange}{2\varepsilon}
  } \bigr)} \Bigr)$; \ 
$\displaystyle\alpha \gets \frac{(1 + \EigTwoBound)\frange\ln \frac{3\NIterations}{\delta}}{(1 - \EigTwoBound)\varepsilon}$; \ 
$m_{0} \gets 0$ \Comment{Initialize \emph{sampling schedule} 
}\label{alg:niter}\label{alg:alpha}


\if 0 

\If{$\NIterations < 1$} \Comment{Weak guarantee requested; fall back to static Hoeffding}

\State $\NIterations \gets 1; \alpha \gets \mathsmaller{\frac{1}{2}}m_{H}(\EigTwoBound, \frange, \varepsilon, \delta 
)$ \Comment{Single-step schedule}

\EndIf

\fi









\newcommand{\ldt}{\ln({\frac{3\NIterations}{\delta}})} 

\For{$i \in 1, 2, \dots, \NIterations $}
\State $m_i \gets \left\lceil \alpha 2^i \right\rceil$ \Comment{Total sample count at iteration $i$}\label{alg:ss} 


\if 0
\State $\forall j \in 
  (m_{i-1} \! + \! 1), \dots, m_{i} 
  \!: \, (X_{j,1}, X_{j,2}) \sim (\M \otimes \M)(X_{j-1,1}, X_{j-1,2})$
\Comment{Run both chains up to step $m_{i}$}%
\label{alg:run-chain}
\fi

\For{$j \in (m_{i-1} + 1), \dots, m_{i}$}

\State $(X_{j,1}, X_{j,2}) \sim (\M \otimes \M)(X_{j-1,1}, X_{j-1,2})$
\Comment{Run both chains up to step $m_{i}$}
\EndFor


\if 0
\For{$k \in m_{i-1}+1, \dots, m_{i}$}

\vspace{-0.08cm}

\State $\displaystyle 
\bm{s}_{k} \gets {\frac{1}{2T} \sum_{j=1}^{T} (f \oplus f)(X_{Tk+j,1},X_{Tk+j,2})}$ \Comment{Compute trace-average samples} \label{alg:tavg}

\vspace{-0.18cm}

\State $\displaystyle 
\bm{r}_{k} \gets { \frac{1}{2}\biggl( \frac{1}{T} \sum_{j=1}^{T} (f \ominus f)(X_{Tk+j,1}, X_{Tk+j,2})\biggr)^{2} }$ \Comment{Compute trace-variance samples} \label{alg:tavg}

\vspace{-0.25cm}

\EndFor

\vspace{-0.1cm}
\fi



\State $\displaystyle\VEMean_{i} \gets \mathsmaller{\frac{1}{2{m_i}}} \sum_{j=1}^{m_i} \bigl( f(X_{j,1}) + 
f(X_{j,2})\bigr)$ \Comment{Empirical mean}\label{alg:emean} 

\State $\displaystyle\VEVar_{i} \gets \mathsmaller{\frac{{1}}{2{m_i}}} \sum_{j=1}^{m_i} \bigl(f(X_{j,1}) - 
f(X_{j,2})\bigr)^{2}$ \Comment{Empirical variance}\label{alg:evar} 



\State $\displaystyle \VBoundVarUpper_{i} \gets \VEVar_{i} + \frac{(11 + \sqrt{21})(1 + 
  \nicefrac{\EigTwoBound}{\sqrt{21}}
) \frange^{2} \ln \frac{3\NIterations}{\delta}}{(1-\EigTwoBound)m_{i}} + \sqrt{\frac{(1 + \EigTwoBound)\frange^{2} \VEVar_{i}\ln \frac{3\NIterations}{\delta}}{(1 - \EigTwoBound)m_{i}}}
$\Comment{Variance upper bound}\label{alg:mcd-var}\label{alg:2chain-var}

\State $\displaystyle \hat{\bm{\epsilon}}_{i} \gets \frac{10\frange\ln \frac{3\NIterations}{\delta}}{(1-\EigTwoBound)m_{i}} + \sqrt{\frac{(1 + \EigTwoBound) \VBoundVarUpper_{i}\ln \frac{3\NIterations}{\delta}}{(1 - \EigTwoBound)m_{i}}}$ \Comment{Apply Bernstein bound}\label{alg:bern}

\vspace{0.1cm}

\If{$(i = \NIterations) \vee (\hat{\bm{\epsilon}}_{i} \leq \epsilon)$} \Comment{Terminate if accuracy guarantee is met}\label{alg:tc} 
\State \Return $\VEMean_{i}$ \label{alg:return}
\EndIf

\EndFor
\EndProcedure





\algrule[0.6pt]

\Procedure{\textsc{\algo{}}}{$(x_{0}, x_{1})$, $\M$, $\EigTwoBound$\shahrzad{check}, $f$, {$\varepsilon$, $\delta$}} $\null \mapsto \hat{\mu}$


\State \textbf{Input:}
Initial state $(x_{0}, x_{1}) \in \X \times \X$, 
\emph{lazy Markov chain} $\M$ over $\X$,\todo{need lazy? for exponentially decreasing $\EigTwoBound$?} \emph{second absolute eigenvalue upper-bound} $\EigTwoBound$, 
\emph{function} $f: \X \to [a,b]$, 
\emph{confidence interval radius} $\varepsilon$,
and
\emph{failure probability} $\delta \in (0, 1)$.
 \State \textbf{Output:}
Additive $\varepsilon$, $\delta$ approximation $\hat{\mu}$ of $\mu=\Expect_{\pi}[f]$.

\hrulefill\medskip

\State $T \gets 
\ceil{\frac{1 + \EigTwoBound}{1 - \EigTwoBound} \ln\sqrt{2} }$ \Comment{Select $T$ s.t.\ $\TRel(\M^{(T)}) \leq 2$}





\State $(X_{0}, X_{1}) \gets \bigl( ( \underbrace{s_{1}, s_{2}, s_{3}, \dots, s_{T-1}}_{\mathclap{\textsc{Arbitrary } s_{1:T-1} \in \S^{T-1}}}, x_{0} ), ( \underbrace{s_{1}, s_{2}, s_{3}, \dots, s_{T-1}}_{\mathclap{\textsc{Arbitrary } s_{1:T-1} \in \S^{T-1}}}, x_{1} ) \bigr)$ \Comment{Initialize trace chain state}\label{alg:dyna:init-trace}

\if 0
\For{$i \in 2, \dots, T$}

\State $x'_{0,i} \gets \M(x_{0,i-1})$; $x'_{1,i} \gets \M(x_{1,i-1})$ \Comment{Extend initial samples to stationary trace chain}\cyrus{Doesn't quite work}

\EndFor

\State $(X_{0}, X_{1}) \gets \Bigl( \bigl( x_{0}, x'_{0,2}, x'_{0,3}, \dots, x'_{0,T} \bigr), \bigl( x_{1}, x'_{1,2}, x'_{1,3}, \dots, x'_{1,T} \bigr) \Bigr)$ \Comment{Initialize stationary trace chain}\label{alg:dyna:init-trace}
\fi


\State \Return $\DMCMC\bigl((X_{0}, X_{1}), \M^{(T)}, \EigTwoBound^{T}, \Favg^{(T)}, {\varepsilon, \delta}\bigr)$ \Comment{Run \DMCMC{} on trace chain}\label{alg:dyna:return}

\EndProcedure
\end{algorithmic}
\caption{\DMCMC{} and \algo{} routines}

\label{alg:1}
\end{algorithm}

\appendixtrue

\xapptocmd{\beginrestatable}{\renewcommand{\label}[1]{}}{}{}

\newpage 

\appendix


\providecommand{\VBoundVarUpper}{\bm{u}}
\providecommand{\VBoundRange}{\bm{r}}
\providecommand{\VBoundMean}{\bm{\mu}}
\providecommand{\VBoundVar}{\bm{v}}

\providecommand{\VERawMoment}{\bm{a}}
\providecommand{\VBlockAvg}{\bm{b}}
\providecommand{\VEDRMean}{\bm{U}}
\providecommand{\VEDRVar}{\bm{V}}
\providecommand{\VEVar}{\hat{\bm{v}}}
\providecommand{\VEMean}{\hat{\bm{\mu}}}


\section{A Compendium of Complementary Material}
\label{sec:missingproofs}

We now present all the details backing our results including proofs and rigours definitions for newly developed concepts.
Some of these proofs are involved and are broken to a proof sketch describing ideas and intuitions accompanied by a full proof with  rigorous details. 

In \cref{sec:thms} we state the bounds, theorems, and algorithms which exist in the literature and are used in our proofs.

We have provided a table of contents at the end of this paper.

\subsection{Intra-Trace variance properties}\label{proofs:trace}

\begin{defin}
Consider $\vec{X}_{1:T}:X_1, X_2,\dots X_T$.
We define $C_i$  to be  the lag $i$-autocovariace, i.e., the autocovariace of two steps of $\M$ (at stationarity) being $i$ apart. i.e., $C_{i} \doteq \Cov(X_{1}, X_{1+i})$.
\end{defin}

Note that for i.i.d samples the autocovariance of any pair of samples $X_i$ and $X_j$ is zero.
We do not have independence here  nevertheless  for reversible  Makrov chains (See for example Equation 12.9 from \cite{levin2017markov}) we have $C_{i} \leq \EigTwo^{i}\sqrt{\Var_{\pi}[f]\Var_{\pi}[f]}$.
The following lemmas will be used throughout:

\begin{lemma}\label{lem:autocov}
Suppose $\vec{X}_{1:T}:X_1, X_2,\dots X_T$ is a trace of length $T$ of $\M$. using the above definition for $C_i$, the trace variance are related as 
\begin{equation}
\label{eq:autocov}
 \ITVar = \frac{1}{T} \SVar+ \frac{2}{T^2}\sum_{i=1}^{T-1}(T - i) C_{i} \enspace.
\end{equation}
\end{lemma}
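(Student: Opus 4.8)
The plan is a direct second-moment expansion of $\Favg$ under the stationary trace distribution $\pi^{(T)}$, using only stationarity and the definition of the autocovariances $C_i$. Recall that $\Favg(\vec X_{1:T}) = \frac{1}{T}\sum_{i=1}^T f(X_i)$ and, by linearity of expectation, $\Expect_{\vec X \distributed \pi^{(T)}}[\Favg(\vec X)] = \mu$, so that $\ITVar = \Var[\Favg] = \frac{1}{T^2}\Var\bigl[\sum_{i=1}^T f(X_i)\bigr]$.

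First I would expand the variance of the sum into the full double sum of covariances,
\[
\Var\Bigl[\textstyle\sum_{i=1}^T f(X_i)\Bigr] = \sum_{i=1}^T\sum_{j=1}^T \Cov\bigl(f(X_i), f(X_j)\bigr) \enspace .
\]
The key observation is that under $\pi^{(T)}$ the chain is stationary (since $X_1 \distributed \pi$), so the joint law of $(X_i, X_j)$ depends only on $|i-j|$; hence $\Cov(f(X_i),f(X_j)) = C_{|i-j|}$, where $C_0 = \Var_\pi[f] = \SVar$ and $C_i$ is as in the preceding definition. Splitting the double sum into the diagonal $i=j$ and the off-diagonal terms, and using symmetry $C_{|i-j|}=C_{|j-i|}$, gives $\sum_{i,j}C_{|i-j|} = T\,\SVar + 2\sum_{i<j}C_{j-i}$.

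Next I would re-index the off-diagonal sum by the gap $d = j-i \in \{1,\dots,T-1\}$; for each fixed $d$ there are exactly $T-d$ ordered pairs $(i,j)$ with $i<j$ and $j-i=d$, so $\sum_{i<j}C_{j-i} = \sum_{d=1}^{T-1}(T-d)C_d$. Substituting back and dividing by $T^2$ yields
\[
\ITVar = \frac{1}{T}\SVar + \frac{2}{T^2}\sum_{d=1}^{T-1}(T-d)C_d \enspace ,
\]
which is \cref{eq:autocov}. There is no real obstacle here — the only point requiring care is the appeal to stationarity to reduce $\Cov(f(X_i),f(X_j))$ to a function of $|i-j|$ alone, and the combinatorial count of how many index pairs share a given gap; everything else is a routine rearrangement. (If one wanted to avoid assuming $X_1\distributed\pi$, one could instead note that this identity is purely a statement about the stationary process and that $\pi^{(T)}$ is by definition the stationary trace law.)
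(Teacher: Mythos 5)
Your proof is correct and follows exactly the same route as the paper's: expand $\ITVar$ as the second moment of the centered trace average, open it into the double sum of covariances, and collect diagonal ($T\SVar$) and off-diagonal ($2\sum_{d=1}^{T-1}(T-d)C_d$) terms using stationarity. The paper compresses this into a single displayed chain, whereas you spell out the appeal to stationarity and the gap-count combinatorics, but the argument is identical.
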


\begin{proof}

The trace variance is
\begin{align*}
\ITVar =
   \Expect\left[\left(\frac{1}{T}\sum_{i=1}^{T} (f(X_{i}) - \mu) \right) ^2\right] &= \Expect\left[\frac{1}{T^{2}}\sum_{i=1}^{T}\sum_{j=1}^{T} (f(X_{i}) - \mu)(f(X_{j}) - \mu) \right]\\
   &= \frac{1}{T} \SVar+ \frac{2}{T^2}\sum_{i=1}^{T-1}(T - i) C_{i}. 
\end{align*}

\end{proof}

\shahrzad{restatable: Labels being lost?}
We now show \cref{lemma:tvar-prop}.
{\renewcommand{\label}[1]{}
\lemmatvarprop*
}
\begin{proof}
We first show \eqref{eq:tvdec}.
First note that
\[
T\ITVn{T} = \SVar + 2\sum_{i=1}^{T-1} \frac{T-i}{T} C_{i} \enspace.
\]
Laziness implies nonnegativity of each covariance term $C_{i}$, and the weight $\max(0, \frac{T-i}{T}) = \max(0, 1 - \frac{i}{T})$ assigned to each $C_{i}$ in $T\ITVn{T}$ is monotonically increasing in $T$.

\bigskip

We now show \eqref{eq:tvtinc}.
We first show that laziness and reversibility implies $C_{i}$ is monotonically decreasing $\forall i \geq 0$.
To see this, consider again the spectral decomposition
\[
C_{t} = \int \lambda_{i}^{\lvert t \rvert} \ \mathrm{d} E_{f}(\lambda_{i}) \enspace,
\]
and note that each $\lambda_{i}^{\abs{t}}$ in the integral is \emph{decreasing} in $t$ (as \emph{laziness} implies $\lambda_{i} \in [0, 1]$, and $\forall x \in [0, 1]: x^{t+1} \leq x^{t}$), thus by linearity of the integral, $C_{i}$ is monotonically decreasing.

Now, as we have shown that $C_{i}$ is monotonically decreasing, and for all $t$, the cumulative weight of $C_{1:t}$ in $\ITVn{T}$ always exceeds the corresponding weight in $\ITVn{T+1}$, we may conclude
\[
\ITVn{T+1} - \ITVn{T} \leq 0 \enspace,
\]
which implies \eqref{eq:tvtinc}.

\bigskip

We now show \eqref{eq:avar-lb}.

The Bernstein's inequality applied to  $f_{\rm avg}$ on blocks of size $\TRel$, 
imply that (see \cref{thm:bernstein})
\[
m \leq 
m_{1}
= \ceil{\TRel(\M^{(T)})}T\left(\frac{10\frange}{\varepsilon} + \frac{4\ITVar}{\varepsilon^{2}}\right) \ln \frac{2}{\delta} 
 \leq C_{1}T\left(\frac{10\frange}{\varepsilon} + \frac{4\ITVar}{\varepsilon^{2}}\right) \ln \frac{2}{\delta} \enspace, 
\]
where $C_{1}$ is the (constant) upper-bound to 
$\ceil{\TRel(\M^{(T)})}$,
are \emph{sufficient} to $\varepsilon$-$\delta$ estimate $\Expect[f]$ with the empirical mean.

The Markov-chain central limit theorem, together with standard Gaussian anticoncentration inequalities, imply that 
\[
m \geq m_{2} 
=
\frac{C_{2}\AVar}{\varepsilon^{2}} \ln \frac{1}{\delta} \enspace,
\]
where $C_{2} \geq 1$ 
is an absolute constant, 
are \emph{necessary} (asymptotically) to $\varepsilon$-$\delta$ estimate $\Expect[f]$ with the empirical mean.

We thus have that, asymptotically, $m_{1} \geq m \geq m_{2}$, from which we derive
\[
1 \leq \lim_{\varepsilon \to 0} \lim_{\delta \to 0} \frac{m_{1}}{m_{2}} = \frac{4C_{1}T\ITVar}{C_{2}\AVar} \implies \frac{C_{2}}{4C_{1}} \AVar \leq T\ITVar \enspace,
\]
which concludes the proof.

\end{proof}

Rabinovich et al. \cite{functionMixingRabinovish} define function specific mixing and relaxation times using the spectral decomposition of transition matrix of $\M$. We restate this definition here and using it we extend \cref{lemma:tvar-prop}.
\begin{defin}[Function specific relaxation time, $\TfRel(\M,f)$]\label{def:trefmix}
Given  $f$ and Markov chain $\M$ defined on $\X$, Let $P$ be the transition matrix of $\M$ and consider all eigenvectors of $P$ which are orthogonal to $f$. Let $\lambda_1=1,\lambda_2,\lambda_3,\dots ,\lambda_k$ be the eigenvalues corresponding to the remaining eigen-vectors. We define $\TfRel(\M,f)=\frac{1}{1-\lambda_*}$, where $\lambda_*=\max_{i=2:k}\{\abs{\lambda_i}\}$.
\end{defin}

It is now difficult to see the following corollary, which extends \cref{lemma:tvar-prop} using the function specific relaxation time.

\begin{coro}\label{coro:fmix}
Suppose as in \cref{lemma:tvar-prop}.
Then
\begin{equation*}
\SVar \leq 2\ITVn{2} \leq 3\ITVn{3}\leq \dots  \leq \lim_{i\to \infty} i\ITVn{i}\leq (2\TfRel - 1)\SVar \enspace.
\end{equation*}
%
%

Furthermore, there exists some absolute constant $c > 0$, such that for any $\M$, $f$, we have
\begin{equation*}
\tau \ITVn{\tau} \geq c \tau' \ITVn{\tau'}\geq c \lim_{i \to \infty} i\ITVn{i} \enspace, \quad \text{ for any $\tau' \geq \tau \geq \TfRel(\M)$\enspace.}
\end{equation*}

\if 0 
Furthermore, there exists some absolute constant $c > 0$, such that for any $\M$, $f$, we have
\begin{equation*}
\label{eq:avar-lb}
\tau \ITVn{\tau} \geq c \tau' \ITVn{\tau'}\geq c \lim_{i \to \infty} i\ITVn{i} \enspace, \quad \text{ for any $\tau' \geq \tau \geq \TRel(\M)$\enspace.}
\end{equation*}
\fi
\end{coro}
\begin{proof}
Proof of the first two inequalities is essentially identical to that of \cref{lemma:tvar-prop}, where we use the fact that in the spectral decomposition, eigenvalues orthogonal to $f$ have 0 weight in the integral.
We thus ignore all eigenvalues orthogonal to $f$.

In particular, for reversible $\M$, the spectral decomposition allows us to bound the autocovariance as
\[
\abs{\AutoCov_{t}(f)}
  = \abs{\int \lambda^{|t|} \mathrm{d}E_{f}(\lambda)}
  \leq \EigTwo^{|t|}(f) \SVar
  \enspace,
\]
where $\EigTwo^{|t|}(f)$ is the second largest absolute eigenvalue not orthogonal to $f$.
From here, all bounds on inter-trace variances follow as before, now using this inequality in place of the standard spectral decomposition inequality for covariances.
\end{proof}

\subsubsection{Bounding trace variance using projection chains: Cycle}\label{sec:cycle}

We now seek to show \cref{lem:cycle}, restated below.
Recall here that for $1\leq i\leq \frac{n}{2} $, we take $f_i:[n]\rightarrow \{0,1\}$ be $f_i(x)=0$ if and only if $x ~ {\rm mod}~ 2i< i$, so $f_i$'s image on the cycle is consecutive length-$i$ runs of $0$s and $1$s (see \cref{fig:cycle}).
\cycle*



\begin{proof}[Proof]
We first prove that $\ITVn{\tau}(f_i)\leq \Theta(\nicefrac{i^2}{\tau})$. Note that  $x \mapsto x$ mod $2i$ partitions the set $[n]$ into $2i$ partitions\cyrus{$n / 2i$?} (see \cref{fig:cycle}), and the relaxation time of the projection chain ${\cal C}_{f_i}$ is $\Theta(i^2)$.
Since $f_i$'s image on traces of $f_i$ on $\cal C$ and ${\cal C}_{f_i}$ is distributed identically, for $\tau=\TRel(\Cycle)$ by applying  \cref{lemma:tvar-prop}, we get $\ITVn{\tau}(\Cycle_{f_i},f_{i}) \leq 
\frac{\TRel(\Cycle_{f_i})}{\tau}\SVar$. Replacing the known we get: 
$\ITVn{\tau}(\Cycle,f_{i})=
\ITVn{\tau}(\Cycle_{f_i},f_{i})\leq 
\Theta(\nicefrac{i^2}{n^2})$.
We now proceed to prove that $\ITVn{\tau}(\Cycle,f_{i})=
\ITVn{\tau}(\Cycle_{f_i},f_{i})\geq \Theta(\nicefrac{i^2}{n^2})$\cyrus{Sophia's confusion. Use doubling inequalities?}.

\emph{A Sketch.} Let $S_0=\{i/3+1, i/3+2 ,\dots , 2i/3 \}\subseteq [n]$.  we show that  any trace of length at most $i^2/9$ starting at $S_0$, or any other middle point of other \emph{monochromatic} regions, have low probability of escaping from it 
(see Figure \ref{fig:2cycle}). 

Using this we show that the $\nicefrac{i^2}{9}$-trace variance conditioned on starting at one of these middle sub-regions is $\Theta(1)$. 
Having a bound on $\ITVn{\nicefrac{i^2}{9}}$, we use \cref{lemma:tvar-prop}  and conclude the premise.  The next paragraph presents proof details. 


\begin{SCfigure}[1.9][ht]

\centering 
\if 0
\begin{tikzpicture}[scale=1.5]

\foreach \i[evaluate=\i as \j using {\i + 1}] in {1,2,...,64} {
 \draw[thick] ({sin(\i / 64 * 360)},{cos(\i / 64 * 360)}) --  ({sin(\j / 64 * 360)},{cos(\j / 64 * 360)});
}

\node[anchor=north] (label) at (0, -1) {$f_{n/8}$};

\foreach \i in {1,2,...,8} {
\filldraw[blue]  ({sin(\i / 64 * 360)},{cos(\i / 64 * 360)}) circle (\circsize);
}
\foreach \i in {9,10,...,16} {

\filldraw[red]  ({sin(\i / 64 * 360)},{cos(\i / 64 * 360)}) circle (\circsize);
}

\foreach \i in {17,18,...,24} {

\filldraw[blue]  ({sin(\i / 64 * 360)},{cos(\i / 64 * 360)}) circle (\circsize);
}

\foreach \i in {25,26,...,32} {

\filldraw[red]  ({sin(\i / 64 * 360)},{cos(\i / 64 * 360)}) circle (\circsize);
}
\foreach \i in {33,34,...,40} {

\filldraw[blue]  ({sin(\i / 64 * 360)},{cos(\i / 64 * 360)}) circle (\circsize);
}
\foreach \i in {41,42,...,48} {

\filldraw[red]  ({sin(\i / 64 * 360)},{cos(\i / 64 * 360)}) circle (\circsize);
}

\foreach \i in {49,50,...,56} {

\filldraw[blue]  ({sin(\i / 64 * 360)},{cos(\i / 64 * 360)}) circle (\circsize);
}
\draw[rotate=25] (0.98,-0.1) ellipse (0.1cm and 0.15cm);

\draw[rotate=335] (0.98,0.1) ellipse (0.1cm and 0.15cm);

\draw[rotate=330] (0,1) ellipse (0.15cm and 0.1cm);
\draw[rotate=13] (-0.1,1) ellipse (0.15cm and 0.1cm);

\foreach \i in {57,58,...,64} {

\filldraw[red]  ({sin(\i / 64 * 360)},{cos(\i / 64 * 360)}) circle (\circsize);
}
\draw[rotate=-20] (-0.98,0) ellipse (0.1cm and 0.15cm);
\draw[rotate=20] (-1,0) ellipse (0.1cm and 0.15cm);

\draw[rotate=330] (0,-1) ellipse (0.15cm and 0.1cm);
\draw[rotate=13] (0.1,-1) ellipse (0.15cm and 0.1cm);

\end{tikzpicture}
\fi

\if 0
\def\k{6}
\def\kh{\k/2}
\def\jn{6}
\def\n{36}
\fi
\newcommand{\plotcycle}[4]{
\def\k{#1}
\def\kh{#2}
\def\jn{#3}
\def\n{#4}
\foreach \i[evaluate=\i as \j using {\i + 1}] in {1,2,...,\n} {
 \draw[thick] ({sin(\i / \n * 360)},{cos(\i / \n * 360)}) --  ({sin(\j / \n * 360)},{cos(\j / \n * 360)});
}
\foreach \i in {1,2,...,\kh} {
  \foreach \j in {1,2,...,\jn} {
    \filldraw[blue] ({(sin(((\i * 2 + 0) * \jn + \j) / \n * 360)},{(cos(((\i * 2 + 0) * \jn + \j) / \n * 360)}) circle (0.03cm);
    \filldraw[red]  ({(sin(((\i * 2 + 1) * \jn + \j) / \n * 360)},{(cos(((\i * 2 + 1) * \jn + \j) / \n * 360)}) circle (0.03cm);
  }
  \draw[color=blue]
  ({(sin((\i * 2 + 0 + 0.5 * (1 + 1 / \jn)) / \k * 360)},{(cos((\i * 2 + 0 + 0.5 * (1 + 1 / \jn)) / \k * 360)}) circle (0.18cm); 
  \draw[color=red]
  ({(sin((\i * 2 + 1 + 0.5 * (1 + 1 / \jn)) / \k * 360)},{(cos((\i * 2 + 1 + 0.5 * (1 + 1 / \jn)) / \k * 360)}) circle (0.18cm); 

}
}

\begin{tikzpicture}[scale=1.0]

\plotcycle{6}{3}{6}{36}
\node[anchor=north] (label) at (0, -1) {$f_{\nicefrac{36}{6}}$};

\begin{scope}[xshift=2.5cm]

\plotcycle{6}{3}{12}{72}
\node[anchor=north] (label) at (0, -1) {$f_{\nicefrac{72}{6}}$};

\end{scope}

\begin{scope}[xshift=5cm]

\plotcycle{6}{3}{36}{216}
\node[anchor=north] (label) at (0, -1) {$\lim\limits_{n \to \infty} f_{\nicefrac{6n}{6}}$};

\end{scope}

\end{tikzpicture}

\caption{
Image of $f_{\nicefrac{n}{6}}$ on a cycle for various values of $n$, zero values are colored in red, and one values in blue.
The trace variance for trace length shorter than $(n/54)^2$, conditioned starting at $S=\bigcup_{k=1}^6 S_k$ (circled regions), is at least constant.
}\label{fig:2cycle}
\end{SCfigure}

\newcommand{\CTS}{\nicefrac{i^{2}}{9}}

\emph{Proving $  \ITVn{\CTS}(\Cycle, f_i)\geq \Theta(1)$}.
We assume WLOG that $i$ divides $3$: this is so we may analyze an convenient integer-length trace, but the remaining cases hold with constant-factor differences.
Note that $\TRel \in \LandauTheta(n^{2})$, and our proof operates by analyzing $I$-traces, for $I \doteq \CTS$, and then applying trace-variance inequalities to draw the desired conclusions.

Assume $X_1,X_2,\dots , X_{i^2}$ is a trace of the unbiased walk on the cycle, and define $Y_k=X_{k+1}-X_k$.
Thus, $X_k=\sum_{j=1}^{k-1}Y_j+s_0$ where $s_0$ 
is the starting point. 
Note that $Y_j$ is a symmetric random variable (thus has equal mean and median), thus, we shall use to apply L\'evy's inequality. 
The proof strategy here is to lower-bound $\ITVn{\CTS}$ by showing that a constant fraction of stationary traces in $\Cycle^{(\CTS)}$ see only one color.
We call such traces \emph{homogeneous}, and note that for such traces $\vec{X}$, we have $(\Favg(\vec{X}) - \mu)^{2} = \frac{1}{4}$ (as $\Favg(\vec{X}) \in \{0, 1\}$, and $\mu = \frac{1}{2}$), and from there we bound trace variances as appropriate.

We begin with a key step in deriving a lower-bound on the proportion of such homogeneous traces.
In particular, take $S_0 \doteq \{ \nicefrac{i}{3}+1, \nicefrac{i}{3}+2, \dots, \nicefrac{2i}{3} \} \subseteq [n]$, and similarly take $S_{k} \doteq \{ ki + \nicefrac{i}{3}+1, ki +  \nicefrac{i}{3}+2, \dots, ki + \nicefrac{2i}{3} \} \subseteq [n]$, in other words each $S_{k}$ is the \emph{middle third} of the $k$th contiguous color region.
Finally, take $S \doteq \bigcup_{k=1}^{n/i} S_k$.
These regions are depicted graphically in \cref{fig:2cycle}.


Let $\displaystyle \SD(\vec{Y}) \doteq \max_{k \in 1, \dots, \CTS} \abs{ \sum_{j=1}^{I} Y_{k}}$, i.e., $\displaystyle \SD(\vec{Y}) = \max_{k \in 1, \dots, \CTS} \Delta_{\circ}(X_{1}, X_{k})$, for $\Delta_{\circ}$ the shortest-path distance on the 
cycle, and observe
\begin{align*}
\Prob \left( \SD(\vec{Y}) \geq \frac{i}{3} \right) &\leq 2\Prob \left( \abs{ \sum_{j=1}^{\CTS} Y_j } \geq \frac{i}{3} \right)  & \textsc{L\'evy's inequality} \\
    &\leq 4\exp\left(\frac{-2 \left(\frac{i}{3}\right)^{2}}{ {i^{2}}/{9} }\right) & \textsc{Hoeffding's inequality} \\
    &= 4\exp(-2) \enspace. & \\
\end{align*}


From here, we decompose the trace variance and bound it as
\begin{align*}
\ITVn{I} &= \Expect_{\vec{Y}}
            \left( \frac{1}{\CTS}\sum_{j=1}^{\CTS} Y_{j} - \mu \right)^{2} & \textsc{By Definition } \\
 &\geq \Prob(X_{1} \in S) \Prob\!\left( \SD(\vec{Y}) \leq \frac{i}{3} \middle| X_{1} \in S \right) 
 \Expect_{\vec{Y} \mid \SD(\vec{Y}) \leq \frac{i}{3}} \left( \frac{1}{\CTS}\sum_{j=1}^{\CTS} Y_{j} - \mu \right)^{2} & \textsc{Law of Total Expectation} \\
 &\geq \frac{1}{3} \bigl(1 - 4 \exp(-2) \bigr) \frac{1}{4} > 0.03822 \enspace. & \begin{tabular}{r} $\Prob(X_{1} \in S) = \frac{1}{3}$ \\ \textsc{See Above} \\ $(0 - \mathsmaller{\frac{1}{2}})^{2} = (1 - \mathsmaller{\frac{1}{2}})^{2} = \mathsmaller{\frac{1}{4}}$ \\ \end{tabular} \!\!\! \\
\end{align*}

We thus conclude $\ITVn{I} \in \Theta(1)$, therefore via \cref{lemma:tvar-prop}
$\ITRelVar \in \LandauTheta(\nicefrac{i^{2}}{n^2})$.

\end{proof}

\subsection{An unbiased variance estimator}\label{sec:varestimatorproof}

 Recall from \cref{def:varest} that 

\[
\VED(\F,\M)\doteq \frac{1}{2m}\sum_{i=1}^m \left(\F(X_{1,i})-\F(X_{2,i})\right)^2  \enspace,
\]

\begin{proof}[Proof of lemma \ref{lemma:twochain}]
We first note that by the tensor product chain rule (see Ex.~12.6 of \cite{levin2017markov}), the \emph{spectral gap} $\SG$ of $\M \otimes \M$ equals that of $\M$.
We now define the function $g: \X \times \X \mapsto [0, \mathsmaller{\frac{1}{2}}\frange^{2}]$ as $g(x_1, x_2) \doteq \frac{1}{2}(x_{1} - x_{2})^{2}$.

We first show that $g$ is an \emph{unbiased estimator} of the variance of $f$, i.e., $\Expect_{\M \times \M}[g] = \mathbb{V}_{\pi}[\F]$:
\begin{align*}
\Expect[g] &= \Expect\left[\frac{1}{2}(X_{1} - X_{2})^{2}\right] & \textsc{Definition of $g$} \\
 &= \frac{1}{2} \left(\Expect[X_{1}^{2}] + \Expect[X_{2}^{2}] - 2\Expect[X_{1}X_{2}] \right) & \textsc{Linearity} \\
 &= \Expect[X_{1}^2] - (\Expect[X_{1}])^{2} & \textsc{Independence} \\
 &= \Var[X_{1}] = \Var_{\pi}[\F] & \textsc{Variance Properties} \\
\end{align*}

Note that $\VED(\vec{X}_1,\vec{X}_2)=\frac{1}{m}\sum_{i=1}^m\left(g(X_{1,i},X_{2,i})\right)$, thus it is immediate that $\Expect[\VED(\vec{X}_1,\vec{X}_2)]=\SVar$.

\smallskip

We now seek to apply the Bernstein bound to $g$ on $\M \otimes \M$.
Note that the \emph{range} of $g$ is $[0,\mathsmaller{\frac{1}{2}}\frange^{2}]$.
We require also a bound on the \emph{variance of the variance} $\Var[g]$.
We break the infinite regress here by noting that
\begin{align*}
\Var[g] &= \Expect[g^{2}] - (\Expect[g])^{2} & \\
 &\leq \Expect[g \cdot g] & \\
 &\leq \Expect[g \mathsmaller{\frac{1}{2}}\frange^{2}] \\
 &= \mathsmaller{\frac{1}{2}}\frange^{2}\Expect[g] \enspace. \\
 &= \mathsmaller{\frac{1}{2}}\frange^{2} \Var[\F]\enspace. \\
\end{align*}
Note that this is effectively the argument of Bernstein's inequality that removes higher moments (beyond the second) from the exponential-sum decomposition in the Chernoff-MGF bound.

Applying the Bernstein inequality (\cref{thm:bernstein}) 
then yields

\[
\Prob\left(\abs{ \Var[\F] -\VED }\geq  \varepsilon \right) \leq \delta \enspace,
\]
for
\begin{align*}
 \varepsilon &\leq \frac{10\frange^{2}\ln \frac{1}{\delta}}{2(1-\EigTwo)m} + \sqrt{\frac{2(1+\EigTwo) \Var[g]\ln \frac{1}{\delta}}{(1-\EigTwo)m}} & \operatorname{Range}(g) = \mathsmaller{\frac{1}{2}}\frange^{2} \\
 &\leq \frac{5\frange^{2}\ln \frac{1}{\delta}}{(1-\EigTwo)m} + \sqrt{\frac{(1+\EigTwo) \frange^{2}\Var[\F]\ln \frac{1}{\delta}}{(1-\EigTwo)m}} & \Var[g] \leq \mathsmaller{\frac{1}{2}}\frange^{2}\Var[\F] \\
\end{align*}
We now seek a form that depends only on the \emph{empirical variance}, which we derive via the quadratic formula.
\begin{align*}
 \varepsilon 
 &\leq \frac{(11 + \EigTwo)\frange^{2} \ln \frac{1}{\delta}}{2(1 - \EigTwo)m} + \sqrt{ \!\! \left(\frac{(1 + \EigTwo)\frange^{2}\ln \frac{1}{\delta}}{2(1 - \EigTwo)m}\right)^{2} \!\! + \frac{5\frange^{2}\ln \frac{1}{\delta}}{(1-\EigTwo)m} \cdot \frac{(1 + \EigTwo)\frange^{2}\ln \frac{1}{\delta}}{(1 - \EigTwo)m} + \frac{(1 + \EigTwo)\frange^{2} \VED\ln \frac{1}{\delta}}{(1 - \EigTwo)m}} \hspace{-1.76cm} & \begin{tabular}{r}\textsc{Quadratic} \\ \textsc{Formula} \\ \end{tabular} \\ 
 &= \frac{(11 + \EigTwo)\frange^{2} \ln \frac{1}{\delta}}{2(1 - \EigTwo)m} + \sqrt{ \!\! \left(\frac{(1 + \EigTwo)\frange^{2}\ln \frac{1}{\delta}}{2(1 - \EigTwo)m}\right)^{2} \!\! + \left(\frac{\sqrt{5(1+\EigTwo)}\frange^{2} \ln \frac{1}{\delta}}{(1 - \EigTwo)m}\right)^{2} + \frac{(1 + \EigTwo)\frange^{2} \VED\ln \frac{1}{\delta}}{(1 - \EigTwo)m}}  \hspace{-2cm} & \textsc{Algebra} \\
 &= \frac{(11 + \EigTwo)\frange^{2} \ln \frac{1}{\delta}}{2(1 - \EigTwo)m} + \sqrt{((2\sqrt{5})^{2} + (1 + \EigTwo))\! \left(\!\frac{\sqrt{1 + \EigTwo}\frange^{2}\ln \frac{1}{\delta}}{2(1 - \EigTwo)m}\!\right)^{\!2} \!\! + \frac{(1 + \EigTwo)\frange^{2} \VED\ln \frac{1}{\delta}}{(1 - \EigTwo)m}}  \hspace{-1.5cm} & ((2\sqrt{5})^{2} + (1 + \EigTwo)) = 21 + \EigTwo \\
 &\leq \frac{(11 + \EigTwo)\frange^{2} \ln \frac{1}{\delta}}{2(1 - \EigTwo)m} + \sqrt{ \left(\frac{(\sqrt{21} + \frac{11}{\sqrt{21}}\EigTwo)\frange^{2}\ln \frac{1}{\delta}}{2(1 - \EigTwo)m}\right)^{2} + \frac{(1 + \EigTwo)\frange^{2} \VED\ln \frac{1}{\delta}}{(1 - \EigTwo)m}}  \hspace{-1.5cm} & \hspace{-1cm} \sqrt{21 + \EigTwo}\sqrt{1 + \EigTwo} \leq \sqrt{21} + \frac{11}{\sqrt{21}}\EigTwo & \\
 &\leq \frac{(11 + \sqrt{21} + (1 + \mathsmaller{\frac{11}{\sqrt{21}}}) \EigTwo)\frange^{2} \ln \frac{1}{\delta}}{2(1 - \EigTwo)m} + \sqrt{\frac{(1 + \EigTwo)\frange^{2} \VED\ln \frac{1}{\delta}}{(1 - \EigTwo)m}}  \hspace{-1.5cm} & \sqrt{a^{2} + b} \leq a + \sqrt{b} \\
 &= \frac{((11 + \sqrt{21})(1 + \mathsmaller{\frac{\EigTwo}{\sqrt{21}}}) \frange^{2} \ln \frac{1}{\delta}}{2(1 - \EigTwo)m} + \sqrt{\frac{(1 + \EigTwo)\frange^{2} \VED\ln \frac{1}{\delta}}{(1 - \EigTwo)m}} \enspace. \hspace{-1.5cm} & \textsc{Algebra} \\
 \end{align*}

From which we derive

\[
\Prob\left( \abs{\Var[\F]-\VED} \geq  \frac{(11 + \sqrt{21})(1 +  \mathsmaller{\frac{\EigTwo}{\sqrt{21}}}) \frange^{2} \ln \frac{1}{\delta}}{(1-\EigTwo)m} + \sqrt{\frac{(1 + \EigTwo)\frange^{2} \VED\ln \frac{1}{\delta}}{(1 - \EigTwo)m}} \right) \leq \delta \enspace.
\]
\end{proof}

\subsection{Analysis  of  \algo{}}\label{sec:proofsalgo}

\paragraph{Prelude to a Proof }
We first note that correctness of \algo{} follows from that of \DMCMC{}, as they simply apply \DMCMC{} to $\Favg$ and $\M^{(T)}$, the trace chain.

Our proof is divided to the following parts: proving trace chain properties, trace variance estimation, and progressive sampling.  

\smallskip 

\emph{The Trace Chain.} 
In \cref{sec:tracechain} we first obtain the stationary distribution of the trace chain and then we bound its mixing and relaxation time in term of the original chain $\M$.

\smallskip 

\emph{Estimating the Inter-Trace Variance.}
In order to estimate the trace variance we use $\VED$ whose correctness is proved in \cref{sec:varestimatorproof}. We employ this estimator to $\M^{(T)}$, which we proved has constant relaxation time for $T\geq \TMixBound$. The termination condition which is used in the progressive sampling is based on the variance estimation and using the Bernstein  bound (\cref{thm:bernstein}) to ensure its accuracy. 

\smallskip 

\emph{Sketch of the Sampling Schedule.} The key to \DMCMC{} is an \emph{a priori} fixed \emph{sampling schedule}, which determines the sizes of \emph{progressively larger samples}. To ensure correctness, a sequence of tail bounds must all hold simultaneously w.h.p.\ by union bound.
The main difficulty 
is that the schedule length $\NIterations$ and the probability concentration bounds are codependent.
A shorter schedule is more \emph{statistically efficient}, but can \emph{overshoot} the sufficient sample size, and the opposite holds for longer schedules. 
We explain how to resolve this cyclic dependence
in the next paragraph.

Over a run of \algo{}, we take (up to) $3\NIterations$ probability concentration bounds ($\NIterations$ bounds for variance, line~\ref{alg:mcd-var}, and $\NIterations$ bounds each for upper and lower mean bounds, line~\ref{alg:bern}). 
We first establish the \emph{worst-case} Hoeffding sample complexity $m_{\NIterations}$
(\cref{thm:hoeffding}) and \emph{best-case} Bernstein sample complexity $\alpha$ (\cref{thm:bernstein}) 
by taking $v=0$,
\[
\scalebox{1}{$\displaystyle
m_{I}
  \doteq m_{H}(\EigTwoBound, \frange, \varepsilon, \mathsmaller{\frac{2\delta}{3\NIterations}})
  = \frac{(1 + \EigTwoBound) \frange^{2} \ln \frac{3\NIterations}{\delta}}{2(1 - \EigTwoBound)\varepsilon^{2}}
  \enspace, \ \ \& \ \
 \alpha \doteq \frac{(1 + \EigTwoBound)\frange\ln \frac{3\NIterations}{\delta}}{(1-\EigTwoBound)\varepsilon} \approx m_{B}(\EigTwoBound, \frange, 0, \varepsilon, \mathsmaller{\frac{2\delta}{3\NIterations}})
 \if 0 
 \alpha \doteq \frac{5(6 + \EigTwoBound)\frange\ln \frac{3\NIterations}{\delta}}{2(1-\EigTwoBound)\varepsilon} \approx m_{B}(\M, f, v : \hat{v} = 0, \varepsilon, \mathsmaller{\frac{2\delta}{3\NIterations}})
 \fi
 \if 0 
 \alpha \doteq m_{B}(\M, f, v = 0, \varepsilon, \mathsmaller{\frac{2\delta}{3\NIterations}})
 = \frac{(17.791 + 1.700 \EigTwoBound)\ln \frac{3\NIterations}{\delta}}{(1-\EigTwoBound)}  \cdot \frac{\frange}{\varepsilon}
 \fi
 \enspace.$}
\]

\cyrus{Is it cleaner to define $m_{H}$ $m_B$ as 1-tail bounds?}


We know from these bounds that once a sample of size $m_{\NIterations}$ is drawn, the desired guarantee has been met, 
and similarly, before a sample of size $\null \approx \alpha$ is drawn, the desired Bernstein bound \emph{can not} be met.
%

%
We now select the minimal $\NIterations$ such that each sample size $m_{i}$ obeys $m_{i} \leq 2m_{i-1}$.
Observe that in the ratio

\[
\frac{m_{\NIterations}}{\alpha} = \frac{(1+\EigTwoBound)\frange}{4 \cdot 5 \varepsilon} = \frac{9\frange}{100\varepsilon} \enspace,
\]

all dependence on $\NIterations$ is divided out ($\ln\frac{3\NIterations}{\delta}$ terms cancel).
We initially run the chain for $m_{1}$ to be $\lceil2\alpha\rceil$, and in iteration $i$, we run up to $\ceil{2\alpha^{i}}$ steps, thus we conclude 
$\NIterations \doteq \lfloor{\log_{2}(\frac{\frange}{2\varepsilon})\rfloor} \geq  \lceil{\log_{2}( \frac{\frange}{2\varepsilon})\rceil} - 1$
(doubling) iterations are sufficient.\todo{s.t. ... = $m_{H}(\dots)$}
These computations are repeated verbatim by \DMCMC{} (lines~\ref{alg:niter} \& \ref{alg:ss}) to compute the sampling schedule.
\todo{Comment on the max with $1$? Also why we use an approximation of $m_{B}$?}

\smallskip 
Full details of the schedule and details of putting the above pieces together are presented in 
\cref{thm:efficiency,thm:correctness}.

\subsubsection{Stationarity, Relaxation, and Mixing Times of the Trace Chain}\label{sec:tracechain}

\begin{lemma}[Trace Chain Mixing]
\label{lemma:tracemixing}
Having a Markov chain $\M$ with stationary distribution  $\pi$ and  mixing time $\TMix$, the trace chain $\M^{(T)}$'s  stationary distribution is $\pi^{(T)}$ where $\pi^{(T)}(\vec{a})\doteq \pi(a_1)\prod_{i=1}^{k-1}\M(a_i,a_{i+1})$ and its mixing time is bounded as $\TMix(\M^{(T)}) \leq 1 + \frac{1}{T}\TMix( \M)$.
In particular for $T\geq\TMix(\M)$, we have that the second largest eigenvalue of $\M^{(T)}$ is at most $\nicefrac{4}{5}$.
\cyrus{Do we use this lemma? And I still think it's missing a ceil.}
\end{lemma}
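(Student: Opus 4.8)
The plan is to prove the three assertions in order: (1) $\pi^{(T)}$ is stationary (and $\M^{(T)}$ is ergodic on its support), (2) the mixing-time bound, and (3) the eigenvalue bound; (3) is where the real content lies.

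\textbf{Stationarity.} I would verify $\pi^{(T)}\M^{(T)}=\pi^{(T)}$ by direct summation. Fixing $\vec b=(b_1,\dots,b_T)$, the factor $\prod_{i=1}^{T-1}\M(b_i,b_{i+1})$ pulls out of $\sum_{\vec a}\pi^{(T)}(\vec a)\M^{(T)}(\vec a,\vec b)$, and the remaining sum of $\pi(a_1)\bigl(\prod_{i=1}^{T-1}\M(a_i,a_{i+1})\bigr)\M(a_T,b_1)$ over $a_1,\dots,a_T$ telescopes: summing out $a_1,\dots,a_{T-1}$ leaves $\pi(a_T)$ (the law of the $(T-1)$-th step of $\M$ from stationarity), and then $\sum_{a_T}\pi(a_T)\M(a_T,b_1)=\pi(b_1)$, which recovers $\pi^{(T)}(\vec b)$. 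Ergodicity of $\M^{(T)}$ on $\Support(\pi^{(T)})$ is inherited from $\M$: reachability of a trace $\vec b$ with $\pi^{(T)}(\vec b)>0$ reduces to driving $\M$ from $a_T$ to $b_1$ (possible by irreducibility) and then following $\vec b$, and aperiodicity of $\M$ transfers.

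\textbf{Mixing time.} The key observation is that one step of $\M^{(T)}$ from $\vec a$ forgets everything except $a_T$: it emits a trace whose first coordinate has law $\M(a_T,\cdot)$ and which is then completed by running $\M$. Iterating, $(\M^{(T)})^{k}(\delta_{\vec a})$ is a $\M$-trace whose first coordinate $\nu_k$ is the law of $\M$ run for $(k-1)T+1$ steps from $a_T$. Two trace laws that share the transition kernel and differ only in the law of the first coordinate can be coupled (couple the first coordinates optimally, then run in lockstep), so $\TVD\bigl((\M^{(T)})^{k}(\delta_{\vec a}),\pi^{(T)}\bigr)\le\TVD(\nu_k,\pi)$, which is $\le\frac{1}{4}$ once $(k-1)T+1\ge\TMix(\M)$ (the bound on $\TMix(\M)$ is worst-case over starting states, so this holds for every $a_T$); by convexity of $\TVD$ the worst-case start of $\M^{(T)}$ is a point mass, so this suffices. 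Hence $\TMix(\M^{(T)})\le 1+\lceil(\TMix(\M)-1)/T\rceil$, which is the stated $1+\TMix(\M)/T$ up to the integer rounding flagged after the lemma, and in particular $\TMix(\M^{(T)})\le 2$ as soon as $T\ge\TMix(\M)$.

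\textbf{Eigenvalue bound and the main obstacle.} Here I would factor $\M^{(T)}=ER$, with $E\colon\Omega^T\to\Omega$ the evaluation-at-last-coordinate kernel ($E(\vec a,\cdot)=\delta_{a_T}$) and $R\colon\Omega\to\Omega^T$ the kernel $R(\omega,\vec b)=\M(\omega,b_1)\prod_{i=1}^{T-1}\M(b_i,b_{i+1})$; one checks directly from \cref{def:2.1} that $ER=\M^{(T)}$ and that $RE=\M^{T}$. Since $AB$ and $BA$ have the same nonzero eigenvalues (with multiplicity), the nonzero eigenvalues of $\M^{(T)}$ are exactly those of $\M^{T}$, which for lazy $\M$ lie in $[0,1]$ with largest $1$ and second-largest $\lambda(\M)^{T}$; thus the second-largest (absolute) eigenvalue of $\M^{(T)}$ equals $\lambda(\M)^{T}\le e^{-T/\TRel(\M)}$. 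A short case split then gives the $\le\frac{4}{5}$ claim for $T\ge\TMix(\M)$: if $\lambda(\M)\le\frac{4}{5}$ then $\lambda(\M)^{T}\le\lambda(\M)\le\frac{4}{5}$, and otherwise $\TRel(\M)>5$, so combining $\TMix(\M)\ge(\TRel(\M)-1)\ln 2$ with $\ln\lambda(\M)\le\lambda(\M)-1$ yields $\lambda(\M)^{\TMix(\M)}\le 2^{-\lambda(\M)}<2^{-4/5}<\frac{4}{5}$. (For the sharper fact used by \algo{}, that $\TRel(\M^{(T)})\le 2$, it suffices that $\lambda(\M)^{T}\le\frac{1}{2}$, i.e.\ $T\ge\TRel(\M)\ln 2$, matching the algorithm's choice of $T$.) The only real obstacle is that $\M^{(T)}$ is \emph{not} reversible even when $\M$ is (a small biased chain already breaks detailed balance for $\M^{(2)}$), so the eigenvalue bound cannot be read off from the mixing-time bound via the usual mixing/relaxation inequality; the $ER$/$RE$ factorization is exactly what bypasses this, and as a bonus it shows $\M^{(T)}$ has real nonnegative spectrum.
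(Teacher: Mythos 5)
Your proof is correct and takes a genuinely different, and in fact more rigorous, route than the paper's, especially for the eigenvalue bound. The paper proves the mixing-time bound by a direct TVD computation exploiting the product structure of $\pi^{(T)}$ (your coupling argument is equivalent but more conceptual), and then deduces the eigenvalue bound by applying the standard two-sided relation $(\TRel-1)\ln 2 \le \TMix \le \TRel\ln(2/\sqrt{\pi_{\min}})$ \emph{to the trace chain $\M^{(T)}$}. That step is delicate: the left inequality is proved for reversible chains, and, as you observe, $\M^{(T)}$ is not reversible even when $\M$ is, so the paper is implicitly invoking a bound whose hypotheses are not met. Your $ER/RE$ factorization (with $E$ the ``read last coordinate'' kernel and $R$ the ``run $\M$ for $T$ steps'' kernel) sidesteps this entirely: since $ER=\M^{(T)}$ and $RE=\M^{T}$ share their nonzero spectrum, you get the exact identity that the second eigenvalue of $\M^{(T)}$ is $\lambda(\M)^{T}$, which is both cleaner and strictly more informative (it immediately yields the sharper fact, used implicitly by \algo{}, that $\TRel(\M^{(T)})\le 2$ for $T\ge\TRel(\M)\ln 2$, and it shows $\M^{(T)}$ has real nonnegative spectrum). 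Your case split to land the numeric $\nicefrac{4}{5}$ is correct; note that it \emph{does} still invoke the reversible inequality $\TMix(\M)\ge(\TRel(\M)-1)\ln 2$, but now for the base chain $\M$, where reversibility is a standing assumption, so there is no gap. You also correctly identify, as the paper's own margin note does, that the stated bound $\TMix(\M^{(T)})\le 1+\TMix(\M)/T$ is missing an integer ceiling; with $T\ge\TMix(\M)$ this is harmless, since your coupling yields $\TMix(\M^{(T)})\le 2$ exactly.
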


\begin{proof}[Proof of \Cref{lemma:tracemixing}]

We first remark that
it is not difficult to prove by induction that  \[\left(\M^{(T)}\right)^j(\vec{a},\vec{b})=\M^{(j-1)T+1}(a_T,b_1)\prod_{i=1}^{T-1}\M(b_i,b_{i+1}).\]

It is not difficult to see that with the above definitions, for any $\vec{a}$ a trace of length $T$ we have: $\pi^{(T)}(\vec{a})=\sum_{\vec{b}\in\Omega^{T}}\pi^{(T)}(\vec{a},\vec{b})\M^{(T)}(\vec{a},\vec{b})$. \cyrus{Eli: this is very confusing.}

\medskip 

\cyrus{Type mismatch: $\M^{(T)}$ is a matrix.}
We now show that $\M^{(T)}$ is close to $\pi^{(T)}$ after $(\tau(\epsilon)/T)+1$ steps \cyrus{shouldn't this give $\lceil 1 + \frac{1}{T}\TMix(\M)\rceil$?  We are lacking the ceil.}. Let $X_0, X_1, X_2 \dots$, be the trace of the original random walk,  thus each $X_i$ is  a random variable having distribution $X_i\sim  \M^{i}\mu$.
We partition this trace into blocks of length $T$ as follows: $B^j=(X_j,X_{j+1},\dots X_{j+{T-1}})$. Thus the trace of $\M^{(T)}$ will be $B^0, B^1, B^2,\dots$.

\cyrus{$\tau(\epsilon)$ undefined?}

\medskip

Since the mixing time of $\M$ is $\TMix$, we have: for any starting distribution $\nu$ over $\Omega$, and $\tau \geq \TMix \log(\epsilon)$, $\TVD(\M^{\tau}(\nu),\pi) \leq \epsilon$. Assume $\M^{(T)}$ has started at initial distribution $\nu'$, looking at the distribution of $B_{\frac{\tau(\epsilon)}{T}+1}$ we will have: 

\begin{align*}
  \TVD(B_{\frac{\tau}{T}+1},\pi^{(T)})
  & = \frac{1}{2}\sum_{a\in \Omega^T} \left\lvert  (\M^{(T)})^{\frac{\tau}{T}+1} \nu' (a)-\pi^{(T)}(a) \right\rvert \\
  & = \frac{1}{2}\sum_{a\in \Omega^T} \left\lvert (\M^{\tau})(\nu'_T,a_1)\prod_{i=1}^{T-1}\M(a_i,a_{i+1}) -\pi^{(T)}(a) \right\rvert\\
  &=\frac{1}{2} \sum_{a\in \Omega^T} \left\lvert (\M^{\tau})(\nu'_T,a_1)\prod_{i=1}^{T-1}\M(a_i,a_{i+1}) - \pi(a_1)\prod_{i=1}^{T-1}\M(a_i,a_{i+1})\right\rvert \\
      &\leq \epsilon \prod_{i=1}^{T-1}\M(a_i,a_{i+1})\leq \epsilon \enspace.
\end{align*}
The bound on $\EigTwo$ then follows since $
\left(\TRel(\M)-1\right)\ln(2)\leq \TMix(\M)\leq \TRel(\M) \ \ln\Bigl(\smash{\frac{2}{\sqrt{\pi_{\min}}}}\Bigr) \enspace.$
%
\if 0
\cyrus{This is from Eq.~\ref{eq:mix-relax}?  This seems pretty loose, I get the following:}
{\color{green!50!black}
\[
\left(\TRel(\M)-1\right)\ln(2)\leq \TMix(\M)\leq \TRel(\M) \ \ln(\frac{2}{\sqrt{\pi_{\min}}}) \enspace,
\]

\[
\left(\TRel(\M)-1\right) = \frac{1}{\SG} - 1 = \frac{\EigTwo}{1-\EigTwo}
\]

\[
\frac{\EigTwo}{1-\EigTwo} \leq \TMix/\ln(2) \implies \EigTwo \leq \frac{\TMix/\ln(2)}{1 + \TMix/\ln(2)} = \frac{\TMix}{\ln(2) + \TMix} = \frac{2}{\ln(2) + 2} < 0.743 < \frac{3}{4}
\]

Using $\EigTwo = 3/4$ gets us an $\frac{m}{m-7}$ Bessel correction.

Furthermore, with $\EigTwo \leq \frac{\TMix}{\ln(2) + \TMix}$, we get
\[
\frac{1+\EigTwo}{1-\EigTwo} \leq \frac{1 + \frac{\TMix}{\ln(2) + \TMix}}{1 - \frac{\TMix}{\ln(2) + \TMix}} = \frac{\ln(2) + 2\TMix}{\ln(2)} = 1 + \frac{4}{\ln(2)} < 6.771 \enspace.
\]
}
\fi
\end{proof}

\begin{lemma}[Relaxation Times of Trace Chains]
\label{lemma:mt-relax}
Suppose a reversible chain $\M$.
Then for any $T \in \N$, we have\cyrus{If we drop supremum, we basically show that all eigenvalues are powered}
\[
\EigTwo(\M^{(T)}) = \EigTwo(\M^{T}) = \EigTwo^{T}(\M) \Leftrightarrow \TRel(\M^{(T)}) = \TRel(\M)^{T} \enspace.
\]
\end{lemma}
\begin{proof}
We show this result in three movements.
We first introduce the $T$\textsuperscript{th}-root $\sqrt[T]{\M^{(T)}}$ of the trace chain $\M^{(T)}$.
We then relate eigenvalues of $\sqrt[T]{\M^{(T)}}$ to eigenvalues of $\M^{(T)}$, which illuminates the proof structure.
Finally, we show that the second eigenvalue of $\sqrt[T]{\M^{(T)}}$ matches that of $\M$.


\bigskip
\textbf{First Movement:} We introduce the $T$\textsuperscript{th}-root of the $T$-trace chain, i.e., $\sqrt[T]{\M^{(T)}}$, which corresponds to 1-step \emph{overlapping windows} of length $T$ on the chain $\M$ (as opposed to the nonoverlapping windows of $\M^{(T)}$).
In particular, $\sqrt[T]{\M^{(T)}}$ has transition matrix
\[
\sqrt[T]{\M^{(T)}}(\vec{a}, \vec{b}) = 
\left\{ \begin{array}{ccc}
\vec{a}_{2:T} = \vec{b}_{1:T-1} & : & \M(a_{T}, b_{T}) \\
\text{otherwise} & : & 0 
\end{array}
\right.
\enspace.
\]
Since $b_{T-1} = a_{T}$ for all \emph{nonzero transition probabilities}, we may alternatively express these transition probabilities as
\[
\sqrt[T]{\M^{(T)}}(\vec{a}, \vec{b}) = \1_{\vec{a}_{2:T}}(\vec{b}_{1:T-1}) \M(b_{T-1}, b_{T}) = \1_{\vec{a}_{2:T}}(\vec{b}_{1:T-1}) \M(a_{T}, b_{T})
\]
Here transitions between \emph{incompatible} $\vec{a}$ and $\vec{b}$, i.e., those that don't overlap in a $T-1$ window, thus $\vec{a}_{2:T} \neq \vec{b}_{1:T-1}$, \emph{can not} occur.
Therefore, the transition matrix is always extremely sparse; only $\abs{\X}$ of the total $\abs{\vphantom{\X}\smash{\X^{T}}}$ possible states are ever 1-step reachable.

Note that, in general, matrix roots may be non-unique, and stochastic matrices may not have roots that are also stochastic matrices.
However, the $T$\textsuperscript{th}-root of $\M^{(T)}$ described above always exists and is convenient for the analysis, though we do not claim it is unique.

\bigskip

\textbf{Second Movement:} We now relate the eigenvalues of $\sqrt[T]{\M^{(T)}}$ and $\M^{(T)}$.

First, note that both $\sqrt[T]{\M^{(T)}}$ and $\M^{(T)}$ are chains over state space $\X^{T}$.
Now, note that from the transition matrix, it is clear that $(\sqrt[T]{\M^{(T)}})^{T} = \M^{(T)}$, and thus $\sqrt[T]{\M^{(T)}}$ has the same stationary distribution as $\M^{(T)}$, 
i.e., $\pi(\sqrt[T]{\M^{(T)}}) = \pi(\M^{(T)}) = \pi^{(T)}$.

We now observe that $\EigTwo(\M^{(T)}) = \EigTwo^{T}(\sqrt[T]{\M^{(T)}})$, as is the case with any chain derived via transition-matrix powering (since powering is \emph{repeated multiplication}, it preserves eigenvectors, and powers eigenvalues).
It thus suffices to show that $\EigTwo(\sqrt[T]{\M^{(T)}}) = \EigTwo(\M)$, as this immediately implies $\EigTwo(\M^{(T)}) = \EigTwo(\M^{T}) = \EigTwo^{T}(\M)$.

\bigskip

\textbf{Third Movement:} Here we show that $\EigTwo(\sqrt[T]{\M^{(T)}}) = \EigTwo(\M)$.
This movement is rather more subtle than the prior two, and it is shown via direct computation of the second absolute eigenvalue.

In particular, we decompose the second absolute eigenvalue of $\sqrt[T]{\M^{(T)}}$, i.e.,  $\EigTwo(\sqrt[T]{\M^{(T)}})$, into a supremum over vector-matrix-vector products (the \emph{quadratic form} eigenvalue characterization).
As we 
need only consider the \emph{second eigenvector}, and the first is always the stationary distribution, we consider in the supremum only the unit eigenvectors that are \emph{orthogonal to} $\pi^{(T)}$, written $\vec{x} \bot \pi^{(T)}$.

Note that here we take $\vec{x} \in \R^{\abs{\vphantom{\X}\smash{\X^{T}}}}$ to denote a \emph{vector} over states in $\X^{T}$, and 
$s \in \X^{T}$ denotes a \emph{single state} in the state space of $\sqrt[T]{\M^{(T)}}$ (i.e., $T$-traces over $\X$), which is used to index into $x$.
Likewise, w.r.t. $\M$ and $\X$, we use $\vec{x} \in \R^{\abs{\X}}$ and $s \in \X$. 
Observe now that
\newcommand{\transpose}{\top}
\newcommand{\supargtracechain}{\vec{x} \bot \pi^{(T)} : \, \norm{\vec{x}}_{2} = 1}
\newcommand{\supargbasechain}{\vec{x} \bot \pi, \norm{x}_{2} = 1}
\begin{align*}
\hspace{-0.2cm}\EigTwo\bigl(\!\sqrt[T]{\!\M^{(T)}}\bigr) 
    &= \hspace{-0.3cm} \sup_{\supargtracechain} \vec{x} \sqrt[T]{\M^{(T)}} \vec{x}^{\transpose} & \textsc{Eigenvalue Characterization} \\
    &= \hspace{-0.3cm} \sup_{\supargtracechain} \sum_{s \in \X^{T}} x_{s} \bigl(\sqrt[T]{\M^{(T)}}(\cdot, s) \cdot \vec{x} \bigr) & \textsc{Matrix Multiplication} \\
    &= \hspace{-0.3cm} \sup_{\supargtracechain} \sum_{s \in \X^{T}} x_{s} \sum_{s' \in \X^{T}} x_{s'} \sqrt[T]{\M^{(T)}}(s', s) & \textsc{Dot Product} \\
    &= \hspace{-0.3cm} \sup_{\supargtracechain} \sum_{s \in \X^{T}} x_{s} \hspace{-0.2cm} \sum_{s' \in \X^{T} | s'_{2:T} = s_{1:T-1}} \hspace{-0.65cm} x_{s'} \M(s'_{T}, s_{T}) & \textsc{$\sqrt[T]{\!\M^{(T)}}(\vec{a}, \vec{b}) = \1_{\vec{a}_{2:T}}(\vec{b}_{1:T-1}) \M(a_{T}, b_{T})$} \\
    &= \hspace{-0.3cm} \ \, \sup_{\supargbasechain} \sum_{s \in \X} x_{s} \sum_{s' \in \X} x_{s'} \M(s', s) & \textsc{Summation Consolidation} \\
    &= \lambda(\M) \enspace. & \textsc{Eigenvalue Characterization} \\
\end{align*}
\end{proof}

\subsubsection{Correctness and efficiency proof}

\todo{Use $\EigTwoBound$.}

\correctness*
\begin{proof}
We first show (1), i.e., correctness of \DMCMC, and then show that correctness of \nsalgo{} easily follows from that of \algo{}, which follows from that of \DMCMC.

\todo{Simpler schedule without Hoeffding: see URL 
Get something like $\log_{2}(\frac{\frange}{20\varepsilon} + 1)$.}


We now show claim (1).
First note that in initialization (independent of any sampling), \DMCMC\  computes \emph{iteration count} $\NIterations$ and \emph{initial sample size} $\alpha$ (line \ref{alg:niter}), which determine the \emph{schedule} of \emph{sample sizes} and \emph{probabilistic bounds}.
Over the course of the algorithm, at each of $\NIterations$ timesteps, a $1$-tail (upper) bound on \emph{variance} $\VBoundVarUpper_{i}$ is computed (line~\ref{alg:2chain-var}), and a $2$-tail bound on \emph{mean} $\VBoundMean_{i}$ is computed (line~\ref{alg:bern}), for a total of $3\NIterations$ tail bounds.
Each tail is bounded with probability $1 - \frac{\delta}{3\NIterations}$, thus by union bound, all hold simultaneously with probability $1 - \delta$.
We assume henceforth that app tail bounds hold, thus all conclusions are thus qualified as holding \emph{with probability $\geq 1 - \delta$}.
Now, note that termination (line \ref{alg:tc}) occurs for one of two reasons: either $i = \NIterations$, or $\hat{\bm{\epsilon}}_{i} \leq \epsilon$.  We analyze these cases separately.

In case 1, we have termination at $i < \NIterations$, i.e., $\hat{\bm{\epsilon}}_{i} \leq \varepsilon$.
As assumed above, all tail bounds at each iteration hold by union bound.
Thus for the sample drawn at iteration $i$, in particular the \emph{variance bounds} (line \ref{alg:mcd-var}) hold, as $\hat{\bm{v}}_{i}$ is an \emph{unbiased estimate} of $\ITVar$, and by \cref{lemma:twochain}, w.h.p., $\ITVar \leq \VBoundVarUpper_{i}$.
Similarly, the \emph{mean bounds} (\ref{alg:bern}) hold via \cref{thm:bernstein} (noting that, by \emph{averaging over} a pair of \emph{independent chains}, the variance proxy of interest is $\mathsmaller{\frac{1}{2}}\ITVar$).
Note that while both tail bounds are taken over the tensor-product chain $\M \otimes \M$, it holds that $\EigTwo(\M \otimes \M) = \EigTwo(\M)$\todo{cite}, so the bound remains valid.

Consequently, when it holds that $\hat{\bm{\epsilon}}_{i} \leq \varepsilon$, the algorithm returns the estimate $\VEMean_{i}$ (line~\ref{alg:return}), which by the above is sufficiently accurate to satisfy the stated guarantees.

We now consider case 2, wherein we have termination at step $i = \NIterations$.
When this occurs, it holds that
\[
m_{i} = m_{\NIterations}
 = \left\lceil \alpha 2^{\NIterations} \right\rceil
 \geq m_{H}(\EigTwo, \frange, \varepsilon, \mathsmaller{\frac{\delta}{3I}}) \enspace,
\]
and thus
by Hoeffding's inequality for mixing processes (\cref{thm:hoeffding}), we have $\lvert \mu - \VEMean_{\NIterations} \rvert \leq \varepsilon$, (i.e., the schedule was selected exactly to ensure $m_{\NIterations}$ samples would be sufficient, regardless of early termination and variance.

\medskip

We now proceed to show claim (2).
To see this result, note that
\[
\EigTwo(\M^{(T)}) \leq \EigTwo^{T}(\M) \enspace,\todo{Where do we show this?}
\]
and thus claim (2) follows directly from claim (1).

\medskip

Finally, note that claim (3) follows from claim (2), paired with Eq.~\ref{eq:nonstationary}.
In particular, note that the uniform mixing time is selected such that we have
\todo{
\[
\sup_{\omega} \abs{ \norm{  \frac{\M^{\TUnif}(\omega)}{{\partial \pi}} }_{\pi,\infty} - 1} \leq \frac{\EigTwoBound^{\TUnif}}{\sqrt{\pi(\omega)\PiMin}} \leq \frac{\EigTwoBound^{\TUnif}}{\PiMin}
\]
TODO: HOW GET TO $\pi_{\min}$?
12.13 of \cite{levin2017markov}
}
\cyrus{Notation $\partial \pi(\omega)$?}
\[
\forall \omega: \, \left\lVert \frac{\partial \M^{\TUnif}(\omega)}{\partial \pi} \right\rVert_{\pi, \infty} \leq 2 \enspace,
\]
i.e., $\M$ is uniformly mixed.
This further implies that the trace chain is uniformly mixed, as both $\PiMin$ and $\EigTwoBound$ are exponential in the trace-length $T$, which cancels out in the uniform-mixing bound (see Eq.~12.13 of \cite{levin2017markov}).
However, we still apply nonstationarity correction $\frac{\delta}{4}$ instead of $\frac{\delta}{2}$, to account for the fact that the \emph{tensor product chain} $(\M \otimes \M)$ \emph{uniformly mixes} slightly slower than $\M$, even though it \emph{relaxes} and \emph{mixes} as quickly as $\M$.
%
%
\end{proof}

We now present and prove an extended statement of \cref{thm:efficiency}, which provides finite-sample and asymptotic sample complexity bounds to \DMCMC{}, \algo{}.

\begin{theorem}[Efficiency of \algo]

Suppose as in \cref{thm:correctness},
and take
$\NIterations \doteq \log_{2}\left( \frac{\frange}{2\varepsilon} \right)$ and
$T \doteq \ceil{\frac{1 + \EigTwoBound}{1 - \EigTwoBound} \ln \sqrt{2}}$.
Then with probability at least $1 - \frac{\delta}{3\NIterations}$, each mean-estimation algorithm runs for no more than $\hat{m}$ steps (individually), where $\hat{m}$ is\cyrus{Use $T$ or $\frac{1}{1 - \EigTwoBound}$ in the asymptotic forms?}
\begin{enumerate}
\item for \DMCMC{}:\hfill

\begin{minipage}[t]{0.5\textwidth}
\vspace{-0.5cm}
\begin{flalign*}
\hat{m} &\leq 
   4T\ln \frac{3\NIterations}{\delta} \left(\frac{5(6 + \EigTwoBound)\frange}{2(1-\EigTwoBound)\varepsilon} + \frac{(1+\EigTwoBound)\ITVar}{(1-\EigTwoBound)\varepsilon^{2}} \right) & \\
 &\in \mathcal{O}\left( \frac{1}{1 - \EigTwoBound} \log\left(\frac{\log(\nicefrac{\frange}{\varepsilon})}{\delta}\right)\left( \frac{\frange}{\varepsilon} + \frac{\SVar}{\varepsilon^{2}}\right)\right) \enspace; & \\
 & \null \hspace{\textwidth} \null & \null \\[-0.75cm]
\end{flalign*}
\end{minipage}
\item for \algo{}:\hfill

\begin{minipage}[t]{0.5\textwidth}
\begin{flalign*}
\hat{m} &\leq 
   2T \ln \frac{3\NIterations}{\delta} \left(\frac{65\frange}{\varepsilon} + \frac{12\ITVar}{\varepsilon^{2}} \right) & \\
 &\in \mathcal{O}\left( T\log\left(\frac{\log(\nicefrac{\frange}{\varepsilon})}{\delta}\right)\left( \frac{\frange}{\varepsilon} + \frac{\ITVar}{\varepsilon^{2}}\right)\right) & \\
 &= \mathcal{O}\left( \log\left(\frac{\log(\nicefrac{\frange}{\varepsilon})}{\delta}\right)\left( \frac{\frange}{(1-\EigTwoBound)\varepsilon} + \frac{\TRel\ITRelVar}{\varepsilon^{2}}\right)\right) \enspace; \ \& & \\
 & \null \hspace{\textwidth} \null & \null \\[-0.75cm]
\end{flalign*}
\end{minipage}
\item Adding also warm start complexity (line 4):\hfill

\begin{minipage}[t]{0.5\textwidth}
\vspace{-0.5cm}
\begin{flalign*}
\hat{m} &\leq 
 2\ceil{\frac{\ln \mathsmaller{\frac{1}{\PiMinBound}}}{\ln \mathsmaller{\frac{1}{\EigTwoBound}}}} +
   2T \ln \frac{12\NIterations}{\delta} \left(\frac{65\frange}{\varepsilon} + \frac{12\ITVar}{\varepsilon^{2}} \right) & \\
 &\in \mathcal{O}\left( \frac{\ln \mathsmaller{\frac{1}{\PiMinBound}}}{\ln \mathsmaller{\frac{1}{\EigTwoBound}}} + T \log\left(\frac{\log(\nicefrac{\frange}{\varepsilon})}{\delta}\right)\left( \frac{\frange}{\varepsilon} + \frac{\ITVar}{\varepsilon^{2}}\right)\right) & \\
 &= \mathcal{O}\left( \frac{\ln \mathsmaller{\frac{1}{\PiMinBound}}}{\ln \mathsmaller{\frac{1}{\EigTwoBound}}} + \log\left(\frac{\log(\nicefrac{\frange}{\varepsilon})}{\delta}\right)\left( \frac{\frange}{(1-\EigTwoBound)\varepsilon} + \frac{\TRel\ITRelVar}{\varepsilon^{2}}\right)\right) \enspace. & \\
 & \hspace{\textwidth} & \\[-0.5cm]
\end{flalign*}
\end{minipage}
\end{enumerate}
\end{theorem}
\begin{proof}
\newcommand{\ltd}{\eta}%
The strategy here is to derive a sample size $m'$, dependent on $\frange, \ITVar, \varepsilon, \delta$, s.t. w.h.p., each algorithm will terminate after drawing a sample of at least $m'$ traces.
We then bound the total number of samples drawn over the course of this process, and make some substitutions to derive the result.
For brevity, throughout this result we take $\ltd \doteq {\ln \! \frac{3\NIterations}{\delta}}$.

We show the result for \DMCMC{}, using second absolute eigenvalue bound $\EigTwo$, as it immediately implies the corresponding results for \algo{} and \nsalgo{}.

\if 0
\[
\Prob\left( \SVar \geq \hat{v} + \frac{5\frange^{2}\ln \frac{1}{\delta}}{(1-\EigTwoBound)m} + \sqrt{\frac{(1 + \EigTwoBound)\frange^{2} \SVar \ln \frac{1}{\delta}}{(1 - \EigTwoBound)m}} \right) \leq \delta \enspace.
\]

\[
\Prob\left( \SVar \geq \hat{v} + \frac{(11 + \sqrt{21})(1 +  \mathsmaller{\frac{\EigTwoBound}{\sqrt{21}}}) \frange^{2} \ln \frac{1}{\delta}}{(1-\EigTwoBound)m} + \sqrt{\frac{(1 + \EigTwoBound)\frange^{2} \hat{v}\ln \frac{1}{\delta}}{(1 - \EigTwoBound)m}} \right) \leq \delta \enspace.
\]
\fi

We first show that, with high probability, the \emph{empirical variance} is not much larger than the true variance, and thus with high probability, the variance-bounds used by \DMCMC{} are not loose.
Let
\[
\varepsilon_{v,1} \doteq \frac{5\frange^{2}\ltd}{(1-\EigTwoBound)m}, \ \ \varepsilon_{v,2} \doteq \sqrt{\frac{(1 + \EigTwoBound)\frange^{2} \SVar \ltd}{(1 - \EigTwoBound)m}} \enspace.
\]
Now, note that by \cref{lemma:twochain},\footnote{Note that here we take a lower-tail bound, rather than an upper-tail bound; the constants are identical and the result similarly follows from the Bernstein inequality.} we have for any sample size $m$ that
\[
\Prob\left(\VEVar_{i} \geq \SVar + \varepsilon_{v,1} + \varepsilon_{v,2} \right) \leq \frac{\delta}{3\NIterations} \enspace.
\]

\smallskip

\todo{Don't use $\VBoundVarUpper_{i}$?  Different symbol?}

We now consider the first iteration $i$ such that $m_{i} \geq m'$, letting $m \doteq m_{i}$ and $\hat{v} \doteq \VEVar$.
On line~\ref{alg:mcd-var} of \DMCMC{}, we have (w.h.p.)
\begin{align*}
\VBoundVarUpper_{i} &\leq \hat{v} + \frac{(11 + \sqrt{21} + (1 +  \mathsmaller{\frac{11}{\sqrt{21}}})\EigTwoBound) \frange^{2} \ltd{}}{(1-\EigTwoBound)m} + \sqrt{\frac{(1 + \EigTwoBound)\frange^{2} \hat{v}\ltd{}}{(1 - \EigTwoBound)m}} & \\
 &\leq \SVar + \varepsilon_{v,1} + \varepsilon_{v,2} + \frac{(11 \! + \! \sqrt{21} \! + \! (1 \! + \! \mathsmaller{\frac{11}{\sqrt{21}}})\EigTwoBound) \frange^{2} \ltd{}}{(1-\EigTwoBound)m} \! + \! \sqrt{\!\frac{(1 \! + \! \EigTwoBound)\!\frange^{2} (\SVar \! + \! \varepsilon_{v,1} \! + \! \varepsilon_{v,2}) \ltd{}}{(1 - \EigTwoBound)m}\!} \!\!\! & \hspace{-0.25cm} \hat{v} \! \leq \! \SVar \! + \! \varepsilon_{v,1} \! + \! \varepsilon_{v,2} \  \text{(w.h.p.)} \\
 &= \SVar + \varepsilon_{v,2} + \frac{(16 + \sqrt{21} + (1 +  \mathsmaller{\frac{11}{\sqrt{21}}})\EigTwoBound) \frange^{2} \ltd{}}{(1-\EigTwoBound)m} + \sqrt{\frac{(1 + \EigTwoBound)\frange^{2} (\SVar + \varepsilon_{v,1} + \varepsilon_{v,2}) \ltd{}}{(1 - \EigTwoBound)m}} & \\
 &= \SVar + \varepsilon_{v,2} + \frac{(16 + \sqrt{21} + (1 + \mathsmaller{\frac{11}{\sqrt{21}}})\EigTwoBound) \frange^{2} \ltd{}}{(1-\EigTwoBound)m} + \sqrt{\frac{(1 + \EigTwoBound)\frange^{2} \ltd{}}{(1 - \EigTwoBound)m}}\sqrt{\SVar + \varepsilon_{v,1} + \varepsilon_{v,2}} & \textsc{Algebra} \\
 &< \SVar + \varepsilon_{v,2} + \frac{(16 + \sqrt{21} + (1 + \mathsmaller{\frac{11}{\sqrt{21}}})\EigTwoBound) \frange^{2} \ltd{}}{(1-\EigTwoBound)m} + \sqrt{\frac{(1 + \EigTwoBound)\frange^{2} \ltd{}}{(1 - \EigTwoBound)m}} \bigl(\sqrt{\SVar} \! + \! \sqrt{\varepsilon_{v,1}}\bigr) & \hspace{-0.25cm} \sqrt{a + b + 2\sqrt{ab}} = \sqrt{a} + \sqrt{b} \\
 &\leq \SVar + \varepsilon_{v,2} + \frac{(16 + \sqrt{5} + \sqrt{21} + (1 + \mathsmaller{\frac{\sqrt{5}}{2}} + \mathsmaller{\frac{11}{\sqrt{21}}})\EigTwoBound) \frange^{2} \ltd{}}{(1-\EigTwoBound)m} + \sqrt{\frac{(1 + \EigTwoBound)\frange^{2}\SVar \ltd{}}{(1 - \EigTwoBound)m}} & \\ 
 &= \SVar + \underbrace{\frac{(16 + \sqrt{5} + \sqrt{21} + (1 + \mathsmaller{\frac{\sqrt{5}}{2}} + \mathsmaller{\frac{11}{\sqrt{21}}})\EigTwoBound) \frange^{2} \ltd{}}{(1-\EigTwoBound)m}}_{\doteq\varepsilon_{v,3}} + \underbrace{2\sqrt{\frac{(1 + \EigTwoBound)\frange^{2}\SVar \ltd{}}{(1 - \EigTwoBound)m}}}_{\doteq\varepsilon_{v,4}} & \\
 \if 0
 & OLD & \\
 &\leq \SVar + \varepsilon_{v,2} + \frac{(21 + \sqrt{21} + (1 +  \mathsmaller{\frac{11}{\sqrt{21}}})\EigTwoBound) \frange^{2} \ltd{}}{(1-\EigTwoBound)m} + \sqrt{\frac{(1 + \EigTwoBound)\frange^{2} (\SVar + \varepsilon_{v,2}) \ltd{}}{(1 - \EigTwoBound)m}}  & \\
 &\leq \SVar + \varepsilon_{v,2} + \frac{(21 + \sqrt{21} + (1 +  \mathsmaller{\frac{11}{\sqrt{21}}})\EigTwoBound) \frange^{2} \ltd{}}{(1-\EigTwoBound)m} + 2\sqrt{\frac{(1 + \EigTwoBound)\frange^{2} \SVar \ltd{}}{(1 - \EigTwoBound)m}}  & \\
 &= \SVar + \frac{(21 + \sqrt{21} + (1 +  \mathsmaller{\frac{11}{\sqrt{21}}})\EigTwoBound) \frange^{2} \ltd{}}{(1-\EigTwoBound)m} + 3\sqrt{\frac{(1 + \EigTwoBound)\frange^{2} \SVar \ltd{}}{(1 - \EigTwoBound)m}}  & \\
 \fi
\end{align*}

Substitution into the Bernstein bound (line~\ref{alg:bern}), and similar algebra, gives us
\begin{align*}
\hat{\bm{\epsilon}}_{i} &= \frac{10\frange\ltd{}}{(1-\EigTwoBound)m_{i}} + \sqrt{\frac{(1 + \EigTwoBound) \VBoundVarUpper_{i}\ltd{}}{(1 - \EigTwoBound)m_{i}}} \\
 &\leq \frac{10\frange\ltd{}}{(1-\EigTwoBound)m_{i}} + \sqrt{\frac{(1 + \EigTwoBound) (\SVar + \varepsilon_{v,3} + \varepsilon_{v,4})\ltd{}}{(1 - \EigTwoBound)m_{i}}} & \textsc{See Above} \\
 &< \frac{5(6 + \EigTwoBound)\frange\ltd{}}{2(1-\EigTwoBound)m_{i}} + \sqrt{\frac{(1 + \EigTwoBound) \SVar \ltd{}}{(1 - \EigTwoBound)m_{i}}} & \begin{tabular}{r} $\sqrt{a + b + 2\sqrt{ab}} = \sqrt{a} + \sqrt{b}$ \\ \textsc{Algebraic Bounds} \\ \end{tabular}\!\!\! \\ 
\end{align*}

\todo{More detail}
\todo{Constant is < $7.982$ for no assumption on $\EigTwoBound$: see sqrt(16+sqrt(5) + sqrt(21) + (1 + sqrt(5)/2 + 11/sqrt(21))*2) * sqrt(2) \ \ \ AND \ \ \  sqrt(16+sqrt(5) + sqrt(21) + (1 + sqrt(5)/2 + 11/sqrt(21))*.1) * sqrt(1.5)}

\todo{Split double step: using $\varepsilon_{v,3} * ... * (1 + \EigTwoBound) \leq 5 + 2.5\EigTwoBound$: see \url{https://www.desmos.com/calculator/2wn6mfnd34}}

We terminate when $\hat{\bm{\epsilon}}_{i} \leq \varepsilon$, thus
this implies sufficient sample size
\[
m' \leq + \ltd{} \left(\frac{5(6 + \EigTwoBound)\frange}{2(1-\EigTwoBound)\varepsilon} + \frac{(1+\EigTwoBound)\SVar}{(1-\EigTwoBound)\varepsilon^{2}} \right) \enspace.
\]

Now, due to the doubling geometric grid, we must have $m_{i} \in [m', 2m']$, thus we have\cyrus{Subtlety with start}
\[
m_{i} \leq 2\ltd{} \left(\frac{5(6 + \EigTwoBound)\frange}{2(1-\EigTwoBound)\varepsilon} + \frac{(1+\EigTwoBound)\SVar}{(1-\EigTwoBound)\varepsilon^{2}} \right) \enspace.
\]

Now, each step of the tensor-product chain $(\M \otimes \M)$ requires two steps of $\M$, so we conclude (1) by noting that $4m'$ samples suffice.

\medskip

Now, to get (2), note that in \algo{}, we take $T \doteq \ceil{\frac{1+\EigTwoBound}{1-\EigTwoBound}\ln \sqrt{2}}$, and thus $\EigTwo(\M^{(T)}) \leq \EigTwo^{T} \leq \frac{1}{2}$.
The finite-sample bound then follows from (1) as \algo{} simply calls \DMCMC{} (see line~\ref{alg:dyna:return}, multiplying total sample complexity by $T$, as each step in $(\M^{(T)} \otimes \M^{(T)})$ (i.e., the tensor-product trace-chain) takes $T$ steps in $\M$ for every step in the $\M \otimes \M$ chain of \DMCMC{}.
Finally, applying \cref{lemma:tvar-prop} yields the result.

\end{proof}

\subsubsection{A Note on Nonstationarity}\label{remark:stationarity}
Note that theorems \ref{thm:hoeffding} and \ref{thm:bernstein} assume \emph{stationarity} i.e. they consider traces  $\vec{X}_{1:m} :X_1, X_2,\dots X_m$  assuming $X_1\sim \pi$. This assumption is often prohibitive, as drawing even a single such sample can be NP-hard.
We overcome this problem using the following equation for $\vec{X}_{1:m} :X_1, X_2,\dots X_m$, $X_1\sim \nu$. 
\begin{equation}
\label{eq:nonstationary}
\smash{\Prob_{\substack{\vec{X}\\X_1\sim \nu}}}\left( \lvert \hat{\mu} - \mu \rvert \geq \varepsilon \right) \leq \left\lVert \frac{\partial \nu}{\partial \pi} \right\rVert_{\pi, \infty} \smash{\Prob_{\substack{\vec{X}\\{X}_{1} \sim \pi}}}\left( \lvert \hat{\mu} - \mu \rvert \geq \varepsilon \right) \enspace;
\end{equation}
see, e.g., \cite{fan2018hoeffding}, proof of thm 2.3.
Note that by eq. \ref{eq:nonstationary}\cyrus{use \eqref{eq:nonstationary} or eq.\ \ref{eq:nonstationary}}, it is sufficient to have  $\left\rVert\frac{\partial \nu}{\partial \pi} \right\rVert_{\pi, \infty} = \ess\sup_{\omega\in \Omega}\vert \frac{\nu(x)}{\pi(x)}\vert \in \mathcal{O}(1)$. This can generally be accomplished straightforwardly with a \emph{warm-start} by selecting an arbitrary fixed $\omega \in \Omega$, taking $\nu$ to be the distribution reached after running $\M$ for $\TRel(\M)\ln(1/\pi_{\min})$ steps (see any standard MCMC text book, e.g. \cite{levin2017markov}).  With this in mind, for simplicity we assume stationarity, knowing that our proofs and algorithm generalize  with trivial modifications.\cyrus{Comment on uniform mixing time!}

\subsection{Missing proofs from \ref{sec:compare}}\label{sec:compareproofs}

The following equations compare and contrast Rabinovich et al.'s bound with the central limit theorem and this work:
\begin{align}\label{eqution8}
\frac{{\cal T}_{\rm fmix}(f,\M \frac{\frange}{\varepsilon})\frange^{2}}{\varepsilon^{2}} \log \frac{1}{\delta}
  &\geq \frac{\TfMixBound(f,\M)}{\TfMix(f,\M)} \cdot \frac{R^{2}}{\SVar} \cdot \frac{\AVar}{\varepsilon^{2}} \log(\frac{1}{\delta}) \log (\frac{\frange}{\varepsilon})
\end{align}
From which we can conclude 
\[
  \HPSC_{\rm Rabi}(\M,f) \in \Omega \left( \frac{\TfMixBound(f)}{\TMix(f)} \cdot \frac{R^{2}}{\SVar} \cdot \AVar(f,\M) \log \frac{\frange}{\varepsilon} \right) \enspace,
  \]
which implies 
 \[ \HPSC_{\rm Rabi}(\M,f) \in \Omega \left(  \AVar(f,\M) \log \frac{\frange}{\varepsilon} \right)
  \enspace.
\]

Note that the gap between our bound and the central limit theorem is $\Theta\left(1+\log\log (\frange/\varepsilon)\right)$ which is exponentially smaller compare to the $\log (\frange/\varepsilon)$ appearing here. 

\begin{proof}[Proof of \cref{eqution8}]

\begin{align*}
\TfMixBound(f,\M \frac{\varepsilon}{\frange})\frange^{2}
 &= \frac{\TMixBound(f,\M \frac{\varepsilon}{\frange})}{\TfMix(f,\M \frac{\varepsilon}{\frange})} \cdot \frac{R^{2}}{\SVar} \cdot \TMix(f,\M \frac{\varepsilon}{\frange}) \SVar \\
 &\gtrsim \frac{\TfMixBound(f,\M)}{\TfMix(f,\M)} \cdot \frac{R^{2}}{\SVar} \cdot \TfRel(f,\M) \SVar \log \frac{\frange}{\varepsilon} & \textsc{eq.~(9c) \cite{functionMixingRabinovish} }
 \\
 &\geq \frac{\TfMixBound(f,\M)}{\TfMix(f,\M)} \cdot \frac{R^{2}}{\SVar} \cdot \AVar \log \frac{\frange}{\varepsilon} & \cref{lemma:tvar-prop}: \AVar \leq 2 \TRel(f) \SVar(f)  \\
\end{align*}
\end{proof}

\subsection{Application to counting k colorings}\label{sec:plantedap}

Consider a graph $G = (V, E)$,  and some \emph{number of colors} $k$. A \emph{coloring} of $G$ is a mapping $\PColor: V\to \{1,2,\dots k\}$, where $\gamma(v)$ denotes the \emph{color} of $v \in V$, and a \emph{proper coloring} is any coloring $\PColor$
 s.t.~$\forall u, v \in V: \, \Pcolor(u)=\Pcolor(v) \implies (u,v)\notin E$. 
 For a subset ${V}'\subseteq V$, by $\Pcolor({V'})$, we mean the restriction of the mapping $\Pcolor$ to domain ${V'}$, i.e., a proper coloring on the induced subgraph on ${V'}$. Furthermore we define $\Gamma(G,k)$ to be the set containing all the proper $k$-colorings of $G$. We denote the size of a set $X$ by $\# X$, and the uniform distribution on it by $\UnifD(X)$. For example $\# \Gamma(G,k)$ is the number of proper $k$-colorings of $G$, and $\UnifD(\Gamma(G,k))$ is the uniform distribution on it. 
 For any graph $G$, 
 the \emph{Glauber dynamics }\footnote{Also known as the zero temperature Pott's model, or the single site update} chain   is defined on $\Gamma(G,k)$ as follows and it known that it converges to  stationary distribution 
 $\UnifD(\Gamma(G,k))$:
 

   \begin{defin}[Glauber dynamics chain for proper $k$-colorings \cite{jurrumSamplingiscounting}]\label{def:coloringchain}
We define the Markov chain ${\cal M}_{G}$ on $\Gamma(G,k)$ as follows:
At each time step $t$, let $X_t$ be a proper $k$-coloring of $G$,
\smallskip 

\noindent
 (1)  Pick $c\in\{1,2,\dots k\}$, and $u\in V$ uniformly at random. 
 
\noindent 
 (2) If   changing $u$'s color to $c$ is still a valid proper coloring, let $X_{t+1}$ be this new coloring. 

\noindent 
 (3)  Else, $X_{t+1}=X_t$.
  
\end{defin}
 
 The planted partition, or stochastic block, model
generalizes the Erd\"os-R\'enyi model, allowing for \emph{communities} in graphs, and it has the following distribution:
 
\begin{restatable}[Planted Partition Model]{defin}{defplantedpartitions}
\label{def:plantedpartitions}
Given the following parameters:
(i)  number of vertices $n$, (ii)  a partitioning of $\{1,2,\dots,n\}$ to $r$ subsets $C_1, C_2,\dots , C_r$, and (iii) an \emph{edge placement matrix} $P \in [0, 1]^{r \times r}$.
A graph $G$ with $\#V=n$ is generated as follows: for any two vertices $u\in C_i$ and $v\in C_j$, the edge $(u,v)$ is in $E$ with probability $P_{i,j}$.
\end{restatable}
A simplified version of the planted partition model is when each $C_i$ has size $\nicefrac{n}{r}$, the diagonal elements of $P$ are all $p$, and the other elements are $\frac{q}{r-1}$.  Since the probability of having edges inside a community is often more than having edges between two communities we assume $q/p\in o(1)$. We denote this model by $\Pl(n,r,p,q)$.
We show bounds for this simplified model, though they may easily be extended to arbitrary planted partitions graphs or similar network models having small clusters.

\begin{restatable}
[Application of \algo{} to Planted Partitions]{theorem}{planted}
\label{thm:plantedpartition}
 
Consider $G \sim \Pl(n,r,p,q)$.
Assume 
$p \geq qn$, and let $k> \nicefrac{qn^4}{r^2}$.
With probability $\geq 1-o(1)$, 
Jerrum's counting algorithm on $G$, with $\delta=\epsilon=\nicefrac{1}{4}$, equipped with \algo{} as a mean-estimation gadget, has sample complexity
\[
\tilde{O}\Biggl(
n^2\left((\# E)^{2}+\frac{(\# E)^3}{r}\right)\Biggr) \enspace.
\]
\end{restatable}

\begin{proof}[\textbf{Prelude to a Proof }]
 The proof is based on developing two new notions:  loosely connectedness (\cref{def:looselyconnected}) and restriction of a chain (\cref{def:restrictedchain}) to a subset of its vertices. We use these definitions to prove \cref{lemm:loosely}  which then constitutes the proof of \cref{thm:plantedpartition}. Here we provide a road map and intuition, and the proofs are presented in full detail in \cref{sec:proofdetialsplanted}. 
 
 Consider a subset of graph vertices, $V'\subset V$,  and let $G'$ be a graph obtained from $G$ after removing cutting the edges between $V'$ and $V\setminus V'$.  When  $\UnifD(\Gamma(G,k))$ is \emph{only negligibly different} from $\UnifD(\Gamma(G',k))$ we say $V'$ is 
\emph{loosely connected} to the rest of the graph (see \cref{def:looselyconnected}). 
In lemma \ref{lemm:loosely}, we show that when it occurs, $\ITVn{\TRel({\cal M}_G)}({\cal M}_G,f_e) =o\left( \SVar(f_e)\right)$ where $f_e$ is intermediate phase showing up in JVV reduction (see \cref{sec:jvvred}) and $\M_G$ is the Glauber dynamic chain.
The proof is based on coupling the probability spaces of the two Markov chain's traces (${\cal M}_G$ and ${\cal M}_{G'}$). Then we show  it is sufficient that a simpler and faster mixing variant of ${\cal M}_G'$  mixes (see \cref{def:restrictedchain}).

 Finally, we show that w.h.p., graphs distributed according to the \emph{planted partition model}, and their subgraphs, have  loosely connected parts.
 \end{proof}

 \smallskip

\medskip

\subsubsection{Definitions and proof sketches from \ref{sec:plantedap} }\label{sec:proofdetialsplanted}
Consider  a subset of $G$'s vertices $V'\subseteq V$ and assume that $f:\Gamma(k,G)\rightarrow[0,1]$ is a function whose value \emph{only depends} on a coloring \emph{restricted to} some $V'$, i.e. $f(\Pcolor(V))=f(\Pcolor(V'))$ for any $\Pcolor$. 
We define ${\M}'_G$ being restriction of $\M_G$ to $V'$ as follows:

\begin{restatable}[Restricted Glauber dynamic chain]{defin}{defrestrictedchain}
\label{def:restrictedchain}
We define Markov chain ${\M'}_G$ as follows:
At each time step $t$, let $X_t$ be a proper $k$-coloring of the induced subgraph of $V'$, and then
\begin{enumerate}
\item Pick $c\in\{1,2,\dots k\}$, and $u\in V$ uniformly at random. 
\item If  $u\in V_i$ then follow ${\cal M}_G$ to get to $X_{t+1}$ (\Cref{def:coloringchain}). 
\item Else, $X_{t+1}=X_t$.
\end{enumerate}
\end{restatable}

If there is no edge between $V'$ and $V\setminus V'$, then the image of  $f(\Pcolor)$ under traces of $\M_G$ and restricted chain $\M'_G$ will have the same distribution. From this,  \cref{lemma:tvar-prop} and $\SVar \leq \nicefrac{1}{4}$, we bound the length-$T$ trace variance as $\ITRelVar\leq \smash{(1/4)(\nicefrac{\TRel(\M')}{2\TRel(\M)})=\nicefrac{\TRel(\M')}{2\TRel(\M)}}$~.

By defining \emph{loosely connectedness} our goal is to consider the case some  when the cut between $V'$ and $V$ does not make the image of $f(\Pcolor)$ in traces of $\M'$ and under $\M$ \emph{noticeably} different, i.e., we can couple the images of $f$ on a trace of $\M$ and on a trace of $\M'$ so that the coupled states are with high probability identical. Thus, we can largely ignore the connecting edges between $V'$ and $V\setminus V'$.



\begin{defin}[Connectedness parameters $\zeta$ and looseness parameter $c$]\label{def:looselyconnected}

For a subset $V'$ of $V$, we define the \emph{cut set} $B({V'})\subseteq E$ as $B(V')\doteq\{(u,v) \in E : u\in V', v\in V\setminus V'\}$. Let $G' = (V, E \setminus B(V'))$. 
We say $\zeta$ is the connectedness parameter of $V'$ if
\vspace{-0.15cm}
\[
\Prob_{\Pcolor}\Bigl(\smash{\bigvee_{(u,v)\in B(V_i)}} \Pcolor(u) = \Pcolor(v)\Bigr)\leq \zeta \enspace,\ \text{for} \   \Pcolor \sim  \UnifD(\Gamma(G',k))\enspace.
\]

If there exists $c>1$ such that $1/\zeta\geq \TRel({\cal M}_G)\log(c)$, we say that $V'$ is loosely connected to ${\cal M}_G$ and $c$ is the looseness parameter.  
\if 0
\cyrus{Probably need:}
\[
\sup_{i}\Prob_{\Pcolor}\Bigl(\bigvee_{(u,v)\in B(V_i)} \Pcolor(u) = \Pcolor(v)\Bigr)\leq \zeta \enspace.
\]
\fi

\end{defin}

\begin{remark}

The reader may postulate that with the above definition, verifying whether a subset is loosely connected needs calculating complex  probabilities. This is correct, \emph{however}, note that knowledge of $\zeta$ or $c$ is not required by \algo{}; it is merely used to characterize performance.

\end{remark}

The following lemma 
whose proof in  \cref{sec:missing5} shows that loosely connecteness implies small relaxed trace variance.

\begin{restatable}
{lemma}{loosely}
\label{lemm:loosely}
Let $0<\zeta<1$, $c>1$ and $V'\subseteq V$ and $G'$ be as defined in \cref{def:looselyconnected}. Let $\M$ and $\M'$ be the Glauber dynamics chains on $G$ and $G'$ and $\TRel(\M)$ and $\TRel(\M')$ respectively their relaxation times.
Assume $f:\Gamma(G,k)\rightarrow [0,1]$ such that for a coloring $\Pcolor$, $f(\Pcolor)=f(\Pcolor(V'))$.
We have $$\ITVn{\TRel({\cal M})}({\cal M})\leq \frac{2\TRel({\cal M}')}{\TRel(\cal M)}+\frac{\log c}{c} ~.$$
\end{restatable}

The last step of the proof will be to show that having $k> \nicefrac{qn^4}{r^2}$, in the planted partition model each $C_i$ is with high probability loosely connected to $\M_G$. This is proved in the following section.

\subsubsection{Detailed Proofs from  \ref{sec:plantedap}}\label{sec:missing5}
We now restate the definition of the planted partition graph model (\cref{def:plantedpartitions}), and prove the result related to it.

\defplantedpartitions*

\planted*

\begin{proof}[Proof of Theorem \ref{thm:plantedpartition}]
Let $G\sim {\mathcal{P}(n,r,p,q)}$.
We first show that with probability at least $1-re^{-\frac{2n^2q}{r^2}}$, and taking $\zeta \doteq \frac{2n^2q}{kr^2}$, any partitions $C_j$, is $\zeta-$loosely connected to the rest of the graph $G$.  
\smallskip

For any $j$, and $v\in C_j$, let $d_{out}(v)$ be number of edges having one end outside $C_j$. Remember definition of $B(C_j)$ from Definition \ref{def:looselyconnected}, we have:

\[
B(C_j)= \sum_{v \in C_{j}} \# d_{out}(v) \enspace,
\]
Taking
$
B_{\max} \doteq \max_{j \in 1, \dots r} B(C_j) 
$, note that by symmetry and the fact that for each two end points of an edge in $ B(C_j)$ we can recolor one end point to make their colors different, we will have that \[
\Prob_{\gamma}\Bigl(\bigvee_{(u,v)\in B(C_j)} \gamma(u) = \gamma(v)\Bigr)\leq  \frac{B_{\max}}{k} \enspace.
\]

\smallskip 

Now, note that $\#C_{i} = \frac{n}{r}$, and furthermore, for each $v\in C_j$,
\[
d_{out}(v) \sim \BinomD(n - \frac{n}{r}, \frac{q}{r-1}) = \BinomD(\frac{n(r-1)}{r}, \frac{q}{r-1}) \enspace,
\]
where $\BinomD(n, p)$ denotes the binomial distribution. Thus by independence, we have
\[
B(C_j) \sim \BinomD(\frac{n^{2}(r-1)}{r^{2}}, \frac{q}{r-1}) \enspace, \Expect[B(C_j)] = \frac{n^{2}q}{r^{2}} \enspace, \Var[B(C_j)] = \frac{n^{2}q}{r^{2}}(1 - \frac{q}{r-1}) < \frac{n^{2}q}{r^{2}} \enspace.
\]

By the Gaussian CLT Chernoff bound, we have approximately
\[
\Prob\left(B(C_{j}) > \frac{n^{2}q}{r^{2}} + \sqrt{\frac{2n^{2}q}{r^{2}}\ln(\frac{1}{\delta})} \right)
    \leq \delta \enspace.
\]

Furthermore, by the union bound, we have approximately
\[
\Prob\left(B_{\max} > \frac{n^{2}q}{r^{2}} + \sqrt{\frac{2n^{2}q}{r^{2}}\ln(\frac{r}{\delta})} \right)
    \leq \delta \enspace.
\]

Take $\delta = \exp(\frac{-n^{2}q\ln(\frac{1}{r})}{2r^{2}})$, and we get
\[
\Prob\left( B_{\max} > \frac{2n^{2}q}{r^{2}} \right) \leq \exp\left(\frac{-n^{2}q\ln(\frac{1}{r})}{2r^{2}}\right) \enspace.
\]

Taking $
\zeta \leq \frac{2n^{2}q}{kr^{2}}$  with probability at least $1 - r\exp(\frac{-n^{2}q}{r^{2}})$,   we have that all $C_j$s are $\zeta-$loosely connected to the rest of the graph. 

Thus, 
by taking $k\geq \nicefrac{ q n^4}{r^2}\cdot \log c$ we will have that $\zeta\leq 1/(n^2\log c)$.
\medskip

We now show that with probability at least $1-e^{-(np)/(8r)}$ in a graph generated from $G\sim {\mathcal{P}}(n,r,p,q)$ we have: $\Dmax (G) \geq {(np)}/{(2r)}$.  This is easily derived from the Chernoff bound and having $\Expect[d(v)]\geq np/r$, where $d(v)$ is the degree of each vertex.

 \medskip 
 
 Having this bound, if $k> \frac{(11/6)np}{2r}$ Vigoda's result will be applicable to  $G\sim {\mathcal P}(n,r,p,q)$ with probability at least $1-o(1)$. 
 Thus, we can conclude that for $k\geq \nicefrac{qn^4}{r^2}\cdot \log c$ the relaxation time of $\cal M$ it bounded by
 $n^2$. 
 
We now employ \Cref{lemm:loosely}, to $G$ and $C_j$s, we have $\ITRelVar\leq \TRel(\M_j)/2\TRel(\M)+\log(c)/c$. 
Employing Vigoda's bound on the relaxation time we know $\TRel(\M)\leq \TMix\leq  n^2$, and $\M_j$ is Jerrum's chain with a slowdown by a factor of $\frac{r}{n}$. Thus, $\TRel(\M_j)\leq \frac{n}{r} (\# C_j)^2 \leq nr$. Thus, we will  have: $\ITRelVar\leq 1/r+\frac{log c}{c}\leq 2/r$.

\medskip

Note that the above analysis holds for all $G_i$s in intermediate steps. We approximate  each $f_i$  with precision  $\varepsilon/\# E$, setting $\frange\leq 1$ the sample complexity of each intermediate step will be: 

\[T\left(
 36 + 2\ln(\frac{3 \log_{2}(\frac{\# E}{36\varepsilon}) }{\delta/\# E}) \left(\frac{50}{\varepsilon} + \frac{ (\frac{252}{3})^{2/3}(\# E)^{4/3}}{\varepsilon^{4/3}} + \frac{18(\# E)^2(\frac{1}{r^2}+\frac{22}{r^2})}{\varepsilon^{2}} \right) \right)
\]

Setting $T=n^2$ by Vigoda's result, letting $\delta$, $\varepsilon$ be constants, and ignoring vanishing terms, the complexity of Jerrum's algorithm equipped with \algo{} is
\[
\mathcal{O}\left( n^2
\log(n\log(n))\left((\# E)^{2}+\frac{(\# E)^3}{r}\right)\right) \enspace.
\]

\end{proof}

Having this, the last piece of puzzle is to prove \cref{lemm:loosely}, so we restate and prove it.

\loosely*

\begin{proof}[Proof of \cref{lemm:loosely}]

 We bound the trace variance of ${\cal M}$ by relating it to the trace variance of ${\cal M}'$

For a positive $c$, assume $1/\zeta\geq \TRel({\cal M})\log(c)$. let $\tau= 1/2\zeta$ and $n_\tau$ be the number of steps in a trace of length $\tau$ in which  some edge in $B(V')$ has both end points colored the same. Note that $\Expect_{U(\Gamma(G,k))}[n_{\tau}]=\tau \zeta = 1/2$,  applying the concentration bound of \cref{thm:hoeffding}, we will have

\begin{align*}
    \mathbb{P} (n_\tau\geq 1) \leq  \mathbb{P} (\vert n_\tau - 1/2\vert \geq 1/\sqrt{2} ) 
    \leq \exp\left({-2\tau(1/\sqrt{2})^2/{\TRel({\cal M})} }\right)  \leq \exp\left({-\tau/{\TRel({\cal M})} }\right)\leq \frac{1}{c}.
\end{align*}

Let ${\cal E}(\gamma)$ be the event that  $\gamma \sim \UnifD(\Gamma(G',k))$  colors none of the two end points of an edge in $B(V')$ to the same color.  
 Let $\tilde{\mu}$ be the expectation of $f(\gamma)$ when $\gamma \sim \UnifD(\Gamma(G',k))$.
We have $\mu=\Expect[f(\gamma)\vert {\cal E}(\gamma)]$, $\gamma\sim \Gamma(G',k)$. 
Furthermore, we can rewrite $\tilde{\mu}$ as
 $\tilde{\mu}= \Expect[f(\gamma)\vert {\cal E}(\gamma) ]\Prob({\cal E}(\gamma))+ \Expect[f(\gamma)\vert \neg{\cal E}(\gamma) ](1-\Prob({\cal E}(\gamma)))$. Thus, 
$
 {\mu}(1-\zeta)\leq\tilde{\mu}\leq {\mu}(1-\zeta)+\zeta  \text{, hence, } \vert\mu-\tilde{\mu}\vert \leq  \zeta
$. 

\smallskip 

Furthermore,  under the condition that $n_\tau<1$ the trace in $\cal M$ and the traces in ${\cal M}'$ are distributed identically.
The following inequalities are thus concluded:
\begin{align*}
    \Expect_{\vec{X}}\left[(\frac{S_X}{\tau}-\mu)^2\right] 
    &=    \Expect_{\vec{X}}\left[(\frac{S_X}{\tau}-\mu)^2\middle\vert n_{\tau}< 1\right]\mathbb{P}(n_{\tau}< 1)+\Expect_{\vec{X}}\left[(\frac{S_X}{\tau}-\mu)^2\middle\vert  n_{\tau}\geq 1\right]\left(1-\mathbb{P}(n_{\tau}< 1)\right)\\
    & =\Expect_{\vec{X}}\left[\left(\frac{S_X}{\tau}-\tilde{\mu}-(\mu-\tilde{\mu})\right)^2\middle\vert n_{\tau}< 1\right]\mathbb{P}(n_{\tau}< 1)+R^2\cdot \left(1-\mathbb{P}(n_{\tau}< 1)\right)\\
        & \leq \left(\Expect_{\vec{X}}\left[\left(\frac{S_X}{\tau}-\tilde{\mu}\right)^2-2\zeta\left(\frac{S_X}{\tau}-\tilde{\mu}\right)\middle\vert n_{\tau}< 1\right]+\zeta^2\right)\mathbb{P}(n_{\tau}< 1)+\left(1-\mathbb{P}(n_{\tau}< 1)\right).\\
         & \leq \left(\frac{2\TRel({\cal M}')}{\tau}+2\zeta\right)\mathbb{P}(n_{\tau}< 1)+\left(1-\mathbb{P}(n_{\tau}< 1)\right).\\
            & \leq 2\left(\frac{\TRel({\cal M}')}{\tau}+\zeta\right)(1-\frac{1}{c})+\frac{1}{c} ~.
    \end{align*}
    
Since $\tau=1/\zeta$ we have $(1/\tau) \TRel({\cal M}')=\zeta \TRel({\cal M}')\geq \zeta $. Thus for $\tau= \TRel({\cal M})\log c$ we have $\ITVn{\tau}({\cal M})\leq \frac{2\TRel({\cal M}')}{\tau}+\frac{1}{c}$. Using \cref{lemma:tvar-prop}, we have 

$$\ITVn{\TRel({\cal M})}({\cal M})\leq \log(c)\cdot \left(\frac{2\TRel({\cal M}')}{\tau}+\frac{1}{c}\right)=\frac{2\TRel({\cal M}')}{\TRel(\cal M)}+\frac{\log c}{c} ~.$$

\end{proof}

\section{A Compendium of  theorems and definitions used from the literature}\label{sec:thms}

\subsection{MCMC concentration bounds}\label{sec:classicbounds}

All of these bounds are \emph{static}; they receive a fixed $m$ as input and run the chain for $m$ steps to generate the trace $\vec{X}_{1:m}:X_1,X_2,\dots, X_m$. 
 
 \medskip

In the following theorems $\vec{X}_{1:m}:X_1,X_2,\dots, X_m$, is a length $m$ stationary trace of $\M$, (i.e., $\vec{X}_{1:m} \sim \pi^{(T)}$), with mixing time $\TMix$, relaxation time $\TRel$, and second largest eigenvalue $\lambda$.

\begin{theorem}[Hoeffding-Type Bounds for Mixing Processes, {\cite[Thm.~2.1]{fan2018hoeffding}}]\label{thm:hoeffding}
For any $\delta \in (0, 1)$, 
we have
\begin{equation}
\label{eq:Hoeffding}
\Prob\!\left( \lvert \hat{\mu} - \mu \rvert \geq \sqrt{\frac{2(1+\EigTwo)(\mathsmaller{\frac{\frange^{2}}{4}})\ln(\frac{2}{\delta})}{(1-\lambda)m}} \right) \leq \delta \enspace.
\end{equation}

\if 0
\cyrus{Stationarity: need 2.3 of this paper: \url{https://arxiv.org/pdf/1802.00211.pdf}}

Can we bound
\[
\max_{\omega \in \Omega} \frac{\Prob(\nu = \omega)}{\Prob(\pi = \omega)} ?
\]
without invoking $\pi_{\min}$?
\fi
\end{theorem}

\if 0
Thus, in some sense, these Bernstein bounds bridge the gap between the \emph{variance-based} asymptotic (central limit theorem) Gaussian bounds 
and the \emph{range-based} Hoeffding bounds.
\fi

\begin{theorem}[Bernstein-Type Bound for Mixing Process {\cite[Thm.~1.2]{jiang2018bernstein}}]
\label{thm:bernstein}For any $\delta \in (0, 1)$, we have

\begin{equation}\label{eq:Bernstein}
 \Prob\left(\vert \hat{\mu}-\mu\vert\geq \frac{10\frange\ln(\frac{2}{\delta})}{(1-\lambda)m} + \sqrt{\frac{2(1+\EigTwo)\SVar\ln(\frac{2}{\delta})}{(1-\lambda)m}} \right) \leq \delta \enspace.
\end{equation}
\end{theorem}
\noindent
\medskip

To compare the sample complexity of algorithms derived from the above bounds note that:

\smallskip 

\paragraph{Sample complexity of the static variance-agnostic algorithm:}
This implies sample complexity
\[
m_{H}(\M, f, \varepsilon, \delta)
  = \frac{1+\Lambda}{1-\Lambda} \ln(\mathsmaller{\frac{2}{\delta}}) \frac{\frange^{2}}{2\varepsilon^2}
  \in \Theta\Bigl(\TRelBound\ln(\mathsmaller{\frac{1}{\delta}})\frac{\frange^{2}}{\varepsilon^{2}}\Bigr)
  \enspace.
\]

\paragraph{Sample complexity of the static variance-aware algorithm:}
This implies sample complexity
\[
m_{B}(\M, f, v, \varepsilon, \delta)
  = \frac{2}{1-\Lambda}\ln(\mathsmaller{\frac{2}{\delta}})\Bigl(\frac{5\frange}{\varepsilon} + \frac{(1+\Lambda)V_{\pi}}{\varepsilon^2}\Bigr)
  \in \Theta\Bigl(\TRelBound\ln(\mathsmaller{\frac{1}{\delta}})\Bigl(\frac{\frange}{\varepsilon} + \frac{V_{\pi}}{\varepsilon^{2}}\Bigr)\Bigr) \enspace.
\]


\if 0
\cyrus{Problem: does not handle nonstationary case!} 

\url{https://arxiv.org/pdf/1805.10721.pdf}
\fi

\begin{theorem}[McDiarmid inequality for Markov chains   \cite{paulin2015}]
\label{thm:mcdiarmid-mixing}
Let $\M$ be a Markov chain on state space $\Omega$ and mixing time $\TMix$.
Consider $F:\Omega^{m} \to \R$ a $c$-Lipschitz function.
Then for any trace $\vec{X}_{1:m}:X_1,X_2,\dots, X_m$ of $\M$ and  any $\varepsilon > 0$ we have that
\begin{equation}
\label{eq:mcdiamard}
    \Prob_{\vec{X} \sim \pi^{(T)}}\left( F(\vec{X}) \geq \Expect_{\pi^{(T)}}[F] + \varepsilon\right) \leq \exp\left({\frac{-2   \varepsilon^2}{9c^2m\TMix}}\right) \enspace.
\end{equation}
Equivalently, for any $\delta \in (0, 1)$, we have that
\begin{equation}
\label{eq:mcdiamard-2}
    \Prob_{\vec{X} \sim \pi^{(T)}}\left( F(\vec{X}) \geq \Expect_{\pi^{(T)}}[F] + \sqrt{\frac{m\TMix c^{2}(\frac{7}{3})^{2} \ln(\frac{1}{\delta})}{2}} \right) \leq \delta \enspace.
\end{equation}
\end{theorem}

\subsection{Background on k-coloring problem}\label{sec:jvvred}

Let $S$ be a set whose cardinality $\# S$ is unknown .
Assume there is a \emph{rapidly mixing Markov chain} $\M$ whose stationary distribution is the uniform distribution of $S$ denoted by $\UnifD(S)$.
Now, note that with $\M$, we can generate \emph{approximately uniform} samples from $S$. 
Jerrum, Valient and Vazirani introduced the first FPRAS  (henceforth denoted by JVV) for counting any \emph{self-reducible} $S$, using series of MCMC-mean estimations (employing $\M$).
 JVV's reduction was applied to different counting problems \cite{countingleWinklerBrightwell,jurrumSamplingiscounting, permanent}, here we show it for counting $k$-colorings.

 Consider an arbitrary graph $G=(V,E)$, let ${\cal E}$ ve the number of edges, and  $e_1,e_2,\dots , e_{\cal E}$ be an ordering of its edges. Based on this ordering we define 
  $G_{\cal E}, G_{{\cal E}-1},\dots , G_0$  a sequence of $G$'s subgraphs with the same vertex set $V$, such that for each $i$, $G_{i-1}$ is obtained from $G_i$ by removing $e_i$, $G_{\cal E} = G = (V, E)$, and $G_0 = (V, \emptyset)$.
  
 For each $1\leq i\leq \cal E$ we define $f_{e_i}:\Gamma(G_{i})\rightarrow \{0,1\}$ as $f_{e_i}(\Pcolor)\doteq 1$ if $\Pcolor(u)\neq\Pcolor(v)$, and $f_{e_i}\doteq 0$ otherwise, where $u$ and $v$ are endpoints of $e_i$. 
 
 Note that $$\Expect[f_{e_i}]=\smash{\frac{\#\Gamma(G_i,k)}{\#\Gamma(G_{i-1},k)}}$$

 Thus, to estimate $\#S$ we can use the following telescoping sum, for a graph with $n$ vertices:



\begin{equation}
\#S=\#\Gamma(G_0,k)\ \smash{\prod_{i=1}^{{\cal E}-1}} \Expect[f_{e_i}] =\left(\smash{\prod_{i=1}^{{\cal E}-1}} {\frac{\#\Gamma(G_i,k)}{\#\Gamma(G_{i-1},k)}}\right)\#\Gamma(G_0,k)=\left(\smash{\prod_{i=1}^{{\cal E}-1}} {\frac{\#\Gamma(G_i,k)}{\#\Gamma(G_{i-1},k)}}\right) {n}^k\enspace.
\end{equation}

Given these intermediate steps, size of $S$ can be estimated with $\varepsilon$ precision, by estimating all $\Expect[f_{e_i}]$s with precision $\nicefrac{\varepsilon}{\cal E}$. Using the classic static approaches listed in \cref{sec:classicbounds}, the complexity of abstaining a $\varepsilon, \delta$ approximation for a graph with $\cal E$ edges  will be  
$\mathcal{O}({\cal E}\cdot \TRelBound\ \frac{{\cal E}^2}{\varepsilon^2}\log(\frac{\cal E}{\delta}))=\tilde{O}\left( {\cal E}^3 \TRelBound\right)$.

 \tableofcontents
{
\bibliographystyle{alpha}
\bibliography{biblio}
}

\end{document}